\setlist[enumerate]{ topsep=2pt, parsep=.5mm}
\newcommand\withcomments{1}
\newtheorem{thm}{Theorem}[section]
\newtheorem{lem}{Lemma}[section]
\newtheorem{dfn}[lem]{Definition}
\newtheorem{clm}[lem]{Claim}
\newtheorem{ntn}[lem]{Notation}
\newtheorem{cor}[lem]{Corollary}
\newtheorem{obs}[lem]{Observation}
\newtheorem{prop}[thm]{Proposition}
\definecolor{bubbles}{rgb}{0.91, 1.0, 1.0}
\definecolor{aqua}{rgb}{0.0, 1.0, 1.0}
\definecolor{blue-green}{rgb}{0.0, 0.87, 0.87}
\definecolor{amethyst}{rgb}{0.6, 0.4, 0.8}
\definecolor{lightgoldenrodyellow}{rgb}{0.98, 0.98, 0.82}
\DeclareRobustCommand{\hldana}[1]{{\sethlcolor{bubbles}\hl{#1}}}
 \DeclareRobustCommand{\hltalya}[1]{{\sethlcolor{lightgoldenrodyellow}\hl{#1}}}
\DeclareRobustCommand{\hlwill}[1]{{\sethlcolor{pink}\hl{#1}}}
\newcommand{\dnote}[1]{\hldana{[\textbf{D}: #1]}}
\newcommand{\tnote}[1]{\hltalya{[\textbf{T}: #1]}}
\newcommand{\wnote}[1]{\hlwill{[\textbf{W}: #1]}}
\newcommand{\old}[1]{\textcolor{black!70}{#1}}
\newcommand{\dnote}[1]{}
\newcommand{\tnote}[1]{}
\newcommand{\wnote}[1]{}
\newcommand{\old}[1]{}
\renewcommand{\epsilon}{\eps}
\newcommand{\eps}{\varepsilon}
\newcommand{\mQ}{\mathcal{Q}}
\newcommand{\EX}{{\mathrm{Ex}}}
\renewcommand{\th}{^{\textrm{th}}}
\newcommand{\tP}{P}
\newcommand{\cP}{\widecheck{P}}
\newcommand*{\medcup}{\textstyle \bigcup}
\newcommand*{\medcap}{\textstyle \bigcap}
\newcommand{\reduce}{\hyperref[reduce]{\textup{\color{black}{\sf Peel-With-Reduced-Error}}}}
\newcommand{\peel}{\hyperref[peel]{\textup{\color{black}{{\textsf{Peel}}}}}}
\newcommand{\est}{\hyperref[est]{\textup{\color{black}{\sf Estimate-Arboricity}}}}
\newcommand{\peelv}{\hyperref[peelv]{\textup{\color{black}{\sf Peel-Vertex}}}}
	\newcommand{\poly}{poly}
\renewcommand{\deg}{{\sf deg}}
\newcommand{\mA}{\mathcal{A}}
\newcommand{\alphaX}{100 \log^2 n \cdot \alpha}
\newcommand{\YES}{{\mathsf{Yes}}}
\newcommand{\NO}{{\mathsf{No}}}
\newcommand{\calE}{\mathcal{E}}
\newcommand{\arb}{{\mathsf{arb}}}
\newcommand{\degen}{{\mathsf{degen}}}
\newcommand{\dens}{{\mathsf{dens}}}
\newcommand{\ta}{\tilde{\alpha}}
\newcommand{\mE}{{\mathcal{E}_s}}
\newcommand{\Nei}[1]{A_{#1}}
\newcommand{\tNei}[1]{B_{#1}}
\newcommand{\cNei}[1]{\widecheck{A}_{#1}}
\newcommand{\tH}{{H}}
\newcommand{\tp}{{q}}
\newcommand{\ap}{\widecheck{q}}
\newcommand\numberthis{\addtocounter{equation}{1}\tag{\theequation}}
\newcommand{\algo}[3]{
\begin{figure}[ht!]
	\centering
	\fbox{
		\begin{minipage}{0.95\textwidth}
		{\fontfamily{cmss}\selectfont
			{\sansmath
							#1}
}			
			\end{minipage}
		}
	\caption{#2}\label{#3}
	\end{figure}
}	
\newcommand{\comment}[1]{\hspace*{\fill} \textcolor{blue}{$\rhd$ #1}}
\begin{document}


\begin{titlepage}

\title{Approximating the Arboricity in Sublinear Time\footnote{A conference version of this manuscript is to appear in SODA 2022.}}
\captionsetup[subfigure]{labelformat = parens, labelsep = space, font = small}
	\author{Talya Eden  \thanks{CSAIL at MIT,  Boston University Department of Computer Science, \textit{talyaa01@gmail.com}. Partially supported by the NSF Grant CCF-1740751, the Eric and
		Wendy Schmidt Fund,  Ben-Gurion University, and the Computer Science Department at Boston University.} \\ Boston University and MIT
	\and Saleet Mossel \thanks{CSAIL at MIT, \textit{saleet@mit.edu}.}\\ MIT 
	\and 	Dana Ron \thanks{Tel Aviv University, \textit{danaron@tau.ac.il}. Partially supported  by the Israel Science Foundation (grant No.~1041/18).} \\ Tel Aviv University
}
\date{\vspace{-5ex}}
\maketitle

\begin{abstract}
We consider the problem of approximating the arboricity of a graph $G= (V,E)$, which we denote by $\arb(G)$, in sublinear time, where the arboricity of a graph is the minimal number of forests required to cover its edges.
An algorithm for this problem may perform degree and neighbor queries, and is allowed a small error probability.
We design an algorithm that outputs an estimate $\hat{\alpha}$, such that with probability $1-1/\poly(n)$, $\arb(G)/c\log^2 n \leq \hat{\alpha} \leq  \arb(G)$, where $n=|V|$ and $c$ is a constant.
The expected query complexity and running time of the algorithm are
 $O(n/\arb(G))\cdot \poly (\log n)$, and this upper bound also holds with high probability. 
 This bound is optimal for such an approximation  up to a $\poly(\log n)$ factor.

\end{abstract}

\thispagestyle{empty}
\end{titlepage}

\setcounter{page}{1}
\newpage

		\tikzset{dotted pattern/.style args={#1}{
		postaction=decorate,
		decoration={
			markings,
			mark=between positions 0.25 and 0.75 step 0.25 with {
				\fill[radius=#1] (0,0) circle;
			}
		}
	},
	dotted pattern/.default={1pt},
}

\pgfmathsetmacro{\minSize}{.7cm}

\pgfdeclarelayer{nodelayer}
\pgfdeclarelayer{edgelayer}
\pgfsetlayers{edgelayer,nodelayer}

\colorlet{lightGray}{gray!30}

\tikzstyle{ghost}=[]
\tikzstyle{reg}=[draw,circle,fill=white,minimum size=\minSize]
\tikzstyle{peeled}=[draw,circle,fill=gray!80,minimum size=\minSize, pattern=crosshatch dots]
\tikzstyle{pruned}=[draw,circle,fill=blue!20,minimum size=\minSize]
\tikzstyle{inactive}=[draw=lightGray,circle,fill=lightGray!50,minimum size=\minSize,dotted]

\newcommand{\declareU}[3]{

	\begin{pgfonlayer}{nodelayer}
		\node [style=reg] (0) at (0,0) {$v$};
	\end{pgfonlayer}
	\foreach \i in {1,...,7} { 
		\begin{pgfonlayer}{nodelayer}
			\ifnum\i<5
				\node [#3] (u\i) at (-4+\i, -1.5) {};
			\else
				\node [#1] (u\i) at (-4+\i, -1.5) {};
			\fi
		\end{pgfonlayer}
		\begin{pgfonlayer}{edgelayer}
			\draw[#2] (u\i.center) to (0.center);
		\end{pgfonlayer}
		}
}

\newcommand{\drawW}[3]{

	\foreach \i in {1,...,12} { 
	\begin{pgfonlayer}{nodelayer}
		\node [#1] (w\i) at (-4+0.7*\i, -3) {};
	\end{pgfonlayer}
}

\begin{pgfonlayer}{edgelayer}
	
	\draw[#2] (u3.center) to (w1.center);
	\draw[#2] (u3.center) to (w2.center);
	\draw[#2] (u3.center) to (w3.center);
	
	\draw[#2] (u4.center) to (w4.center);
	\draw[#2] (u4.center) to (w5.center);
	\draw[#2] (u4.center) to (w6.center);
	
	\draw[#3] (u5.center) to (w7.center);
	
	\draw[#3] (u6.center) to (w8.center);	
	\draw[#3] (u6.center) to (w9.center);
	\draw[#3] (u6.center) to (w10.center);
	
	\draw[#3] (u7.center) to (w11.center);
	\draw[#3] (u7.center) to (w12.center);
	
\end{pgfonlayer}

}

\newcommand{\recolorU}{
	\begin{pgfonlayer}{nodelayer} 
	\node [peeled] (u1) at (u1) {};
	\node [peeled] (u2) at (u2) {};	
	\node [pruned] (u3) at (u3) {};
	\node [pruned] (u4) at (u4) {};
\end{pgfonlayer}	

}

\newcommand{\justVNew}{
\begin{tikzpicture}
    \declareU{ghost}{dashed}{ghost}
	\node[ghost]  (w1) at (0, -3) {};
\end{tikzpicture}
}

\newcommand{\treeDepthOne}{
	\begin{tikzpicture}

	\declareU{reg}{black}{reg}
	\recolorU
	\drawW{ghost}{dashed}{dashed}
		
	\end{tikzpicture}
}
	
\newcommand{\treeDepthTwo}{
\def \inactive {1}

	\begin{tikzpicture}
\begin{scope}[xscale=1.5]

\begin{scope}[xscale=.8]
	\declareU{reg}{black}{inactive}
\end{scope}
	\drawW{ghost}{dashed, lightGray}{black} 
	
	\foreach \i in {7,...,12} { 
	\begin{pgfonlayer}{nodelayer}
		\node [reg] (w\i) at (w\i) {};
	\end{pgfonlayer}
	}
	\begin{pgfonlayer}{nodelayer}
		\node [reg] (w6) at (w6) {};
	\end{pgfonlayer}
	
	\begin{pgfonlayer}{edgelayer}
		\coordinate (d) at ($ (w6) + (-1,0) $);
		\draw [white,thick] (w6.center) to (u4.center) ;
		\draw [dashed, lightGray] (u4.center) to (d) ; 		\draw [black] (u5.center) to (w6.center) ;
	\end{pgfonlayer}	

	\foreach \i in {1,...,14} { 
		\begin{pgfonlayer}{nodelayer}
			\node [] (z\i) at (-.5+0.4*\i, -4.5) {};
		\end{pgfonlayer}
	}

\begin{pgfonlayer}{nodelayer} 
	\node [peeled, pattern=north east lines] (u5) at (u5) {};
	\node [pruned,preaction={fill, blue!50}, pattern=  north west lines] (u6) at (u6) {};

	\node [peeled] (w6) at (w6) {};
	\node [pruned] (w7) at (w7) {};
	\node [peeled] (w8) at (w8) {};
	\node [pruned] (w9) at (w9) {};
	\node [pruned] (w11) at (w11) {};

	\foreach \i in {1,...,4} {
		\begin{pgfonlayer}{edgelayer}
			\draw[white,thick] (u\i.center) to (0.center);
			\draw[lightGray,dashed] (u\i.center) to (0.center);
		\end{pgfonlayer}
	}

\end{pgfonlayer}

\begin{pgfonlayer}{edgelayer}

	\draw[dashed] (w7.center) to (z1.center);
	\draw[dashed] (w7.center) to (z2.center);
	\draw[dashed] (w7.center) to (z3.center);
	\draw[dashed] (w7.center) to (z4.center);
	
	\draw[dashed] (w9.center) to (z5.center);
	\draw[dashed] (w9.center) to (z6.center);
	
	\draw[dashed] (w10.center) to (z7.center);
	\draw[dashed] (w10.center) to (z8.center);
	\draw[dashed] (w10.center) to (z9.center);
	\draw[dashed] (w10.center) to (z10.center);

	\draw[dashed] (w11.center) to (z11.center);
	\draw[dashed] (w11.center) to (z12.center);

	\draw[dashed] (w12.center) to (z13.center);
	\draw[dashed] (w12.center) to (z14.center);

\end{pgfonlayer}

\end{scope}
	
\end{tikzpicture}

}

\newcommand{\drawHere}{
\begin{figure}[ht!]
	\centering
	\caption{The invocation of the (modified) approximate peeling procedure on vertex $v$ with indices $j=0$ (top left), $j=1$ (top right), and $j=2$ (bottom).} \label{fig:illustratePeel}
	\begin{subfigure}[t]{0.45\textwidth}
		\centering
		\vspace{.5cm}
		\justVNew
		\renewcommand{\thefigure}{1a}
		\captionof{figure}{For $j=0$, a single degree query on $v$ is performed. The returned degree, $d(v)$, determines the $1$-cost of $v$, which is $d(v)/\alpha$.
\label{j=0}}
	\end{subfigure}\hfill
	\begin{subfigure}[t]{0.53\textwidth}
		\centering			\vspace{.5cm}
		\treeDepthOne			
		\captionof{figure}{For $j=1$, $d(v)/\alpha$ neighbors of $v$ are sampled,
and for each sampled neighbor $u$, the procedure is recursively  invoked on $u$ with index $j=0$. Once all the invocations on the sampled neighbors are completed, $v$ discards its peeled vertices (in dotted gray), and prunes the costly ones (in purple), namely those with highest $1$-cost. \label{j=1}}
	\end{subfigure}\hfill
	\begin{subfigure}{0.9\textwidth}
		\centering \vspace{0.2cm}
		\treeDepthTwo
		\captionof{figure}{For $j=2$, $v$ recursively calls the procedure on each of its remaining active vertices (those in white in Subfigure~\ref{j=1}) with $j=1$. In turn, each such $u$ samples 
$d(u)/\alpha$ neighbors
and recursively invokes the procedure on its set of sampled neighbors with $j=0$. Once these recursive calls return, $v$ discards its (newly) peeled neighbors (in striped gray) and prunes  the ($2$-)costly neighbors (in striped blue).
\label{j=2}}
	\end{subfigure}

\end{figure}
}

	\section{Introduction}

The  arboricity of a graph $G$, denoted $\arb(G)$, is a measure of its density ``everywhere''.
Formally, it is defined as the minimum number of forests into which its edges can be partitioned, and it holds~\cite{NW1,Tutte,NW2}  that
 $\arb(G)=\max_{S\subseteq V}\left\{\left\lceil \frac{|E(S)|}{|S|-1}\right\rceil\right\}$, where $E(S)$ denotes the set of edges in the subgraph of $G$ induced by $S$.\footnote{It is also closely related to the degeneracy, $\degen(G)$,  and maximum subgraph density, $\dens(G)$,  of the graph. The degeneracy of a graph $G$ is the smallest integer $k$ such that in every subgraph of $G$ there is vertex of degree at most $k$, and the maximum subgraph density is $\max_{S\subseteq G}\{|E(S)|/|S|\}$. It holds that
$\arb(G)\leq \degen(G)\leq 2\arb(G)-1$, and $\dens(G)\leq \arb(G)\leq \dens(G)+1$.}

Arboricity is not only a basic measure, but also plays an important role in designing efficient algorithms, including, but not limited to: listing subgraphs, e.g., ~\cite{CN85,cai2006random, BeraPS20, gishboliner2020counting, lior_counting_cycles, bressan2021faster}, graph coloring, e.g., ~\cite{BE,kothapalli2011distributed, parter2016local,solomon2019improved, ghaffari2019distributed, henzinger2020explicit}, and
maintaining small representations, e.g.,~\cite{brodal1999dynamic,kowalik2007adjacency,he2014orienting}.
Furthermore, several NP-hard problems such as {\sc Clique}, {\sc Independent-Set} and {\sc Dominating-Set}  become fixed-parameter tractable in bounded arboricity graphs~\cite{AG09, GV08, ELS13, LPW13, BU17}).

 In the sublinear-time regime, when $\arb(G)$ is bounded, there exist improved algorithms for approximating the number of cliques~\cite{eden2020faster},  approximating the moments of the degree distribution~\cite{eden2019sublinear}, and sampling edges and cliques almost uniformly at random~\cite{eden2019arboricity, eden2020almost}. All these algorithms require receiving an upper bound on the arboricity as input in order to achieve the improved results.

The arboricity of a graph can be exactly computed in polynomial time~\cite{edmonds1965minimum,picard1982network}, where the fastest algorithm is due to Gabow~\cite{gabow1998algorithms}  and runs in time $O(m^{3/2}\log(n^2/m))$, where $n$ and $m$ denote the number of vertices and edges, respectively, in the graph. For a comprehensive list of results on exactly computing the arboricity  see~\cite{blumenstock2019constructive}.
Several $O(n+m)$ time algorithms exist for computing a $2$-factor approximation of the arboricity~\cite{eppstein1994arboricity,arikati1997efficient,charikar2000greedy}.\footnote{A $k$-factor approximation of the arboricity is a value $\tilde{\alpha}$ so that $\arb(G)/k\leq \tilde{\alpha}\leq \arb(G)$.}

A natural question is whether the arboricity can be approximated \emph{much more efficiently}, and in particular, in sublinear time. Specifically, we consider the incidence-list query model, which allows for degree and neighbor  queries.\footnote{A degree query on a vertex $v$ returns the degree of $v$, $\deg(v)$, and a neighbor query on $v$ with an index $i \leq \deg(v)$ returns the $i\th$ neighbor of $v$.}
For the closely related problem of finding the densest subgraph,  Bhattacharya et al.~\cite{BHNT15}  showed that  their  $2$-factor approximation $\widetilde{O}(n)$-space\footnote{Throughout the paper, we use  $\widetilde{O}(\cdot)$ and $\widetilde{\Omega}(\cdot)$ to suppress  $\poly(\log n)$ factors.} dynamic streaming algorithm can be  adapted to run in $\widetilde{O}(n)$ time in the incidence-list  model. In a follow up work, McGregor et al.~\cite{mcgregor2015densest} improved the approximation factor to  $(1+\eps)$, and it can be shown that their algorithm can also be adapted to run in $\widetilde{O}(n)$ time in the incidence-list model.
In~\cite{bahmani2012densest}, Bahmani et al. proved a lower bound for the streaming variant of the problem.
In \cite{BHNT15}, the authors adapted this lower bound to the incidence-list model, and showed that for graphs with  arboricity $\Theta(k)$,
any $\Theta(k)$-factor approximation algorithm must perform
$\Omega(n/k^2)$ queries.

In this work we ask:

\vspace{.2cm}
\emph{Is it possible to go below time linear in $n$, when the arboricity is super-constant?}
\vspace{.2cm}

We present an algorithm, \est$(G)$,  that computes an $O(\log^2 n)$-factor approximation  of the arboricity of $G$ in sublinear time.

\begin{restatable}{thm}{upperBound}\label{thm:upperBound}
	There exists an algorithm, \est$(G)$, that  with probability at least $1-O(1/n^2)$ returns a value $\widehat{\alpha}$, such that   \[ \arb(G)/(200 \log ^2 n) \leq \widehat{\alpha} \leq \arb(G) .\]
	The expected query complexity and running time  of the algorithm are $O(n\log^3 n/\arb(G))$, and this also holds with probability $1-O(1/n^2)$.
\end{restatable}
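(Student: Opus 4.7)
The plan is to build a local procedure \peel that, for each vertex $v$ and a candidate threshold $\alpha$, estimates whether $v$ would survive a global peeling process that repeatedly removes vertices of degree below $\Theta(\alpha)$, and then to wrap it in an exponential search over $\alpha$. The output of \est is the largest $\alpha$ at which a non-negligible fraction of a random vertex sample is declared surviving by \peel; by the arboricity/density correspondence this $\alpha$ is within the claimed $O(\log^2 n)$ factor of $\arb(G)$.

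I would first establish the global two-sided characterization that drives correctness. On one side, in any graph of arboricity $\alpha$ the vertices can be layered into $O(\log n)$ levels so that every vertex has at most $O(\alpha)$ neighbors in higher levels; consequently, if $\arb(G) \le \alpha$, then global peeling with threshold $\Theta(\alpha)$ removes every vertex within $O(\log n)$ rounds. On the other side, if a constant fraction of vertices survive $O(\log n)$ rounds of peeling at threshold $\alpha/c$, the surviving set induces a subgraph of average degree $\Omega(\alpha/\log n)$, so by the max-density formula $\arb(G) = \Omega(\alpha/\log n)$. Composing these two directions over the $O(\log n)$ recursion levels yields the overall $O(\log^2 n)$ approximation factor.

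Next I would implement \peel$(v,j,\alpha)$ locally. A naive recursion branches on $d(v)$, which is unbounded; instead, \peel samples $\Theta(d(v)/\alpha)$ random neighbors and recurses on each at depth $j{-}1$, using the empirical fraction of peeled samples to estimate $v$'s surviving degree. Two ingredients, both illustrated in Figure~\ref{fig:illustratePeel}, control cost and preserve one-sided correctness: \emph{peeling}, which discards sampled neighbors whose recursive $(j{-}1)$-cost is small, and \emph{pruning}, which throws out the sampled neighbors whose recursive cost is largest and replaces them pessimistically (counting them as unpeeled). Pruning is essential because it caps the exploration subtree size in the worst case, and because it biases the estimate only upward it can only make $v$ more likely to be declared surviving, which is the direction needed to compare \peel's output to global peeling.

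The main obstacle is to simultaneously show (i) that the expected recursion tree size is $\text{poly}(\log n)$ per invocation at depth $O(\log n)$, and (ii) that the one-sided error accumulated across $O(\log n)$ recursion levels is small enough to preserve the $O(\log^2 n)$ gap. For (i) I would induct on $j$, using the arboricity layering to argue that a uniform random neighbor of any vertex is, in expectation, of low layer and hence contributes bounded cost, while pruning tames the tail. For (ii) I would chain the error bounds level by level and apply a union bound over the $O(\log n)$ guesses of $\alpha$ in the exponential search and over the $\tilde\Theta(n/\alpha)$ sampled vertices per guess; a Chernoff bound on the fraction of declared-surviving vertices pushes the failure probability to $O(1/n^2)$. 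Finally, at the relevant guess $\alpha \approx \arb(G)/\log^2 n$, the sample size $\tilde O(n/\arb(G))$ times the per-call cost $\text{poly}(\log n)$ gives the desired $O(n\log^3 n/\arb(G))$ expected query and time bound, and the same tail bounds upgrade this to hold with probability $1-O(1/n^2)$.
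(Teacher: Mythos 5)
Your architecture matches the paper's (sample $\widetilde{O}(n/\alpha)$ vertices, run a recursive sampled peeling with pruning of costly branches, use the Barenboim--Elkin layering for the case $\arb(G)\leq\alpha$, and wrap in a halving search over $\alpha$), but two of the load-bearing arguments are missing or would fail as stated. First, the soundness direction. You argue that if a constant fraction of vertices survive $O(\log n)$ rounds of \emph{exact global} peeling then the survivors induce density $\Omega(\alpha/\log n)$; but what you must certify is that a vertex declared surviving by the \emph{local, sampled, pruned} procedure witnesses large arboricity, and with your pruning convention (pruned neighbors counted as unpeeled, biasing toward survival) local survival does \emph{not} imply global survival, so the density argument does not transfer. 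The paper goes the other way: if $\arb(G)>100\alpha\log^2 n$, degeneracy gives a set $R$ with internal degree $>100\alpha\log^2 n$ (Corollary~\ref{cor:A-G}), so each $v\in R$ has at least $8\log^2 n$ sampled neighbors in $R$ w.h.p., and since each recursion level prunes only $4\log n$ vertices and there are at most $\log n$ levels, an induction (Claim~\ref{clm:no_peel}) shows $R$-vertices are never peeled; this threshold arithmetic is precisely where the $\log^2 n$ factor comes from and is absent from your sketch. Relatedly, your decision rule ``a non-negligible fraction of the sample survives'' is too demanding: when $\arb(G)>100\alpha\log^2 n$ the guaranteed witnesses are only the set $R$, which may contribute $O(\log n)$ vertices to a sample of size $\Theta(n/(\alpha\log n))$; the algorithm must reject upon seeing even a single unpeeled sampled vertex.

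Second, the query-cost analysis, which is the paper's main technical content, is not carried by ``a uniform random neighbor is in expectation of low layer'' plus ``pruning tames the tail.'' Pruning does \emph{not} cap the worst-case subtree size (an unpeeled vertex keeps a large active set); the worst case is controlled by a hard query budget with abort inside \peel, which you omit but need both for the $\NO$ instances and for the expected-time bound when the search guess drops below $\arb(G)/\mathrm{polylog}(n)$. The paper's bound for $\arb(G)\leq\alpha$ has two steps you would need to reproduce: (i) conditioned on the sampling event, the $4\log n$ pruned costliest neighbors dominate the at most $3\log n$ upward sampled neighbors, so the true cost is dominated by a ``downward-only'' peeling process (Claims~\ref{clm:Gamj-tGamj}--\ref{clm:p_less_checkp}); and (ii) for the downward process, the cost of $u$'s subtree depends only on randomness of lower layers while the number of times $u$ is sampled from above has expectation at most $1/2$, and these are independent, giving a per-level geometric decay and total expected cost $O(n)$ (Claims~\ref{clm:exp_indep}--\ref{clm:bound_exp}). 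Finally, because the per-vertex costs are correlated, only Markov's inequality is available, so a single run succeeds within budget only with constant probability; the $1-O(1/n^2)$ guarantees come from $O(\log n)$ independent repetitions plus the budget cap, not from a Chernoff bound on the fraction of surviving vertices as you propose.
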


A different setting of the parameters in the lower bound construction that appears in the full version~\cite{BHNT15} of~\cite{BHNT15}, gives the following.

\begin{prop}[Adaptation of Theorem 7.3 in~\cite{BHNT15}]\label{prop:lb}
Any algorithm that, with probability at least $2/3$
returns a
$k$-factor approximation of the arboricity of a graph $G$, must perform $\Omega(n/(k\cdot \arb(G)))$ queries.	
\end{prop}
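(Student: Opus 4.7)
The plan is to adapt the lower bound construction from the full version of~\cite{BHNT15} to a different parameter regime. By Yao's minimax principle, it suffices to exhibit a distribution over $n$-vertex graphs against which every deterministic algorithm making $q = o(n/(k \cdot \arb(G)))$ queries fails to output a $k$-factor approximation of the arboricity with probability exceeding $1/3$.

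Write $\alpha = \arb(G)$ for the target arboricity. I would construct two distributions $\mathcal{D}_{\YES}$ and $\mathcal{D}_{\NO}$ on $n$-vertex graphs that share a common sparse base graph of arboricity $\Theta(\alpha/k)$ (e.g., a random graph of degree $\Theta(\alpha/k)$). In $\mathcal{D}_{\YES}$, pick a uniformly random subset $S \subseteq V$ of size $s = \Theta(k\alpha)$ and plant on $S$ a dense gadget of arboricity $\Theta(\alpha)$, for instance a random bipartite graph on $S$ of appropriate density, as used in~\cite{bahmani2012densest,BHNT15}. In $\mathcal{D}_{\NO}$ no gadget is planted, and the graph retains arboricity $\Theta(\alpha/k)$. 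A $k$-factor approximation algorithm must then output a value in $[\alpha/k, \alpha]$ under $\mathcal{D}_{\YES}$ and in $[\alpha/k^2, \alpha/k]$ under $\mathcal{D}_{\NO}$, and hence must distinguish the two with probability at least $2/3$.

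The next step is an indistinguishability argument: provided the algorithm's queries never touch a vertex of $S$, the transcript of answers has identical distribution under $\mathcal{D}_{\YES}$ and $\mathcal{D}_{\NO}$. This is proved by induction on the query index, using that all planted edges lie inside $S$ and that $S$ is uniformly random, so that degree and neighbor responses for any vertex outside $S$ are unaltered by the planting. A union-bound argument then shows that $q$ adaptive queries hit $S$ with probability at most $O(q \cdot |S|/n) = O(q k\alpha/n)$, which is at most $1/3$ once $q \leq c \cdot n/(k\alpha)$ for a suitably small constant $c$.

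The main obstacle is establishing the indistinguishability step in the presence of adaptive neighbor queries: a neighbor query on some $u \notin S$ could conceivably return a vertex in $S$, enabling the algorithm to probe that vertex next and detect the planting. One must therefore ensure that the distribution of neighbor lists of every vertex outside $S$ is independent of whether the gadget was planted, which is achieved by confining all planted edges to the induced subgraph on $S$ and drawing $S$ uniformly at random. This is precisely the symmetry property exploited in~\cite{BHNT15}, and the proof adapts with only the parameter change $|S| = \Theta(k\alpha)$ (replacing $|S| = \Theta(k^2)$, which was used there to obtain the original $\Omega(n/k^2)$ bound for arboricity $\Theta(k)$).
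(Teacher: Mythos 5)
Your proposal is correct and follows essentially the same route as the paper's proof, which instantiates the same hide-a-dense-subgraph-on-a-random-small-vertex-set idea even more simply: one family has a clique on $\alpha$ vertices and the other a clique on $2k\alpha$ vertices, with all remaining vertices isolated, so indistinguishability is immediate and the per-query probability of hitting the clique is $O(k\alpha/n)$, giving $\Omega(n/(k\cdot\arb(G)))$. One small point to tighten: the two arboricities must be separated by a factor strictly larger than $k$ (the paper uses $2k$), since with a gap of exactly $k$ your admissible output intervals $[\alpha/k,\alpha]$ and $[\alpha/k^2,\alpha/k]$ share the endpoint $\alpha/k$; with your $\Theta(\cdot)$ constants this is a trivial adjustment.
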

Hence, our algorithm's query complexity is optimal, up to $\poly(\log n)$ factors.	
	
Compared to the $\tilde{O}(n)$-time, $(1+\eps)$-approximation algorithm of~\cite{mcgregor},
our algorithm improves the time complexity by a factor of  $O(\arb(G))$, at the cost of increasing the approximation factor to  $O(\log^2(n))$.			
A natural question
is whether the $O(\log^2 n)$ approximation factor can be improved  with similar time complexity.

We note that the related problem of \emph{tolerant testing} of bounded arboricity (in the incidence-list model) was studied by Eden, Levi and Ron~\cite{eden2020testing}: They proved that graphs that are $\gamma$-close\footnote{In the incidence-list model, a graph $G$ is said to be $\gamma$-close to some property $\Pi$, if there exists a graph $G'\in \Pi$ such that
	we can get $G'$ from $G$
	by at most $\gamma |E(G)|$ edge deletions and insertions to $G$.} to having arboricity at most $\alpha$, can be distinguished from graphs that are $20\gamma$-far from having arboricity at most $3\alpha$.
The running of their algorithm is $\widetilde{O}\left(\frac{n}{\gamma\sqrt m}+ (1/\gamma)^{O(\log(1/\gamma))}\right)$.
They further showed that their algorithm can be used to estimate what they refer to as the \emph{corrected arboricity} of $G$,  $ \alpha^*(G)=\min_{G'}\{ \arb(G') \mid \text{$G'$ is $\gamma$-close to $G$}\}$. Observe that this value might be much smaller than $\arb(G)$.\footnote{To see that $\alpha^*(G)$ might be much smaller than $\arb(G)$, consider a graph  $G$  consisting of a clique of size $\sqrt{\gamma m}$ and of a set of $n-\sqrt{\gamma m}$ vertices, each of degree $(1-\gamma)m/n$. Then $G$ has arboricity $\Theta(\sqrt{\gamma m})$, but the corrected arboricity is $\Theta(m/n)$.} In particular, the corrected value of the arboricity cannot be used in the aforementioned sublinear algorithms~\cite{eden2019sublinear,eden2019arboricity,eden2020faster} that rely on receiving an upper bound on the arboricity (whereas the  estimate output by our algorithm can).

Finally, we observe that our  algorithm can be adapted to the streaming model, providing lower space complexity compared to the state of the art~\cite{mcgregor2015densest}, at the cost of increasing the approximation factor as well as  the number of passes. For further discussion on the relation between the results, see Section~\ref{sec:stream}.

\begin{restatable}{thm}{ubStreaming}\label{thm:upperBoundStreaming}
	Given a lower bound $\alpha<\arb(G)$, algorithm \est\ can be implemented in the streaming model, using $O(\log n)$-passes and $\widetilde{O}(n/\alpha)$ space in expectation. With probability at least $1-O(1/n^2)$, the algorithm   outputs a $200\log^2n$-factor approximation of $\arb(G)$.
\end{restatable}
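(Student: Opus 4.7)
The plan is to simulate algorithm \est\ in the streaming model by dedicating one pass of the stream to each level of its recursive peeling procedure. Because the peeling illustrated in Figure~\ref{fig:illustratePeel} only recurses to depth $O(\log n)$---this is the index range $j$ that suffices to detect the scale at which $\arb(G)$ is realized---this yields an $O(\log n)$-pass implementation, matching the bound in the theorem.

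First I would observe that neighbor queries are the only ``non-local'' operation of the algorithm and that all queries issued at a given recursion level depend only on information collected at strictly shallower levels. Hence, within one pass, for every vertex $v$ that is active at depth $j$, I would maintain a reservoir that uniformly samples the $d(v)/\alpha$ neighbors the algorithm needs from the edges incident to $v$ as they appear in the stream. Degrees can be obtained by a running counter during the same pass or piggybacked on the preceding pass. At the end of the pass, the per-vertex pruning and peeling steps of \est\ are executed locally, producing the set of active vertices that seeds the next pass.

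The crux is the space bound. Starting from an initial uniform sample of $\widetilde{O}(n/\alpha)$ seed vertices, Theorem~\ref{thm:upperBound} (applied with the lower bound $\alpha \leq \arb(G)$) shows that the expected---and, with probability $1-O(1/n^2)$, also worst-case---total number of neighbor queries issued by \est\ is $\widetilde{O}(n/\alpha)$. Since every reservoir slot corresponds to the output of a single neighbor query, the sum of reservoir sizes at any single level is bounded by the total number of queries at that level, which is in turn at most the total query count, i.e., $\widetilde{O}(n/\alpha)$. Storing the current active set across passes adds at most $O(n/\alpha)$ machine words, keeping the overall space within $\widetilde{O}(n/\alpha)$ in expectation.

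Correctness then follows essentially verbatim from Theorem~\ref{thm:upperBound}: the streaming implementation samples from the same distribution over neighbors as \est\ and executes the same local decisions in the same order, so it produces an identically distributed output and inherits the $200\log^2 n$-factor approximation guarantee. The main subtlety I anticipate is coordinating the reservoirs so that samples remain mutually independent across the potentially many vertices active in a single pass---this is standard via independent hashing keyed by vertex identifiers, but it must be set up carefully so that the independence assumptions used in the analysis of \est\ are preserved when many queries are answered in parallel from a single scan of the stream.
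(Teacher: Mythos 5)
Your core simulation idea---one pass per recursion level, justified by the observation that all queries issued at level $j$ depend only on information from strictly shallower levels---is exactly the paper's ``adaptivity depth $O(\log n)$'' argument for \peel, and your space accounting via the query bound of Theorem~\ref{thm:upperBound} is in the same spirit as the paper's. However, there is a genuine gap in the pass count: \est\ is not a single run of the peeling procedure. It performs a \emph{sequential} search over $O(\log n)$ guesses $\ta=n, n/2,\dots$, and for each guess \reduce\ makes $10\log n$ \emph{sequential} invocations of \peel, each of which needs $\Theta(\log n)$ passes to simulate its adaptive queries. Simulated as you describe, this yields $O(\log^2 n)$ passes at best (and $O(\log^3 n)$ if the repetitions of \peel\ are also serialized), not $O(\log n)$. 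The paper closes this by restructuring the outer procedures: all $10\log n$ repetitions of \peel\ inside \reduce, and all guesses $\ta\in\{\alpha,2\alpha,\dots,n\}$, are run in parallel within the same $O(\log n)$ passes, returning $2\ta$ for the smallest $\ta$ whose parallel \reduce\ answers $\YES$. This is also precisely where the hypothesis of a known lower bound $\alpha<\arb(G)$ is used: it caps the number of parallel guesses and keeps the total space $\sum_{\ta\geq\alpha}\widetilde{O}(n/\ta)=\widetilde{O}(n/\alpha)$. In your write-up, $\alpha$ is used only to restate the query bound, and the search loop is left sequential, so the claimed $O(\log n)$-pass bound does not follow from what you wrote.

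Two smaller points, both fixable but worth noting. The paper simulates neighbor queries with independent $\ell_0$ samplers (one per required sample), which handles streams with deletions and automatically yields uniform-with-replacement samples matching Step~\ref{step:set_tGl0} of \peelv; your per-vertex reservoirs only handle insertion-only streams and need care to produce independent samples rather than samples without replacement. Also, the number of samples $d(v)/(6\ta)$ needed for a vertex is not known until its degree is known, so the degree acquisition must occur in a pass preceding the neighbor-sampling pass; the paper's partition of queries into alternating degree-query and neighbor-query sets (its $\mQ_{2j+1}$ and $\mQ_{2j}$, giving $2\ell+2$ sets overall) makes this ordering explicit, whereas ``a running counter during the same pass'' does not by itself suffice.
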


\subsection{A high-level description of the algorithm}
In what follows we consider the task of distinguishing between the case that $\arb(G)\leq \alpha$ and the case that $\arb(G) > \rho \alpha$, for a given arboricity parameter $\alpha$, and a bounded approximation ratio $\rho$.
That is,  if $\arb(G)\leq \alpha$, then it should output $\YES$,
and if $\arb(G) > \rho \alpha$, then it should output $\NO$, and it is allowed a small error probability.
(If $\alpha < \arb(G) \leq \rho(\alpha)$, then the algorithm may output either $\YES$ or $\NO$.)
Once we design an algorithm for this  promise problem, we can search for $\alpha$ using standard techniques.

\paragraph{Vertex layering.}\label{sec:vertex_layering}
Our starting point is the fact
that if a graph $G=(V,E)$ has arboricity at most $\alpha$, then $V$ can be partitioned into $\ell = O(\log n)$
\emph{layers}, $L_0,\dots,L_\ell$, where each vertex in layer $L_i$ has at most $3\alpha$ neighbors\footnote{One can replace the constant $3$ by any constant bigger than $2$ - for simplicity, we present the layering with the constant $3$.} in layers $L_j$, $j\geq i$.
 This partition is due to Barenboim and Elkin~\cite{BE}, and was part of their algorithm for computing a forest decomposition of a graph in the distributed setting, which itself is used to obtain  efficient coloring and Maximal Independent Set algorithms.

 Such a partition can be obtained by what we refer to as a \emph{peeling} process. First, all vertices with degree at most $3\alpha$ are put in $L_0$, and are removed (peeled) from the graph. Then, the updated degree of all vertices is computed, and all the vertices with updated degree at most $3\alpha$ are put in $L_1$, and peeled from $G$. The process continues, and it can be shown that in graphs with arboricity at most $\alpha$, in every iteration, at least a constant fraction of the vertices is peeled. Hence, the process terminates after $\ell=O(\log n)$ iterations (and each vertex belongs to some $L_i$, $i \leq \ell$).

On the other hand, if $\arb(G)>\rho\alpha$ for $\rho\geq 3$, then there exists a subset $R\subseteq V$, such that every vertex in $R$ has more than $\rho\alpha$ neighbors in $R$.
 This implies that no vertex in $R$ will ever be peeled (and added to a layer $L_i$).
Hence, the vertices of the set $R$ are ``witnesses'' to the fact that $G$ has arboricity greater than $\rho\alpha$, and in order to determine  that $\arb(G)>\rho\alpha$, we will be interested in detecting at least one vertex from the set $R$.
We shall set $\rho$ subsequently, but for now we assume that it is sufficiently larger than 3.

Consider taking a sample $X$ of $O(n/\alpha)$ vertices (uniformly at random). If $\arb(G) > \rho\alpha$, then
we expect the sample to contains at least one vertex from the aforementioned subset $R$. On the other hand, if $\arb(G) \leq \alpha$, then every sampled vertex belongs to some $L_i$, $i\leq \ell$. In order to distinguish between the two cases, we would like to run an \emph{approximate peeling procedure} on each sampled vertex. The intention is that if $\arb(G)\leq \alpha$, then this  procedure determines for each vertex $v$ that it belongs to some $L_i$, $i\leq \ell$,
while if $\arb(G)>\rho\alpha$ then for $v\in R$, the procedure determines that $v$ does not belong to any $L_i$, $i\leq \ell$.

\paragraph{An iterative $+$ recursive peeling process.}
For each vertex $v$ in the sample $X$, we would like to decide whether $v$ belongs to some $L_i$, $i\leq \ell$ or not.
To this end we run in at most $\ell+1$ iterations, indexed by $j$, starting from $j=0$. In iteration $j$, we aim  to determine which vertices in the sample belong to $L_j$. This is done by calling a (recursive approximate) peeling procedure on each $v$ in the sample that was not peeled in previous iterations, with the parameter $j$.
For $j=0$, this is an easy task, as it only requires performing a degree query on $v$ and peeling $v$ (placing it in $L_0$) if $d(v) \leq 3\alpha$.

For $j=1$, the procedure samples $d(v)/\alpha$ neighbors of $v$, and recursively invokes  itself with $j=0$ on each of the sampled neighbors. This results in a partial BFS tree of depth 1 rooted at $v$, where for each sampled neighbor $u$ of $v$, we know whether it belongs to $L_0$ (and was hence peeled). If the number of un-peeled sampled neighbors (children) of $v$ is below a certain threshold $\tau$ (that will depend on $\alpha,\rho$ and $j$), then $v$ is peeled (and deemed to belong to $L_1$).
Observe that if $v$ indeed belongs to $L_1$, then, since it has at most $3\alpha$ neighbors that do not belong to $L_0$, we expect to see only a constant number of such vertices among $v$'s sampled neighbors.
In such a case, the approximate peeling procedure peels $v$ (so that  $v$ is detected as belonging to $L_1$).
On the other hand, if $v$ has a significantly larger number of neighbors that do not belong to $L_0$, then $v$ is not peeled.

In general, for $j> 1$, if $v$ was not yet peeled in previous iterations,  a partial BFS tree of depth $j-1$ was already constructed for $v$.\footnote{If the same vertex is encountered more than once, then in terms of the tree structure, we maintain two (or more) copies of the vertex.}
Considering the children of $v$ in the tree, some were peeled (and deemed to belong to $L_0,L_1,\dots,L_{j-2}$), and some  are (yet) un-peeled.
The procedure is now invoked recursively on each of these un-peeled children of $v$ with the parameter $j-1$, to decide for each of them whether it is deemed to belong to $L_{j-1}$ and hence should be peeled.
If, after these recursive calls, the number of yet un-peeled neighbors of $v$ is sufficiently small (below the threshold $\tau$), then $v$ is peeled (and deemed to belong to $L_j$).

By the above description, if $\arb(G)\leq \alpha$, then we expect that for every $i \leq \ell$ and every $v\in L_i$, $v$ will be peeled after $j \leq i$ iterations. On the other hand, if $\arb(G)>\alphaX$, we expect that at least one vertex in
the sample  $X$ will not be peeled after all $\ell$ iterations.

\paragraph{Error probability and query complexity.}
One issue that needs to be addressed in the above description, is bounding the error probability (due to sampling). This can be handled by standard probabilistic analysis (where we set the peeling threshold $\tau$ to $O(\log n)$, which implies that the approximation factor $\rho$ must be larger).
This leaves us with the central issue of the query complexity (and running time) of the algorithm.
We would like to show that when $\arb(G) \leq \alpha$, we can bound,  with sufficiently high probability, the total number of queries performed until all sampled vertices are peeled, by $\widetilde{O}(n/\alpha)$. This will allow us to terminate the algorithm if the number of queries exceeds this upper bound (as we have an indication that $\arb(G) > \alpha$).

To this end we shall actually modify the approximate peeling procedure, but before describing this modification, we provide some more intuition. Our focus for now is on the case that $\arb(G) \leq \alpha$. We later show that this modification (for the sake of upper bounding the complexity) does not have a significant effect on the error probability. That is, it still holds that when $\arb(G) \leq \alpha$,  every sampled vertex is peeled (with high probability) in some iteration $j\leq \ell$, while
 when $\arb(G) > \rho\alpha$, vertices in $R$ will not be peeled (with high probability).

\paragraph{A special case (and some wishful thinking).}
Consider the following graph. The graph vertices are partitioned into $t=O(\log n)$ subsets, $V_0,\dots, V_t$.
For each $i$, $|V_i|$ is roughly $n/\rho^{i}$. All vertices in $V_0$ have degree $\alpha$, and all other vertices have degree  $(\rho+1) \alpha$.
The edges in the graph are all between consecutive subsets, $V_i$, $V_{i+1}$, where, each vertex in $V_{i}$ has $\alpha$ neighbors in $V_{i+1}$, and the remaining neighbors in $V_{i-1}$ (for $i>0$). By the definition of the peeling process, $L_i=V_i$.

When taking a uniform sample of $O(n/\alpha)$ vertices, we expect to get $O(n/(\rho^i \alpha))$ vertices from each $V_i$.
Suppose that, when sampling the $d(u)/\alpha=\rho$ neighbors of any vertex $u\notin L_0=V_0$ as part of the approximate peeling process, we always get neighbors that belong to the layer below. Then the number of queries performed (including in recursive calls to the procedure) until a vertex $v\in V_i$ is peeled, is $O(\rho^i)$.
We hence get a total of
$O\left(\frac{n}{\alpha}\cdot \sum_i \frac{1}{\rho^i}\cdot \rho^i\right)= \widetilde{O}\left(n/\alpha\right)$ queries.
Unfortunately, we cannot assume that our input graph has such a convenient layered structure. Furthermore, we 
cannot rely on the (wishful-thinking) assumption that the samples of neighbors  contain only neighbors that belong to lower layers. 

One main building block of our analysis is showing that for any graph $G$ such that $\arb(G)\leq \alpha$, if neighbors are sampled only from lower layers, then we can still get an upper bound of $\widetilde{O}(n/\alpha)$  as in  the special case of the graph described above. Here we shall not elaborate on the proof of this claim, but rather focus on how to modify the approximate peeling procedure so as to obtain a similar upper bound, without relying on this (wishful-thinking) assumption.

\drawHere

\paragraph{Modifying the approximate peeling procedure.}
As discussed above,
when sampling neighbors of a vertex $v$ in the course of an invocation of the approximate peeling procedure,
we would have liked
to be able to identify those sampled neighbors of $v$ that belong to higher layers, so as to avoid performing recursive calls on them. 
The reason is, that such calls may be too costly in terms of the query complexity.
But this implies that it is not really necessary to exactly identify all such higher-layer neighbors of $v$, but rather only those for which the recursive invocation of the approximate peeling procedure will have a high cost in terms of the query complexity. We shall hence be interested in identifying such costly vertices and  avoid performing recursive calls on them.  We refer to such a process as ``pruning'' and to vertices that are neither peeled nor pruned as \emph{active}.

Fortunately, as we discuss shortly,
when we call the procedure with a  neighbor $u$  of $v$  and  an index $j-1$, assuming $u$ is not peeled as a result of this call, we can already \emph{exactly} compute the ``future" cost of invoking the recursive procedure on $u$ with index $j$. We refer to this value as the \emph{$j$-cost} of $u$.
This allows us to prune those $\tau$ active neighbors of $v$ that have highest $j$-costs, so that the procedure will not be called recursively on them if it is invoked with $v$ and $j+1$.  For an illustration of the modified process, see Figure~\ref{fig:illustratePeel}.

\paragraph{Computing the $j$-cost of a vertex.}
First observe that the $0$-cost of any vertex is always $1$, since for $j=0$, only a single degree query is performed.
We next explain how the $j$-cost of a vertex can be computed for $j>0$, when the procedure is invoked on this vertex with parameter $j-1$.
The complexity of invoking the procedure on a vertex  $u$  with $j=1$ is determined by the number of its sampled neighbors, $d(u)/\alpha$. Hence, this number can already be computed once $d(u)$ is determined (when the procedure is invoked on $u$ with $j=0$), without actually identifying the neighbors themselves (that is, without performing any neighbor queries).

	For general $j$, the $j$-cost of $u$ is computed as follows  (when the procedure is invoked on $u$  with the index $j-1$). First, the  procedure is invoked recursively on the remaining active neighbors of $u$. Once these  calls return, the peeled and pruned (according to their $(j-1)$ costs) neighbors are removed from the set of active neighbors of $u$, and the $j$-cost of $u$ is set to be the sum of $(j-1)$-costs over the updated set of active neighbors.

\paragraph{Wrapping things up.}
 Consider first the case that $\arb(G)\leq \alpha$. Then with high probability, for every vertex on which the procedure is invoked, the number of its sampled neighbors from higher levels is not much larger than $\tau$. Conditioned on this event, the following holds. The  total number of queries performed by the modified procedure on the initially sampled $O(n/\alpha)$ vertices, can be  upper bounded by the number of queries that would have been performed by the original procedure when conditioning on sampled neighbors only belonging to lower levels.
 The algorithm outputs $\YES$ if all sampled vertices are peeled (by the modified approximate peeling procedure) after at most $\ell$ levels of recursion, and the total number of queries performed (when sampling random neighbors), is not much larger than the aforementioned upper bound.
 Otherwise, it outputs $\NO$.

The correctness of the algorithm follows for the case that $\arb(G) \leq \alpha$, since
 the pruning performed by the modified procedure can only increase the probability that a vertex $v\in L_i$ will be peeled in at most $i$ levels of recursion.
  Turning to the $\NO$ instances, we  set  the approximation factor $\rho$ to $O(\log^2 n)$. Recall that the peeling and pruning threshold, $\tau$, is $O(\log n)$, and the maximum number of allowed recursion levels is  $\ell = O(\log n)$. Therefore, for such a setting of $\rho$, if $\arb(G) > \rho \alpha$, and the algorithm samples some vertex $v\in R$, then (with high probability) $v$ will not be peeled in $\ell$ levels of recursion, so that the algorithm will output $\NO$, as required.

\subsection{Comparing our algorithm to previous work}\label{sec:compare}

\subsubsection{Related algorithms in other models of computation.}
\sloppy
 Elkin and Barenboim~\cite{BE} design a distributed algorithm that computes the layering described in Section~\ref{sec:vertex_layering} with a threshold of $2(1+\eps)$.
 The round complexity of their algorithm  is $O(\log_{1+\eps} n)$.
Bahmani, Kumar and Vassilvitskii~\cite{bahmani2012densest}
implement the same peeling algorithm in the streaming model,
and output a subgraph that preserves the maximum density in the graph, up to a factor of $2(1+\eps)$.
Their algorithm performs
$O(\log_{1+\eps} n)$-passes and uses $O(n)$-space.

The algorithm by Bhattacharya et al.~\cite{BHNT15}, has the same output guarantees as the  algorithm of~\cite{bahmani2012densest},  while performing only a single pass over the stream. This comes at the cost of increasing the space complexity to $O(n\cdot \poly(\log_{1+\eps}n))$. The algorithm uses the crucial observation, that the densest subgraph remains densest even if each edge is sub-sampled with probability $d$. Hence, by taking a sample of $O(m/d)$ edges, one can implement the peeling procedure on the sampled subgraph, and get a $2(1+\eps)$-approximation.

 Our algorithm also has an element of edge sampling in the form of neighbor sampling. But as opposed to~\cite{BHNT15} in which the neighbor sampling is performed on \emph{all} vertices, in our algorithm  it  is performed only on an initial  set of $O(n/\arb(G))$ vertices, and their sampled descendants.  Our main challenge is in showing how the peeling procedure can be modified in order to obtain an upper bound of   $\widetilde{O}(n/\arb(G))$ on the total number of queries performed.

The algorithm by McGregor et al.~\cite{mcgregor2015densest} relies on the same sub-sampling as in~\cite{BHNT15}, but instead of computing the layering, it directly computes the densest subgraph in $G'$ and  prove that it preserves the density up to a factor of $(1+\eps)$. We note that in both these results the main focus was on achieving fast update time per each edge insertion/deletion, which is of less relevance to us.

\subsubsection{Related algorithms in the incidence-list model}

As noted previously, Bhattacharya et al.~\cite{BHNT15}
also describe  an adaption of their streaming algorithm  to the incidence-list model, where
in order to implement the iid edge samples,  $O(n)$ degree queries are performed, resulting in complexity $\widetilde{O}(n)$.

 Eden, Levi and Ron~\cite{eden2020testing} present a tolerant testing algorithm for arboricity.
In the course of their algorithm, they too perform a certain kind of approximate peeling process based on neighbor sampling. However, as their end result is quite different from ours, their algorithm, and its analysis, differ as well. In particular, recall that in the context of testing, a graph is considered $\eps$-close to having arboricity at most $\alpha$, if it can be made to have arboricity $\alpha$ by removing at most an $\eps$-fraction of its edges. This implies that for each vertex, it suffices to take a sample of size $O(1/\eps)$ of its neighbors, independently of its degree.
 This in turn implies that the constructed ``approximate peeling trees'' have degree and depth depend only on $1/\eps$ (indeed, this is the source of the term $(1/\eps)^{O(\log(1/\epsilon))}$ in the complexity of their algorithm).

\section{Preliminaries}\label{sec:prel}

Let two integers $i\leq j$, let $[i,j]$ denote the set of integers $i\leq k\leq j$. If $i=1$, then we use the shorthand $[j]$ for $[0,j]$.
	
We consider simple undirected graphs $G=(V,E)$ where $|V|=n$ and $|E|=m$. Let $\Gamma(v)$  denote the set of neighbors of a vertex $v$, and $d(v)=|\Gamma(v)|$.	For a subset of vertices $S$, we use $d_S(v)$ to denote the degree of $v$ in the subgraph induced by $S$.
We abuse notation and use set operations to manipulate multisets, where cardinality of the set is the sum of multiplicities of its elements, and other operations are the natural generalizations of the set operations. Where the distinction between sets and multisets is not important, we might simply refer to multisets as sets.

		\begin{dfn}
			The {\sf arboricity} of a graph $G=(V,E)$, denoted $\arb(G)$, is the minimum number of forests into which $E$ can be partitioned.
		\end{dfn}

	\begin{thm}[Multiplicative Chernoff Bound]\label{thm:chernoff}	
		Let $\chi_1, \ldots, \chi_k$ be independent random variables in $\{0,1\}$. Let $\chi=\sum_{i=1}^k \chi_i,$ and $\mu=\EX[\chi]$. Then
		$$\Pr[\chi\leq (1-\delta )\mu ]\leq \exp\left(-\frac {\delta ^{2}\mu }{2}\right),\;\; 0\leq \delta \leq 1
		\;\;\;\text{and}\;\;\;
		\Pr[\chi\geq (1+\delta )\mu ]\leq \exp\left(-\frac {\delta ^{2}\mu }{2+\delta }\right),\;\; 0\leq \delta\;.$$
	\end{thm}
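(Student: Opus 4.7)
The plan is to follow the standard Chernoff argument via the exponential moment method. First I would apply Markov's inequality to the random variable $e^{t\chi}$ for a parameter $t$ to be optimized: for $t>0$,
\[
\Pr[\chi\ge (1+\delta)\mu] = \Pr\!\left[e^{t\chi}\ge e^{t(1+\delta)\mu}\right]\le \frac{\E[e^{t\chi}]}{e^{t(1+\delta)\mu}},
\]
and symmetrically for $t<0$ to handle the lower tail $\Pr[\chi\le (1-\delta)\mu]$.

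Next I would use independence of the $\chi_i$ to factor the moment generating function as $\E[e^{t\chi}]=\prod_{i=1}^k \E[e^{t\chi_i}]$. Writing $p_i=\Pr[\chi_i=1]$, we have $\E[e^{t\chi_i}]=1+p_i(e^t-1)$, which is bounded above by $\exp(p_i(e^t-1))$ via the inequality $1+x\le e^x$. Multiplying these bounds gives the clean estimate $\E[e^{t\chi}]\le \exp(\mu(e^t-1))$, with $\mu=\sum_i p_i$.

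For the upper tail, I would substitute the optimal choice $t=\ln(1+\delta)$ (for $\delta\ge 0$), which yields
\[
\Pr[\chi\ge (1+\delta)\mu]\le \exp\!\bigl(\mu\bigl(\delta-(1+\delta)\ln(1+\delta)\bigr)\bigr).
\]
The remaining step is the elementary analytic inequality $(1+\delta)\ln(1+\delta)-\delta\ge \delta^2/(2+\delta)$ for $\delta\ge 0$, which can be verified by comparing derivatives at $\delta=0$ or via a Taylor-series argument. For the lower tail, the symmetric choice $t=\ln(1-\delta)<0$ (for $0\le\delta\le 1$) gives
\[
\Pr[\chi\le (1-\delta)\mu]\le \exp\!\bigl(\mu\bigl(-\delta-(1-\delta)\ln(1-\delta)\bigr)\bigr),
\]
and one concludes using the inequality $(1-\delta)\ln(1-\delta)+\delta\ge \delta^2/2$, again provable by a Taylor-series comparison on $[0,1]$.

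The proof is essentially routine; the only mildly technical step is verifying the two analytic inequalities that convert the MGF bound into the clean exponential form in the theorem statement. Since this is a textbook Chernoff inequality, I would expect the paper to either cite a standard reference (e.g., Mitzenmacher–Upfal) or omit the proof entirely rather than reproduce the above calculation.
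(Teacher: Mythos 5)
Your proof is correct and is the standard moment-generating-function argument for the multiplicative Chernoff bound; your prediction about the paper is also accurate, since the paper states this theorem as a known tool and provides no proof of it. Nothing needs to be compared or corrected.
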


The layering of vertices, defined next, is essentially the same as what was defined in the introduction, except that we place in $L_0$ all vertices with degree at most $\alphaX$, and not only all those with degree at most $3\alpha$.
		\begin{dfn}[Layering of $V$] \label{def:layering}  For a graph $G=(V,E)$ we define an \emph{$\alpha$-layering} of $G$ as follows. 
			$$\; L^{\alpha}_0(G)=\{v\in V \;:\; d(v)\leq \alphaX\},
			$$
			and for every $i>0$, 
			let
			\[ L^{\alpha}_i(G)=\left\{v\in V\;:\; v \notin \medcup_{j<i} L^{\alpha}_{j}(G) \text{ and }\left|\Gamma(v)\cap \medcup_{j<i} L^\alpha_{j}(G)\right|\geq d(v)-3\alpha\right\}.
			\]
Whenever the graph $G$ and the arboricity parameter $\alpha$ are clear from the context, we shall simply use $L_i$ instead of $L^\alpha_i(G)$.
We use $L_{<i}$ as a shorthand for $\bigcup_{j<i}L_j$, and $L_{\geq i}$ is defined analogously.
	\end{dfn}
Definition~\ref{def:layering} defines an iterative \emph{peeling} process for constructing the layers $L_0,L_1,\dots$.
As stated in the next theorem, if $\arb(G)\leq \alpha$, this process ends after at most $\log n$ iterations (with each vertex $v$ being placed in some layer $L_i$).

	\begin{thm}[\cite{BE}, Theorem 3.5, restated]
Let $G=(V,E)$ be a graph for which $\arb(G)\leq \alpha$. Then $V=\bigcup_{i=0}^{\ell}L_i$ for $\ell=\log_{3/2} n$, where $L_i$ is as defined in Definition~\ref{def:layering}.
	\end{thm}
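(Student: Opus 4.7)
The plan is to recast Definition~\ref{def:layering} as an iterative peeling process. Let $V_i := V \setminus L_{<i}$ denote the set of vertices that survive to the start of stage $i$, so $V_0 = V$. For every $i \geq 1$, the condition $|\Gamma(v)\cap L_{<i}| \geq d(v) - 3\alpha$ appearing in the definition of $L_i$ is equivalent to $|\Gamma(v) \cap V_i| \leq 3\alpha$, i.e.\ to $d_{V_i}(v) \leq 3\alpha$. Thus $L_i$ is exactly the set of vertices in $V_i$ whose residual degree into $V_i$ is at most $3\alpha$. The theorem then reduces to showing geometric shrinkage: $|V_{i+1}| \leq (2/3)|V_i|$ for every $i \geq 1$. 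Combined with the trivial bound $|V_1| \leq n$, this gives $|V_{\ell+1}| \leq (2/3)^\ell n < 1$, hence $V_{\ell+1} = \emptyset$, as soon as $\ell \geq \log_{3/2} n$.

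For the shrinkage step I would invoke the Nash–Williams formula together with the monotonicity of arboricity under induced subgraphs. Since $\arb(G[V_i]) \leq \arb(G) \leq \alpha$, the closed-form expression for arboricity quoted in the introduction yields $|E(V_i)| \leq \alpha(|V_i| - 1) < \alpha |V_i|$. Consequently the total residual degree in $V_i$ satisfies $\sum_{v \in V_i} d_{V_i}(v) = 2|E(V_i)| < 2\alpha |V_i|$. A Markov-style averaging then finishes the argument: if strictly more than $(2/3)|V_i|$ vertices of $V_i$ had residual degree exceeding $3\alpha$, the sum of residual degrees would be strictly greater than $3\alpha \cdot (2/3)|V_i| = 2\alpha |V_i|$, contradicting the previous bound. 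Hence at least $|V_i|/3$ vertices of $V_i$ have residual degree at most $3\alpha$ and are therefore placed in $L_i$, giving $|V_{i+1}| \leq (2/3)|V_i|$ as desired.

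I do not expect a real obstacle: this is the Barenboim–Elkin shell decomposition specialized to the thresholds chosen here. The only small nuance is that stage $i=0$ uses the larger threshold $\alphaX$ rather than $3\alpha$, but this discrepancy is harmless since $\alphaX \geq 3\alpha$ only peels \emph{more} vertices than the $3\alpha$-threshold would, so the inductive argument can be applied from stage $i = 1$ onward (and if one wants integrality, the statement is interpreted with $\ell = \lceil \log_{3/2} n \rceil$).
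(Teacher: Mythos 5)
Your proof is correct and is essentially the same argument as the Barenboim--Elkin proof that the paper cites for this theorem (the paper itself gives no proof): reinterpret the layering as iterative peeling of vertices with residual degree at most $3\alpha$, use Nash--Williams plus monotonicity to get $2|E(V_i)|\le 2\alpha(|V_i|-1)$, and average to obtain a constant-fraction shrinkage per round, with the larger threshold $100\log^2 n\cdot\alpha$ at level $0$ only helping. The only nit is the boundary case $(2/3)^{\ell}n=1$ at $\ell=\log_{3/2}n$, which is resolved either by the slightly stronger bound $|V_{i+1}|\le\tfrac{2}{3}(|V_i|-1)$ (coming from $|E(V_i)|\le\alpha(|V_i|-1)$), or by the ceiling interpretation you already mention.
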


\begin{dfn}
The {\sf degeneracy} of a graph $G=(V,E)$, denoted $\degen(G)$, is the maximum over all subgraphs of $G$, of the minimum degree in the subgraph. That is,
$\degen(G)=\max_{S\subseteq V}\{  \min_{v \in S}\{d_S(v)\}\}$.
\end{dfn}
	The following  is a well known relation between the arboricity of a graph, and its degeneracy.

\begin{thm}[Arboricity and degeneracy relation, e.g., ~\cite{eppstein1994arboricity, BE}] \label{thm:degen_rel}
	\[
	\arb(G)\leq \degen(G)\leq 2\arb(G).
	\]
\end{thm}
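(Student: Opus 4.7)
The plan is to invoke the Nash--Williams characterization
\[
\arb(G) \;=\; \max_{S \subseteq V} \left\lceil \frac{|E(S)|}{|S|-1} \right\rceil,
\]
which the paper already cites in its opening paragraph, so that both inequalities reduce to elementary statements about edge-densities of induced subgraphs.

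For the upper bound $\degen(G) \leq 2\arb(G)$, I would fix an arbitrary nonempty induced subgraph on a vertex set $S$ with $|S|=k$. By Nash--Williams, $|E(S)| \leq \arb(G)\cdot (k-1)$, so the sum of degrees inside the subgraph satisfies $\sum_{v \in S} d_S(v) = 2|E(S)| < 2\arb(G)\cdot k$. Hence the average degree inside $S$ is strictly less than $2\arb(G)$, so some vertex of $S$ has $d_S(v) \leq 2\arb(G)-1$. Taking the maximum of $\min_{v\in S} d_S(v)$ over all $S$ then yields $\degen(G) < 2\arb(G)$.

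For the lower bound $\arb(G) \leq \degen(G)$, let $d := \degen(G)$ and fix any $S \subseteq V$ with $|S|=k$. I would produce an ordering $v_1,\dots,v_k$ of $S$ by iteratively removing from the current induced subgraph a vertex of minimum degree; by definition of $\degen$, each such vertex has at most $d$ neighbors in what remains. Assigning $v_i$ the edges going to $\{v_{i+1},\dots,v_k\}$ and summing over $i=1,\dots,k-1$ (the last vertex contributes nothing) gives $|E(S)| \leq d(k-1)$, so $\lceil |E(S)|/(|S|-1)\rceil \leq d$ for every $S$. A second application of Nash--Williams then yields $\arb(G) \leq d = \degen(G)$.

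The only real obstacle is how much of Nash--Williams to assume: if one insisted on a self-contained derivation, one would need a matroid-union or integer-flow argument to justify the characterization, which is substantially heavier than the rest of the proof. Since the excerpt already cites that formula as a known fact, I would invoke it as a black box and present only the two short averaging/peeling arguments sketched above.
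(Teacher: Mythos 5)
Your proposal is correct, and there is nothing in the paper to compare it against: Theorem~\ref{thm:degen_rel} is stated as a known fact with citations and no proof is given. Your two arguments are the standard ones and are sound. The averaging step even yields the sharper bound $\degen(G)\leq 2\arb(G)-1$ that the paper mentions in a footnote (just note the trivial cases $|S|=1$ and $E=\emptyset$, where the Nash--Williams ratio is not used, and that the strict inequality $2\arb(G)(k-1)<2\arb(G)k$ needs $\arb(G)\geq 1$). One small remark on the direction $\arb(G)\leq\degen(G)$: you do not actually need Nash--Williams there. The degeneracy ordering you construct (for $S=V$) lets you orient each edge from its earlier endpoint to a later one, giving out-degree at most $d=\degen(G)$ at every vertex; partitioning the out-edges of each vertex among $d$ color classes gives $d$ subgraphs in which every vertex has out-degree at most one and all edges point forward in the ordering, hence $d$ forests. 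This produces an explicit forest decomposition straight from the definition of $\arb$, so the only use of Nash--Williams as a black box is in the direction $\degen(G)\leq 2\arb(G)$, where it (or just the trivial inequality $|E(S)|\leq\arb(G)(|S|-1)$, which follows directly from a forest decomposition restricted to $S$) suffices; in fact that trivial direction of Nash--Williams is all you need, so the heavy matroid-union machinery can be avoided entirely.
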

\begin{cor}\label{cor:A-G}
If $\arb(G) \geq \beta$, then $G$ contains a subset of vertices $R^\beta(G)$ of size at last $\beta$, such that
$d_{R^\beta(G)}(v) \geq \beta$ for every $v\in R^\beta(G)$.
\end{cor}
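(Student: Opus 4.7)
The plan is to derive $R^\beta(G)$ directly from the degeneracy bound of Theorem~\ref{thm:degen_rel} and then observe that the size lower bound comes essentially for free from the minimum-degree condition.

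First, I would chain the two hypotheses: by Theorem~\ref{thm:degen_rel} we have $\degen(G) \geq \arb(G)$, so the assumption $\arb(G) \geq \beta$ immediately gives $\degen(G) \geq \beta$. Unfolding the definition of degeneracy, this says that
\[
\max_{S \subseteq V}\, \min_{v \in S} d_S(v) \;\geq\; \beta,
\]
so there exists some (nonempty) $S \subseteq V$ achieving $\min_{v \in S} d_S(v) \geq \beta$, i.e.\ $d_S(v) \geq \beta$ for every $v \in S$. I would define $R^\beta(G) \eqdef S$.

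To finish, I need $|R^\beta(G)| \geq \beta$. This is immediate: for any vertex $v \in S$, its $d_S(v)$ neighbors in $S$ are distinct vertices of $S$ other than $v$ itself, so $|S| \geq d_S(v) + 1 \geq \beta + 1 \geq \beta$. Combining these two observations yields the corollary.

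I do not expect any real obstacle here: the corollary is essentially a repackaging of the definition of degeneracy together with the standard inequality $\arb(G) \leq \degen(G)$ from Theorem~\ref{thm:degen_rel}. The only small point to be careful about is distinguishing ``minimum degree in the induced subgraph'' (which is what degeneracy delivers) from a mere average-degree statement; using the $\min$ formulation of $\degen(G)$ gives exactly the pointwise lower bound $d_{R^\beta(G)}(v) \geq \beta$ that the statement demands.
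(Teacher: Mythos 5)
Your proof is correct and follows exactly the route the paper intends: the corollary is stated as an immediate consequence of Theorem~\ref{thm:degen_rel} together with the definition of degeneracy, which is precisely your argument (take $S$ witnessing $\degen(G)\geq\arb(G)\geq\beta$ and note $|S|\geq\beta+1$ from the pointwise degree bound). Nothing is missing.
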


\section{The algorithm}\label{sec:alg}

In this section we present
the procedures that are the building blocks of our approximation algorithm.
The main procedure  distinguishes between graphs with arboricity at most $\alpha$ and graphs with arboricity greater than $100\alpha\log^2 n$. We first show how this can be performed with small ($O(1/\log n)$)  error probability.  The pseudo-code  appears in the  procedures \peel\ and \peelv\ (see Figures~\ref{fig:peel} and~\ref{fig:peelv}, respectively).
In Section~\ref{sec:search} we  reduce the error probability to $1/\poly(n)$,
and then show how the resulting algorithm can  be used to approximate the arboricity of a given graph, up to a  factor of $200\log^2 n$. The relevant pseudo-code  appears in Procedures \reduce\ and \est\ (see Figures~\ref{fig:reduce} and~\ref{fig:est}).

In the introduction we presented a high-level ideas behind the  algorithm and  its analysis. Here we provide the full details, while referring to some notions that were introduced in the introduction (and in particular the notions of peeled, pruned and active vertices).
For the ease of readability, we start by giving a verbal description of the procedures \peel\ and \peelv, followed by a ``road map'' of their analysis. (The procedures \reduce\ and \est\ and their analysis are fairly standard.)

From this point on, unless there is any ambiguity, whenever we refer to a graph $G$ such that $\arb(G)\leq \alpha$, we shall use the shorthand $L_i$ for $L_i^\alpha(G)$ (as defined in Definition~\ref{def:layering}), and whenever we refer to a graph $G$ such that $\arb(G) >  \beta$ for $\beta = \alphaX$, we shall use the shorthand $R$ for $R^{\beta}(G)$ (as defined in Corollary~\ref{cor:A-G}).

\subsection{The procedures \textsf{Peel} and \textsf{Peel-Vertex} }\label{subsec:procedures-verbal}
The procedure \peel\ is given query access to a graph $G$ and an arboricity parameter $\alpha$.
It starts by selecting a sample of (roughly $n/\alpha$) vertices, denoted $X_0$.
It then works in $\ell+1$ iterations, starting with $j=0$, where in iteration $j$ it peels a subset of the yet un-peeled sampled vertices, denoted $X_{j-1}$ (roughly speaking,  it peels those sampled vertices that belong to $L_j$.
The procedure \peel\ also keeps track
of the total number of queries performed. 
The peeling of a vertex, performed by the procedure \peelv, is done by implementing what we referred to in the introduction as the \emph{modified approximate peeling procedure}, and we discuss this further shortly.
If there are no remaining un-peeled sampled vertices after iteration $j=\ell$ (and the total number of queries did not exceed a certain threshold before reaching $j=\ell$),
then \peel\ returns $\YES$. Otherwise, it returns $\NO$.

The procedure \peelv, which is called on a vertex $v$ and a parameter $j$, maintains several data structures that contain information obtained regarding the original sampled vertices (i.e., those belonging to the set $X_0$) as well as additional vertices that are encountered in its (recursive) invocations (e.g., neighbors of vertices in $X_0$). 
We next elaborate on how \peelv\ works.

When $j=0$, \peelv$(v,j)$  queries the degree of $v$ and decides whether $v\in L_0$ based on the outcome $d(v)$. The procedure also increases the accumulated number of queries, $Q$, by $1$.
If $v$ does not belong to $L_0$, so that it is not peeled, then
the procedure  sets $\tp_1(v) = d(v)/(6\alpha)$ (if $v$ is peeled, then it sets $\tp_1(v)=0$).
This is the $1$-level cost associated with $v$ (that is,  the number of (neighbor) queries that will be performed on $v$ if \peelv$(v,j)$ is invoked with $j=1$).

Indeed, if \peelv$(v,j)$  is invoked with $j=1$, then the procedure performs $d(v)/(6\alpha)$ random neighbor queries
and lets the resulting (multi-)set of neighbors be denoted $S(v)$. The procedure also increases the accumulating number of queries, $Q$,  by $|S(v)|=d(v)/(6\alpha)$, and sets $\Nei{0}(v) = S(v)$. This is the initial set of active neighbors of $v$ (before any are peeled or pruned).

For any $j\geq 1$, the procedure proceeds as follows.
It recursively calls \peelv$(u,j-1)$  on each neighbor $u\in \Nei{j-1}(v)$, where  $\Nei{j-1}(v)\subseteq S(v)$ is the set of active neighbors of $v$ determined in the course of the invocations of \peelv$(v,j')$ for $j'=0,\dots,j-1$.
Each such call determines whether $u$ is peeled (in $j-1$ levels of recursion), or remains active.
In the latter case, $\tp_{j}(u)$, which was computed in the invocation of \peelv$(u,j-1)$,
holds the $j$-level cost associated with $u$.
In the former case,
$\tp_{j}(u)=0$.

The procedure \peelv\ then considers those neighbors
$u \in \Nei{j-1}(v)$ that remained active (following the recursive call to \peelv$(u,j-1)$).
It orders them according to their $j$-level cost $\tp_j(\cdot)$, and those $4\log n$ with the highest cost  are pruned. The updated set of neighbors $\Nei{j}(v)$ consists of those vertices in $\Nei{j-1}(v)$ that were neither peeled nor pruned.
If the size of $\Nei{j}(v)$ is sufficiently small (at most $8\log^2 n-j\cdot 4\log n$), then $v$ is peeled, which is indicated by setting $\tp_{j+1}(v)=0$. Otherwise,
$\tp_{j+1}(v)$ is set to be the sum, taken over all $u \in \Nei{j}(v)$, of $\tp_j(u)$.

\subsection{A road-map of the analysis}\label{subsec:road-map}
As described above, in the procedures \peel\ and \peelv, randomization comes into play in two ways.
The first is the choice of the initial set of random vertices, $X_0$ (selected by \peel). The size of $X_0$ is such that if  $\arb(G)> \alphaX$, then with high probability, $X_0$ will contain at least one vertex in $R$.
The second 
is the choice of the random (multi-)sets of neighbors $S(v)$, selected by \peelv\ (when invoked on a vertex $v$  with the parameter $j=1$).
The latter sets are selected for the vertices in $X_0$ as well as (some of) their descendants in the partial BFS trees that are constructed by \peelv. However,
for the sake of the analysis, it will be useful to consider, as a thought experiment, selecting the sets $S(v)$ for \emph{all} vertices $v\in V$, and establishing certain properties that hold with high probability over the choice of all these sets.
Note that once all these sets are selected, the execution of \peelv$(v,j)$ is determined for every vertex $v$ and parameter $j$.
In particular, it is determined for each vertex $v$ whether it is peeled by \peelv$(v,j)$, and if so, for which $j$.

The first building block of our analysis is Claim~\ref{clm:edge-samp}, which
 states useful properties of the sets, $\{S(v)\}_{v\in V}$ that hold with high probability.
Specifically, when $\arb(G)\leq \alpha$, we have that for every $i\in [0,\ell]$ and $v\in L_i $,  the number of neighbors in $S(v)$ that belong to layers $L_{\geq i}$  is not much larger than the expected value.
On the other hand, when $\arb(G)> \alphaX$, then for every vertex $v\in R$, the number of neighbors in $S(v)$ that belong to $R$ is not much smaller than the expected value.  When  that the sets $\{S(v)\}_{v\in V}$ have the aforementioned properties, we denote this event by $\mE$.

Based on Claim~\ref{clm:edge-samp} we show (in Claims~\ref{clm:pkv} and~\ref{clm:no_peel}, respectively),
that the following holds conditioned on the event $\mE$.
If $\arb(G)\leq \alpha$, then for every $v\in V$ there is some $j\in [\ell]$ such that $v$ is peeled by \peelv$(v,j)$,
while if $\arb(G)> \alphaX$, then for every $v\in R$, there is no $j\in [\ell]$ such that $v$ is peeled by \peelv$(v,j)$.
These two claims are then used to establish the correctness of \peel\
in the case that the total number of  queries performed does not exceed the allowed upper bound set by \peel (in addition to the condition that $\mE$ holds) --  see Claims~\ref{clm:YES} and~\ref{clm:NO}.

The main thrust of the analysis is showing that if $\arb(G)\leq \alpha$ and the event $\mE$ holds, then with sufficiently high probability, the total number of queries indeed does not exceed the allowed upper bound. To this end we define an imaginary ``wishful-thinking'' procedure, which we refer to as the \emph{downward-peeling procedure}.
This procedure is similar to \peelv, except that instead of pruning costly neighbors of a given vertex $v$, it prunes all neighbors of $v$ that belong to higher layers. Namely, if $v\in L_i$ (for $i\in [1,\ell]$), then it prunes every sampled neighbor $u$ in $S(v)$ that belongs to $L_{\geq i}$.

We then prove two central claims.
The first (Claim~\ref{clm:p_less_checkp}) is that if $\arb(G)\leq \alpha$, then
conditioned on the event $\mE$,
for every vertex $v\in V$, and $j\in[\ell]$, the number of queries performed in the course of the execution of \peelv$(v,j)$ is upper bounded by the number of queries performed by the downward peeling process on $v$ and $j$
(when the same sets of sampled neighbors are used).
The second (Claim~\ref{clm:bound_exp}) states that if $\arb(G)\leq \alpha$ and we invoke the downward peeling process on \emph{all} vertices $v\in V$ and all $j\in [\ell]$, then conditioned on the event $\mE$,
the expected query cost (over the choice of the vertex $v$) of invoking \peelv$(v,j)$ for $v$ and all $j\in [\ell]$ is $O(1)$.

By combining all aforementioned claims, we get (see Claim~\ref{clm:correctness}) that \peel$(G,\alpha)$ distinguishes between the case that $\arb(G)\leq \alpha$ and the case that $\arb(G)> \alphaX$ with sufficiently high probability.

\newcommand{\setT}{10n/(\alpha\log n)}
\newcommand{\UBq}{400t}

\algo{	
	{\bf Peel$(G,\alpha)$} \label{peel}
	\smallskip
	\begin{enumerate}
			\item Set $Q=0$. \label{step:loop_queries}
			\item Sample $t=\setT$ vertices uniformly, independently at random and denote the set of sampled vertices by $X_0$. \label{step:sampleX0}
			\item For $j=0$ to $\ell$ do:
			\begin{enumerate}		
				\item Initialize $X_{j+1}=\emptyset$
				\item For each $x\in X_j$ do:
				\begin{enumerate}		
					\item If $j\geq 1$ and $Q+\tp_j(x)$	exceeds $\UBq$, then \textbf{Return} $\NO$.
					\item Invoke \peelv$(x,j)$.
					\item If $\tp_{j+1}(x)\neq 0$, then let $X_{j+1}=X_{j+1} \cup  \{v\}$. \label{step:Rj}
				\end{enumerate}
			\end{enumerate}
			\item If $X_{\ell+1} = \emptyset$, then \textbf{Return} $\YES$. 
                                   \label{step:peel_return}
		\item Else \textbf{Return} $\NO$.\vspace{.5em} 
	\end{enumerate}
}
{The procedure \peel\ gets as input a value $\alpha$ and query access to a graph $G$. It  distinguishes between the case that $\arb(G)\leq \alpha$ and $\arb(G)>\alphaX$.}{fig:peel}

	\newcommand{\setp}{\frac{1}{6\alpha}}

\algo{
	{\bf Peel-Vertex$(v,j)$} \label{peelv}
	\smallskip
	\begin{enumerate}
		\item If \peelv$(v,j)$ was already called, then \textbf{Return}.
		\item If $j=0$:
		\begin{enumerate}
			\item \textbf{Query} $d(v)$ \label{step:query_deg} and update $Q=Q+1$.
			\item Set $\tp_{0}(v)=1$. \label{step:p0=1}
			\item If $d(v)\leq \alphaX$:\label{step:if_dv}  \comment{if $v\in L_0$, peel it} \\
                  $\;\;\;\;$ set $\tp_{1}(v)=0$ and  $\Nei{1}(v)=\emptyset$.
                              and  \textbf{Return} \label{step:zero_p1} 
             \item Else set $\tp_{1}(v)=d(v)/(6\alpha)$ \label{step:p1v_Rv} and \textbf{Return} \comment{o.w., set its $1$-cost}
		\end{enumerate}
		\item If $j=1$:
		\begin{enumerate}
            \item Select u.i.r. $d(v)/(6\alpha)$ indices in $[d(v)]$
              and perform a neighbor query on $v$ and each selected index. Denote the resulting (multi-)set of neighbors by $S(v)$. \comment{sample nbrs}
		 \label{step:set_tGl0}
		 \item Update $Q=Q+|S(v)|$. \label{step:update_qt_Sv}
			\item Let $\Nei{0}(v)=S(v)$. \label{step:Gam_zero}  \comment{initial set of active neighbors}
		\end{enumerate}
		\item For each $u\in \Nei{j-1}(v)$ do:\label{step:loop_recursion} \comment{for each remaining active nbr do}
		\begin{enumerate}
			\item Invoke  \peelv$(u,j-1)$,\label{step:invoke_rec} \comment{recursively invoke the procedure}
    		\item If $\tp_j(u)=0$, then place $u$ in  $\tP_j(v)$.  \label{step:def_Pj} \comment{$u$ should be peeled}
		\end{enumerate}
        \item Let $\tNei{j}(v) = \Nei{j-1}(v) \setminus \tP_j(v)$  \label{step:update_tGj} \comment{remove (newly) peeled neighbors}
		\item Let $\tH_{j}(v)$ be the set of $\min\{4\log n,|\tNei{j}(v)|\}$
         vertices in
          $\tNei{j}$ with highest  $\tp_{j}$ value
             \label{step:Hj}
		\item Let
       $ \Nei{j}(v)=\tNei{j}(v)\setminus \tH_{j}(v))$. \comment{Remove pruned (costly) nbrs}
     \label{step:update_tGj_low}
		\item If  $|\Nei{j}(v)| \leq \tau(j)=8\log^2 n-j\cdot 4\log n$:    \comment{peel $v$}\\
             $\;\;\;$ set $\tp_{j+1}(v)=0$ and  $\Nei{j+1}(v)=\emptyset$.
                       \label{step:deactivate}
		\item Else:
              $\tp_{j+1}(v)=\sum_{u \in \Nei{j}(v)} \tp_{j}(u)$. \comment{compute $(j+1)$-cost}
                  \label{step:pj}
	\end{enumerate}
}
{The procedure gets as input a vertex $v$ and an index $j$. It decides whether $v$ should be peeled, and  if not,  it computes $v$'s updated set of active neighbors, and its $(j+1)$-cost. }{fig:peelv}

\subsection{Correctness}\label{subsec:cor}
In this subsection we prove the correctness of the procedure \peel\ conditioned on the number of queries $Q$ not exceeding the allowed upper bound. In Section~\ref{subsec:cost} we  bound the probability that $Q$ exceeds this bound when $\arg(G)\leq \alpha$.
We start by proving the following claim regarding the random sets of neighbors $S(v)$.

\begin{clm}\label{clm:edge-samp}
	Consider (as a thought experiment) sampling a (multi-)set of neighbors $S(v)$ for {\sf every } $v\in V$, by
performing $d(v)/(6\alpha)$ independent random neighbor queries.
	Then the following hold.	
	\begin{itemize}
		\item If $\arb(G) \leq \alpha$, then with probability at least $1-1/n$, for every $v\in V$, if $v\in L_i$, then      
$|S(v)\cap L_{\geq i}| \leq 3\log n$.
		\item If $\arb(G)> \alphaX$, then with probability at least $1-1/n^4$, for every $v\in R$,  $|S(v)\cap R| \geq 8\log^2 n$. 
	\end{itemize}
\end{clm}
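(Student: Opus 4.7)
The plan is to prove both bullets by a direct application of the multiplicative Chernoff bound (Theorem~2.5), combined with the structural facts about the layering (Definition~2.2, first bullet) and about $R = R^{\beta}(G)$ for $\beta = \alphaX$ (Corollary~2.8, second bullet), and then finishing with a union bound over vertices.

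For the first bullet, I would assume $\arb(G) \leq \alpha$ and fix $v \in L_i$. The case $i = 0$ is essentially vacuous for the algorithm: such $v$ is peeled at $j=0$ by \peelv\ and its set $S(v)$ is never examined, so we only need the bound for $i \geq 1$. For $i \geq 1$, Definition~2.2 directly gives $|\Gamma(v) \cap L_{\geq i}| \leq 3\alpha$. Since each of the $d(v)/(6\alpha)$ independent random neighbor queries defining $S(v)$ falls in $L_{\geq i}$ with probability at most $3\alpha/d(v)$, we obtain $\E[|S(v) \cap L_{\geq i}|] \leq 1/2$. Applying the upper-tail Chernoff bound with $(1+\delta)\mu = 3\log n$ gives $\delta^2\mu/(2+\delta) \geq 3\log n$ (this holds for any $\mu \leq 1/2$, since shrinking $\mu$ only enlarges $\delta$ while leaving $\delta\mu$ essentially fixed at $3\log n$), so $\Pr[|S(v) \cap L_{\geq i}| > 3\log n] \leq 1/n^{3}$. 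Union bounding over all $v \in V$ yields total failure probability at most $1/n^{2} \leq 1/n$.

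For the second bullet, I would assume $\arb(G) > \alphaX$ and fix $v \in R$. By Corollary~2.8, $d_R(v) \geq \beta = \alphaX$, so each independent neighbor query lands in $R$ with probability $d_R(v)/d(v)$, giving $\E[|S(v) \cap R|] = d_R(v)/(6\alpha) \geq (100/6)\log^2 n$. The lower-tail Chernoff bound with threshold $8\log^2 n$ (so $1-\delta \leq 48/100$, i.e., $\delta \geq 1/2$) yields exponent $\delta^2\mu/2 \geq 2\log^2 n$, and therefore $\Pr[|S(v) \cap R| < 8\log^2 n] \leq \exp(-2\log^2 n) \leq 1/n^{5}$ for all sufficiently large $n$. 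A union bound over the at most $n$ vertices of $R$ gives the claimed failure probability $1/n^{4}$.

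Neither bullet presents a real obstacle, as both are pure Chernoff calculations. The only thing to verify carefully is that the constants chosen by the authors --- the sampling rate $1/(6\alpha)$, the slack $3\alpha$ in the layering definition, the factor $100$ in $\beta$, and the thresholds $3\log n$ and $8\log^2 n$ --- line up so that the expected count is $\leq 1/2$ in the first case and $\geq 16\log^2 n$ in the second, giving concentration strong enough to beat the union-bound loss. The mildly subtle point is the $i = 0$ edge case in the first bullet, which (as noted above) is irrelevant because \peelv\ never actually invokes $S(v)$ for such $v$.
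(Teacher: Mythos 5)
Your proposal is correct and follows essentially the same route as the paper's proof: for each fixed vertex, a multiplicative Chernoff bound using $|\Gamma(v)\cap L_{\geq i}|\leq 3\alpha$ (so expected upward samples $\leq 1/2$) in the first case and $d_R(v)\geq \alphaX$ (so expected samples in $R$ $\geq 16\log^2 n$) in the second, followed by a union bound over vertices. Your explicit treatment of the $i=0$ edge case (noting the bound is only needed, and only provable, for $i\geq 1$) is a point the paper's proof glosses over, but it does not change the argument.
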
	

\begin{proof}
Consider first the case that $\arb(G) \leq \alpha$.
	Fix a vertex $v$ and let $s=d(v)/(6\alpha)$. For each $r = 1,\dots,s$ let $\chi_r$ be a Bernoulli random variable whose value is $1$ if the $r\th$ sampled neighbor of $v$ belongs to $L_{\geq i}$.
Let $\chi = \sum_{r=1}^s \chi_t$, so that $\EX[\chi] = 1/2$ and
  $|S(v) \cap L_{\geq i}| = \chi$  (recall that $S(v)$ is a multi-set and hence when we consider its intersection with $L_{\geq i}$ we obtain a multi-set).
	By (the second item of) the multiplicative Chernoff bound (Theorem~\ref{thm:chernoff}), for $\delta=4\log n$,
	\[
	\Pr[\chi >(1+4\log n)\cdot 1/2]\leq
	\exp\left(-\frac{\delta^2 \cdot 1/2}{2+\delta}\right)\leq
	\exp\left(-\frac{16\log^2 n /2}{2+4\log n}\right)\leq \frac{1}{n^{2}}.
	\]
	Therefore, with probability at least $1-1/n^{2}$, $|S(v) \cap L_{\geq i}|\leq 3\log n$. The first item of the claim follows by taking a union bound over all vertices in $G$.

	Now consider the case that $\arb(G)>\alphaX$, and let $v$ be some vertex in $R$.	
	By 
Corollary~\ref{cor:A-G},
$|\Gamma(v)\cap R|\geq \alphaX$, implying that  $\EX[|S(v)\cap R|]\geq \alphaX \cdot \frac{1}{6\alpha}\geq 16\log^2 n$
(once again, recall that $S(v)$ is a multi-set, and the same holds for $S(v)\cap R$).
	Therefore, 	by (the first item of)  the multiplicative Chernoff bound (Theorem~\ref{thm:chernoff}),

	\[
	\Pr[|S(v)\cap R| <(1-1/2)\cdot 16\log^2 n]\leq \exp\left(16\log^2 n/8\right)\leq \frac{1}{n^{5}}\;,
	\]
	Hence, for a fixed $v$, with probability at least $1-1/n^5$, $|S(v)\cap R|\geq 8\log^2 n$. The second item of the claim follows by taking a union bound over all vertices in $R$.
\end{proof}	

\vspace{1ex}

\begin{dfn}[Successful neighbor sampling]\label{def:succ-edges}
We refer to an event where the relevant item in Claim~\ref{clm:edge-samp} holds
(i.e., the first item if $\arb(G)\leq \alpha$ and the second item if $\arb(G)>\alphaX$)
as {\sf success of the neighbor sampling process},
and denote this event by $\mE$.
\end{dfn}


Consider (again as a thought experiment) running $\peel(v,j)$ on all vertices $v\in V$ for $j=0$ to at most  $\ell$.
For every fixed choice of $S(V)= \{S(v)\}_{v \in V}$,
all these executions  are deterministic.
In what follows we analyze the correctness and expected query complexity of these invocations (that is, for now we assume that \peelv\ is invoked for \emph{all}  vertices), where the probability  is taken over the choice of $S(V)$.

We first introduce the following notation for the sets of vertices that are peeled in the different iterations $j$.

\begin{ntn}\label{ntn:Pj}
	For each $j\in[\ell]$, let
	\[\tP_j = \{v\;:\; \tp_j(v) > 0 \mbox{ and } \tp_{j+1}=0\}\;,\]	
denote the set of vertices that are peeled when $\peelv(v,j)$ is invoked,
and let $\tP_{\leq j}=\medcup_{j'\leq j}\tP_{j'}$.
\end{ntn}
The next observation follows directly from the description of \peelv.
\begin{obs}\label{obs:Pj}
	For each $j\in[\ell]$ and for $\Nei{j}(\cdot)$ and $\tau(j)$ as defined in \peelv,
	\[
	\tP_j=
	\begin{cases}
		L_0 & j=0\\
		\left\{v\;:\;v\notin \tP_{\leq j-1} \; \;\& \;\;|\Nei{j}(v)|\leq \tau(j)\right\} & j\in [1,\ell]
	\end{cases}
	\]
\end{obs}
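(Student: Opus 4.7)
The plan is to prove this observation by directly unpacking the pseudocode of \peelv: the statement is purely syntactic (``follows directly from the description''), so the argument is a careful case analysis on $j$ together with a short induction showing that once a vertex is peeled it stays peeled. The base case $j=0$ is immediate: every invocation of \peelv$(v,0)$ sets $\tp_0(v)=1>0$ in Step~2(b), so the condition $\tp_0(v)>0$ in the definition of $\tP_0$ is automatic, while $\tp_1(v)$ is assigned $0$ in Step~2(c) precisely when $d(v)\le \alphaX$ (i.e.\ $v\in L_0$), and otherwise the positive value $d(v)/(6\alpha)$ in Step~2(d). Hence $\tP_0=L_0$.

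For $j\in[1,\ell]$, I would organize the argument around the observation that $\tp_{j+1}(v)$ is written in exactly two places inside \peelv$(v,j)$: Step~8, which fires iff $|\Nei{j}(v)|\le\tau(j)$ and sets $\tp_{j+1}(v)=0$; and Step~9, which fires otherwise and sets $\tp_{j+1}(v)=\sum_{u\in\Nei{j}(v)}\tp_j(u)$. A small supporting fact is that Step~9 always produces a strictly positive value, because by Step~4(b) we have $\tP_j(v)=\{u\in\Nei{j-1}(v):\tp_j(u)=0\}$, so every $u\in \Nei{j}(v)\subseteq\tNei{j}(v)=\Nei{j-1}(v)\setminus\tP_j(v)$ satisfies $\tp_j(u)>0$, and the sum is over a non-empty set (since $|\Nei{j}(v)|>\tau(j)\ge 0$). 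Consequently $\tp_{j+1}(v)=0 \iff |\Nei{j}(v)|\le\tau(j)$.

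It remains to identify the event $\tp_j(v)>0$ with $v\notin \tP_{\le j-1}$, which I would establish by a short induction on $j$ using the monotonicity fact that once $v$ is peeled it stays peeled. Concretely, if $|\Nei{j'}(v)|\le\tau(j')$ then $\tNei{j'+1}(v)\subseteq\Nei{j'}(v)$, and the pruning step in Step~6--7 removes $\min(4\log n,|\tNei{j'+1}(v)|)$ further vertices, giving $|\Nei{j'+1}(v)|\le \tau(j')-4\log n=\tau(j'+1)$ (or $|\Nei{j'+1}(v)|=0$ in the degenerate case); hence Step~8 fires again at iteration $j'+1$, and by induction $\tp_{j''+1}(v)=0$ for every $j''\ge j'$. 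The forward direction of the equivalence is the contrapositive of this propagation, while the reverse direction is the immediate implication that $v\notin \tP_{j'}$ together with $\tp_{j'}(v)>0$ forces $\tp_{j'+1}(v)>0$, starting from the base $\tp_0(v)=1>0$. Combining this equivalence with $\tp_{j+1}(v)=0\iff |\Nei{j}(v)|\le \tau(j)$ yields exactly the stated description of $\tP_j$; I expect the only mild subtlety to be keeping the bookkeeping clean when \peelv$(v,j)$ is invoked at a ``late'' $j$ for vertices that had not been touched before, so that the relevant $\Nei{\cdot}(v)$ quantities appearing in Step~8's predicate are well-defined (they are, by unrolling the recursive calls in Step~4 which in turn trigger the lower-$j$ invocations needed to populate $\Nei{j-1}(v)$).
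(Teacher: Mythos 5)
Your proposal is correct and matches the paper's treatment: the paper offers no written proof, asserting the observation "follows directly from the description of \peelv," and your argument is exactly that direct verification spelled out (base case from Steps 2(b)--(d), the equivalence $\tp_{j+1}(v)=0 \iff |\Nei{j}(v)|\le\tau(j)$ from Steps 8--9 plus positivity of the summands, and the identification of $\tp_j(v)>0$ with $v\notin\tP_{\le j-1}$ via the once-peeled-stays-peeled induction, which the paper handles by convention by setting later costs of peeled vertices to zero). No gaps.
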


\subsubsection{The case $\arb(G)\leq \alpha$.}

We  prove that, conditioned on the event $\mE$, for every $v\in V$, if $v\in L_i$, then $v$ is peeled by
$\peelv(v,j)$ for some $j \leq i$.

\begin{clm}\label{clm:pkv}
	Let $G$ be a graph for which $\arb(G) \leq \alpha$ and assume that event $\mE$ holds.
	For every $i\in[\ell]$ and $v\in L_i$, we have that $v\in \tP_{\leq i}$.
\end{clm}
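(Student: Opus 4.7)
The plan is to prove Claim~\ref{clm:pkv} by induction on $i\in[0,\ell]$. For the base case $i=0$, if $v\in L_0$ then by Definition~\ref{def:layering} we have $d(v)\leq \alphaX$, so Step~\ref{step:if_dv} of \peelv$(v,0)$ sets $\tp_1(v)=0$, placing $v\in \tP_0\subseteq \tP_{\leq 0}$.

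For the inductive step, fix $i\geq 1$, assume the claim holds for all $i'<i$, let $v\in L_i$, and (without loss of generality) suppose $v\notin \tP_{\leq i-1}$. By Observation~\ref{obs:Pj} it suffices to establish $|\Nei{i}(v)|\leq \tau(i)=8\log^2 n-4i\log n$. The heart of the argument will be the containment $\Nei{i}(v)\subseteq S(v)\cap L_{\geq i}$. To prove this, I fix any $u\in S(v)\cap L_{<i}$, say $u\in L_{i'}$ with $i'\leq i-1$; by the inductive hypothesis $u\in \tP_{j'}$ for some $j'\leq i'$, that is $\tp_{j'+1}(u)=0$. Tracing the shrinking sequence $\Nei{0}(v)\supseteq \Nei{1}(v)\supseteq \cdots \supseteq \Nei{i}(v)$, I split into two cases: if $u$ was already removed from $\Nei{\cdot}(v)$ at some step $\leq j'$ (whether by peeling via $\tP_\cdot(v)$ or by pruning via $\tH_\cdot(v)$), then $u\notin \Nei{i}(v)$ trivially; otherwise $u\in \Nei{j'}(v)$, and then in Step~\ref{step:def_Pj} of \peelv$(v,j'+1)$ the test $\tp_{j'+1}(u)=0$ fires, so $u\in \tP_{j'+1}(v)$, hence $u\notin \tNei{j'+1}(v)\supseteq \Nei{i}(v)$.

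With the containment in hand, I will apply event $\mE$ via the first bullet of Claim~\ref{clm:edge-samp} to obtain $|\Nei{i}(v)|\leq |S(v)\cap L_{\geq i}|\leq 3\log n$. It then remains to verify $3\log n\leq \tau(i)$ for every $i\leq \ell=\log_{3/2} n$, which is routine: $\tau(\ell)\geq (8-4/\log(3/2))\log^2 n = \Omega(\log^2 n)$, which exceeds $3\log n$ for all sufficiently large $n$. This yields $v\in \tP_i$, closing the induction.

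The main obstacle is the dual-removal phenomenon: a neighbor $u\in L_{<i}$ can be eliminated from $\Nei{\cdot}(v)$ either by the peeling mechanism (triggered when $\tp_{j'+1}(u)=0$ is observed in Step~\ref{step:def_Pj}) or by the pruning mechanism (when $u$ enters some $\tH_j(v)$ as a top-cost neighbor). The inductive hypothesis directly controls only the peeling mechanism, so the case split above is what guarantees that every $L_{<i}$ neighbor is in fact removed before iteration $i$, regardless of which mechanism acts on it first. Once the containment $\Nei{i}(v)\subseteq S(v)\cap L_{\geq i}$ is secured, the remainder is a straightforward combination with Claim~\ref{clm:edge-samp}.
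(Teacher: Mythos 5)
Your proof is correct and follows essentially the same route as the paper: induction on the layer index, using the inductive hypothesis to argue that every sampled lower-layer neighbor is removed from $v$'s active set by round $i$, and event $\mE$ to bound $|S(v)\cap L_{\geq i}|$ by $3\log n$. The only immaterial difference is that the paper bounds the pre-pruning set $\tNei{i}(v)$ by $3\log n<4\log n$ and concludes that pruning leaves $\Nei{i}(v)=\emptyset$, whereas you bound the post-pruning set $|\Nei{i}(v)|\leq 3\log n$ and compare it directly against $\tau(i)$, which costs you the extra (routine, and correct) check that $\tau(i)\geq 3\log n$ for all $i\leq \ell$.
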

\begin{proof}
	We prove the claim by induction on $i$.
	For $i=0$, $v\in L_0$, and it holds by Step~\ref{step:zero_p1} that	$\tp_1(v)=0$,
so that $v\in \tP_0$.

	We now assume that the claim holds for all $i'\leq i-1$, and prove it for $i$.
If an invocation of \peelv$(v,i')$ for $i' < i$ already set  $\tp_{i'+1}(v)$ to $0$, then $v\in P_{i'}\subseteq P_{\leq i}$, and we are done.
 Otherwise, 	consider the invocation of \peelv$(v,i)$.
	By 	the induction hypothesis, for every $u\in \Nei{i-1}(v)\cap L_{<i}$, $u\in P_{<i}$.
	Therefore,
  $\tNei{i}(v) \cap L_{<i}=\emptyset$, so that $\tNei{i}(v)\subseteq L_{\geq i}$.
	Together with the fact that $\tNei{j}(v)\subseteq S(v)$ for every $j$, we get that $\tNei{i}(v)\subseteq (S(v)\cap L_{\geq i})$. By the definition of the event $\mE$, it holds that $|S(v)\cap L_{\geq i}|\leq 3\log n$,
	  and so $|\tNei{i}(v)| \leq 3\log n$. Hence, due to Step~\ref{step:update_tGj_low},
	  $|\Nei{i}(v)|=0$, and so by Step~\ref{step:deactivate}, $\tp_{i+1}(v)=0$, implying that $v\in P_{\leq i}$.
\end{proof}

We use Claim~\ref{clm:pkv} to prove the next claim.
\begin{clm}\label{clm:YES}
	Let $G$ be a graph for which $\arb(G) \leq \alpha$, and assume that the event $\mE$ holds.
	If $Q$ does not exceed $\UBq$, then \peel$(G,\alpha)$ returns $\YES$.
\end{clm}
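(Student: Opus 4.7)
The plan is to show that under the given hypotheses every vertex $x\in X_0$ is peeled within $\ell$ iterations of the outer loop of \peel, forcing $X_{\ell+1}=\emptyset$ and hence the return of $\YES$ in Step~\ref{step:peel_return}.

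First, I would observe that the hypothesis ``$Q$ does not exceed $\UBq$'' implies that the preemptive guard inside the outer loop---``If $j\ge 1$ and $Q+\tp_j(x)>\UBq$, Return $\NO$''---is never activated. Indeed, $\tp_j(x)$ is exactly the additional cost that \peelv$(x,j)$ will add to $Q$, so if the running total $Q$ remains at most $\UBq$ throughout the execution, the guard never fires and \peel\ completes all iterations $j=0,1,\dots,\ell$ and reaches Step~\ref{step:peel_return}.

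Next, I would invoke the Barenboim--Elkin theorem stated after Definition~\ref{def:layering}: since $\arb(G)\le\alpha$, the layering satisfies $V=\bigcup_{i=0}^{\ell}L_i$ with $\ell=\log_{3/2}n$. Hence every $x\in X_0\subseteq V$ lies in some $L_i$ with $i\le\ell$. Applying Claim~\ref{clm:pkv} (whose hypotheses $\arb(G)\le\alpha$ and event $\mE$ are in force), we obtain $x\in\tP_{\le i}$, i.e.\ there exists $j\le i$ for which the invocation \peelv$(x,j)$ sets $\tp_{j+1}(x)=0$.

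To conclude, I would note that \peel\ actually realizes this peeling event. Starting from $X_0$, the procedure invokes \peelv$(x,j')$ iteratively for $j'=0,1,2,\dots$ for as long as $x$ remains in $X_{j'}$ (that is, until some $\tp_{j'+1}(x)=0$), and the recursive invocations inside \peelv\ together with the ``already called'' guard produce the same sequence of $\tp$-values as the thought experiment underlying Claim~\ref{clm:pkv}, since both depend only on the fixed family of sampled neighbor sets $\{S(u)\}_{u\in V}$. Therefore each $x\in X_0\cap L_i$ satisfies $\tp_{j+1}(x)=0$ for some $j\le i\le\ell$, and so $x$ is not added to $X_{j+1}$ in Step~\ref{step:Rj}, ensuring $x\notin X_{\ell+1}$. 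Since this holds for every $x\in X_0$, we have $X_{\ell+1}=\emptyset$, and \peel\ returns $\YES$. The heavy lifting is done by Claim~\ref{clm:pkv}; what remains---confirming that the actual execution of \peel\ matches the thought experiment---is essentially routine bookkeeping, which is the only point that needs a little care.
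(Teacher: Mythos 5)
Your proof is correct and follows essentially the same route as the paper: Claim~\ref{clm:pkv} together with the Barenboim--Elkin layering gives that every $x\in X_0$ is peeled within $\ell$ iterations, so $X_{\ell+1}=\emptyset$, and the hypothesis on $Q$ rules out the early abort, so Step~\ref{step:peel_return} returns $\YES$. (One minor wording quibble that does not affect the argument: the number of queries added by \peelv$(x,j)$ is not \emph{exactly} $\tp_j(x)$ but only bounded by $2\tp_j(x)$ as in Claim~\ref{clm:ub_cost}; like the paper, you really just need that the hypothesis means the abort guard is never triggered.)
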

\begin{proof}
	Since $\arb(G) \leq \alpha$, and by the assumption that event $\mE$ holds,  by Claim~\ref{clm:pkv}, for every $v\in L_i$,  $v \in P_{\leq i}$.
	Hence, by Step~\ref{step:Rj}, for every $j\in [\ell+1]$, $X_{j}\cap L_{< j}=\emptyset$.
	Since $\arb(G) \leq \alpha$, every $v$ is in $L_i$ for some $i\in [\ell]$, and therefore, $X_{\ell+1}=\emptyset$, and if the algorithm reaches Step~\ref{step:peel_return}, then it returns $\YES$.
	Hence, if $Q$ does not exceed $\UBq$ (causing the \peel$(G,\alpha)$ to abort and return $\NO$), then \peel$(G,\alpha)$ returns $\YES$.
\end{proof}

\subsubsection{The case $\arb(G) > \alphaX$}

\begin{clm}\label{clm:no_peel}
	Let $G$ be a graph for which $\arb(G) >  \alphaX$,  and assume that event $\mE$ holds.	Then  for every $v\in R$, $v \notin P_{\leq \ell}$.
\end{clm}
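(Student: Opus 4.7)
The plan is to prove, by strong induction on $j \in [0,\ell]$, the stronger statement that \emph{for every $v \in R$ and every $k \in [0, j]$, $\tp_k(v) \neq 0$ (equivalently, $v \notin P_{\leq j}$).} Taking $j = \ell$ then yields the claim. For the base case $j = 0$, we have $\tp_0(v) = 1$ by Step~\ref{step:p0=1}. Since $\arb(G) > \alphaX$, Corollary~\ref{cor:A-G} combined with Theorem~\ref{thm:degen_rel} guarantees that every $v \in R$ satisfies $d(v) \geq d_R(v) > \alphaX$, so the test in Step~\ref{step:if_dv} fails on $v$, Step~\ref{step:p1v_Rv} sets $\tp_1(v) = d(v)/(6\alpha) \neq 0$, and hence $v \notin P_0$.

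For the inductive step, fix $j \geq 1$ and assume the statement for $j-1$. First, the ``never-zero'' half propagates automatically: whenever $\tp_{j-1}(v) \neq 0$ and $v \notin P_{j-1}$, Observation~\ref{obs:Pj} forces $\tp_j(v) \neq 0$. Now fix $v \in R$; since $v \notin P_{\leq j-1}$ by hypothesis, \peelv$(v,j)$ is actually executed. In Step~\ref{step:def_Pj}, a neighbor $u \in \Nei{j-1}(v)$ is placed into $\tP_j(v)$ only when $\tp_j(u) = 0$. By the never-zero part of the inductive hypothesis applied to the vertices in $R$, this never triggers for $u \in R$, so $\tP_j(v) \cap R = \emptyset$ and therefore $\Nei{j-1}(v) \cap R \subseteq \tNei{j}(v)$. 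Since $\Nei{j}(v) = \tNei{j}(v) \setminus \tH_j(v)$ with $|\tH_j(v)| \leq 4\log n$ (Steps~\ref{step:Hj}--\ref{step:update_tGj_low}), this yields the one-level loss bound $|\Nei{j}(v) \cap R| \geq |\Nei{j-1}(v) \cap R| - 4\log n$.

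Unrolling this inequality from $j' = 1$ to $j' = j$, using $\Nei{0}(v) = S(v)$, and invoking the event $\mE$ via Claim~\ref{clm:edge-samp}, we obtain
\[
|\Nei{j}(v)| \;\geq\; |\Nei{j}(v) \cap R| \;\geq\; |S(v) \cap R| - 4 j \log n \;>\; 8\log^2 n - 4 j \log n \;=\; \tau(j),
\]
where the strict inequality uses the slack in the Chernoff estimate of Claim~\ref{clm:edge-samp} (the expectation $|\Gamma(v)\cap R|/(6\alpha) \geq (50/3)\log^2 n$ comfortably exceeds $16\log^2 n$). Hence the peeling test in Step~\ref{step:deactivate} fails for $v$, so $\tp_{j+1}(v) \neq 0$ and $v \notin P_j$, completing the induction.

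The only obstacle worth flagging is the bookkeeping tension between pruning and the sampling budget: pruning in Step~\ref{step:Hj} may remove $R$-vertices from the active set, and after $\ell$ recursion levels this could, in principle, eat into the entire reservoir provided by $|S(v) \cap R|$. The argument above resolves this by the two quantitative facts baked into the constants: \emph{(i)} at most $4\log n$ vertices are pruned per level, so the cumulative loss is at most $4\ell \log n = O(\log^2 n)$, and \emph{(ii)} the threshold $\tau(j)$ and the high-probability lower bound $8\log^2 n$ on $|S(v)\cap R|$ are matched precisely so that the $R$-budget strictly survives every level up to $\ell$.
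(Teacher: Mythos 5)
Your proof is correct and follows essentially the same route as the paper's: induction over $j$ showing that, conditioned on $\mE$, no vertex of $R$ is ever placed in $\tP_j(v)$, so only the at most $4\log n$ pruned vertices per level erode the reservoir $S(v)\cap R$ of size at least $8\log^2 n$, keeping $|\Nei{j}(v)|$ above $\tau(j)$; the paper merely carries the quantitative bound $|\Nei{j}(v)\cap R|\geq 8\log^2 n-j\cdot 4\log n$ inside the induction hypothesis rather than unrolling it as you do. One minor caveat: the strict inequality you justify by ``slack in the Chernoff estimate'' is not literally provided by the event $\mE$ as defined (it only guarantees $|S(v)\cap R|\geq 8\log^2 n$), but the paper's own argument has the same boundary looseness at the threshold $\tau(j)$, so this is not a substantive gap.
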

\begin{proof}
	 We  prove the claim by induction on $j$.  We shall actually prove a slightly stronger claim: that for every $j\in[\ell]$, $\tp_{j+1}(v)>0$ and  $|\Nei{j}(v)\cap R|\geq 8\log ^2 n-j\cdot 4\log n$.
	Fix a vertex $v\in R$.
	For every $v\in R$,
	 $d(v)\geq |\Gamma(v)\cap R|>\alphaX$,
	the condition in Step~\ref{step:if_dv} in \peelv\ does not hold,
	 and therefore $\tp_1(v)=d(v)/(6\alpha)>0$ (see Step~\ref{step:p1v_Rv}).	
By the assumption that the event $\mE$ holds, and by the second item in Claim~\ref{clm:edge-samp}, for every $v\in R$, $|S(v)\cap R|\geq 8\log^2 n$.
	Since $|\Nei{0}(v)|=|S(v)|$, it holds that $|\Nei{0}(v)|\geq 8\log^2 n$, as required.
	
	Now assume the claim holds for $j-1$, and we  prove it for $j$. By the induction hypothesis, $|\Nei{j-1}(v)\cap R|\geq 8\log^2 n-(j-1)\cdot 4\log n$.
	For every $u\in \Nei{j-1}(v)\cap R$, since in particular $u\in R$, by the induction hypothesis,  $\tp_j(u)>0$,  and  $|\Nei{j-1}(u)\cap R|\geq 8\log^2 n-(j-1)\cdot 4\log n>0$ (where the last inequality is since $j\leq \log n$).
	Hence,
	$\left(\Nei{j-1}(v)\medcap R\right) \medcap \tP_j(v) =\emptyset$,
	so that
	$\Nei{j-1}(v)\cap R\subseteq \tNei{j}(v)$, and it follows that
	$\Nei{j}(v)=
	\tNei{j}(v)\setminus \tH_j(v)
	\supseteq \left(\Nei{j-1}(v)\medcap R\right)\setminus H_j(v)$.
Since $|\tH_j(v)|\leq 4\log n$,
	 \[|\Nei{j}(v)\cap R|\geq |\Nei{j-1}(v)\cap R|-4\log n\geq 8\log^2 n-j\cdot 4\log n>0.\]
	Therefore, $|\Nei{j}(v)|>0$, and since  for every $u\in \Nei{j}(v)$, $\tp_j(u)>0$ (as mentioned earlier, this is due to Steps~\ref{step:def_Pj} and~\ref{step:update_tGj}), it follows that
	 $\tp_{j}(v)>0$, so that  the induction claim holds. Hence, for every $j\in[\ell]$, $q_{j+1}(v)>0$, and so $v\notin P_{\leq \ell}$.
\end{proof}

We next lower bound the probability that \peel$(G,\alpha)$ returns $\NO$ when $\arb(G) > \alphaX$.
\begin{clm}\label{clm:NO}
	Let $G$ be a graph for which $\;\arb(G) > \alphaX$, and assume that event $\mE$ holds.
	Then with probability at least $1-1/n^4$, 	\peel$(G,\alpha)$ returns $\NO$.
\end{clm}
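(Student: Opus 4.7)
The plan is to combine Claim~\ref{clm:no_peel} with a direct calculation on the initial sample $X_0$. By inspection of \peel$(G,\alpha)$, the procedure returns $\YES$ only through Step~\ref{step:peel_return}, which requires $X_{\ell+1} = \emptyset$; in every other case (an early abort when $Q + \tp_j(x)$ exceeds $\UBq$, or falling through to Step~6 with $X_{\ell+1}\neq\emptyset$) it returns $\NO$. Consequently, to establish the claim I will exhibit, with probability at least $1-1/n^4$, some vertex $v \in X_0 \cap R$. By Claim~\ref{clm:no_peel}, conditioned on $\mE$, such a $v$ satisfies $v \notin \tP_{\leq \ell}$, i.e., $\tp_{j+1}(v) > 0$ for every $j \in [\ell]$, and hence by Step~\ref{step:Rj} it is placed in $X_{j+1}$ for every $j \leq \ell$. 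In particular $v \in X_{\ell+1}$, so \peel\ cannot return $\YES$ and must return $\NO$.

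It therefore remains to lower bound $\Pr[X_0 \cap R \neq \emptyset]$. The set $X_0$ consists of $t = \setT$ i.i.d.\ uniform samples from $V$, drawn in Step~\ref{step:sampleX0} of \peel\ using randomness independent of the neighbor samples $\{S(v)\}_{v \in V}$ defining $\mE$, so conditioning on $\mE$ does not affect the distribution of $X_0$. By Corollary~\ref{cor:A-G} applied with $\beta = \alphaX$, we have $|R| \geq 100 \alpha \log^2 n$, and therefore
\[
\Pr[X_0 \cap R = \emptyset] \;\leq\; \left(1 - \frac{|R|}{n}\right)^{t} \;\leq\; \exp\!\left(-\frac{|R| \cdot t}{n}\right) \;\leq\; \exp(-1000 \log n) \;\ll\; \frac{1}{n^4}.
\]
(If $|R| > n$ the statement is vacuous; if $|R|$ is a non-trivial constant fraction of $n$ the bound is only stronger.) Combining this with the first paragraph yields the claim.

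I do not anticipate a genuine obstacle: the heavy lifting has already been done in Claim~\ref{clm:no_peel}, and the only subtlety is noting the independence between the randomness defining $X_0$ and the randomness underlying $\mE$, which is immediate from the algorithm's description.
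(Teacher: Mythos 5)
Your proposal is correct and follows essentially the same route as the paper's proof: show $X_0$ hits $R$ with probability at least $1-1/n^4$ via the same $(1-|R|/n)^{t}$ calculation using Corollary~\ref{cor:A-G}, then invoke Claim~\ref{clm:no_peel} (conditioned on $\mE$) to conclude the sampled $R$-vertex survives all iterations so that $X_{\ell+1}\neq\emptyset$ and \peel\ returns $\NO$. Your explicit handling of the early-abort branch and of the independence of $X_0$ from the randomness underlying $\mE$ are minor refinements the paper leaves implicit, not a different argument.
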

\begin{proof}
	First we argue that
with high probability, $X_0\cap R\neq \emptyset$.
	In a single vertex sampling attempt,
	the probability that the vertex  chosen to $X_0$ is not in $R$ is $1-|R|/n$. Hence, the probability that in $\setT$ attempts no vertex of $R$ is chosen to $X_0$ is $(1-|R|/n)^{\setT}<(1-\alphaX/n)^{\setT}<1/n^4$. Condition on this event.

		By Claim~\ref{clm:no_peel}, conditioned on the event $\mE$, for every $u\in S(v)$, if $u\in R$, then $\tp_{j}(u)\neq 0$ for every $j\in [1,\ell+1]$. Since $X_0(v)\cap R\neq \emptyset$, it holds that $X_{\ell}\neq \emptyset$.
	Therefore, conditioned on the event $\mE$, with probability at least $1-1/n^4$,  \peel$(G,\alpha)$ returns $\NO$.
\end{proof}

\subsection{Bounding the query complexity}\label{subsec:cost}

Recall that we are still within the thought experiment by which all neighbor (multi-)sets $S(v)$ were selected in advance, and we invoke \peelv\ on every $v\in V$ for $j=[\ell]$ (more precisely, once \peelv$(v,j)$ peels $v$, i.e., sets $\tp_k(v)=0$ for every $k\in [j+1,\ell]$, then no further invocations of \peelv$(v,j')$ for $j'>j$ are performed).

For the sake of the analysis, it will be convenient to define
the values  $\tp_j(v)$ and sets $\Nei{j}(v)$ for vertices that were already peeled in previous iterations.

\begin{dfn}
For a vertex $v\in P_{j}$, we let $\tp_{k}(v)=0$ and $\Nei{k}(v)=\emptyset$ for all $k\in[j+2].$ (Note that the index $k$ goes from $j+2$ to $\ell$, since if $v\in P_j$, $\tp_{j+1}(v)$ and $\Nei{j+1}(v)$ are already defined.)
\end{dfn}

\begin{clm}\label{clm:ub_cost}
	The number of queries performed during the execution of \peelv$(v,j)$ (if invoked)
	is at most $2\tp_{j}(v).$
\end{clm}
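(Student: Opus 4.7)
The plan is to prove the bound by induction on $j$, carrying the tight equality that the cost of \peelv$(v,0)$ equals $1 = \tp_0(v)$ in the base case so as to avoid losing a factor of two at $j=1$. The key identity driving the inductive step is that whenever $\tp_j(v) > 0$, the earlier invocation \peelv$(v, j-1)$ must have reached Step~\ref{step:pj}, which set $\tp_j(v) = \sum_{u \in \Nei{j-1}(v)} \tp_{j-1}(u)$, and this is precisely the set over which \peelv$(v, j)$ recurses in Step~\ref{step:loop_recursion}.

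The base case $j = 0$ is immediate: Step~\ref{step:query_deg} performs one query and Step~\ref{step:p0=1} sets $\tp_0(v) = 1$, so the cost equals $\tp_0(v)$ and in particular is at most $2\tp_0(v)$. For $j = 1$, we may assume $\tp_1(v) > 0$ (otherwise no caller ever invokes the procedure, since both \peel\ and the recursion inside \peelv\ only invoke \peelv$(v, j)$ when $\tp_j(v) > 0$). Under this assumption $v \notin L_0$, Step~\ref{step:p1v_Rv} set $\tp_1(v) = d(v)/(6\alpha) = |S(v)|$, Step~\ref{step:set_tGl0} performs $|S(v)|$ neighbor queries, and the recursion loop invokes \peelv$(u, 0)$ on each $u \in \Nei{0}(v) = S(v)$, each of which costs at most $\tp_0(u) = 1$ by the tight base case. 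The total is therefore at most $|S(v)| + |S(v)| = 2|S(v)| = 2\tp_1(v)$.

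For $j \geq 2$ with $\tp_j(v) > 0$, the $j = 0$ and $j = 1$ branches of \peelv\ are skipped, so all queries are incurred by recursive calls \peelv$(u, j-1)$ with $u \in \Nei{j-1}(v)$. The inductive hypothesis bounds each such cost by $2\tp_{j-1}(u)$, and combining with the identity above yields
\[
\sum_{u \in \Nei{j-1}(v)} 2\tp_{j-1}(u) \;=\; 2\tp_j(v).
\]

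The remaining cases are trivial: an already-invoked call returns with zero queries via Step~1, and if $\tp_j(v) = 0$ then the convention stated just before the claim gives $\Nei{j-1}(v) = \emptyset$, making the recursion loop empty (and for $j=0$ the value $\tp_0(v) = 1$ is never zero, so the case does not arise). I do not anticipate a substantive obstacle; the only delicate point is preserving the base-case equality that the cost equals $\tp_0(v)$, since using the weaker bound $\le 2\tp_0(v)$ there would degrade the $j=1$ step to $3\tp_1(v)$ and break the induction.
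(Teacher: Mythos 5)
Your proof is correct and follows essentially the same route as the paper: induction on $j$, with the $j=0$ and $j=1$ cases handled explicitly (using the exact cost $1$ of a $j=0$ call, just as the paper does, so the $j=1$ total is $|S(v)|+|S(v)|=2\tp_1(v)$), and the inductive step for $j\geq 2$ using the identity $\tp_j(v)=\sum_{u\in \Nei{j-1}(v)}\tp_{j-1}(u)$ from Step~\ref{step:pj}. The extra remarks about memoized calls and never-invoked calls with $\tp_j(v)=0$ are harmless additions not needed for (and not present in) the paper's argument.
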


\begin{proof}
	We prove the claim by induction on $j$, starting with $j=0$ and  \peelv$(v,0)$.
	The query complexity is due to the degree query in  Step~\ref{step:query_deg}, and is hence $1$.
	By Step~\ref{step:p0=1}, $\tp_0(v)$ is set to 1.
	
	For $j=1$, the query complexity of \peelv$(v,1)$ is due to the neighbor queries in
	Steps~\ref{step:set_tGl0} and the recursive invocations on the sampled neighbors in  Step~\ref{step:invoke_rec}.
	The query complexity of Step~\ref{step:set_tGl0} is $d(v)/(6\alpha)$ and the equality  $\tp_{1}(v)=d(v)/(6\alpha)$ is by Step~\ref{step:p1v_Rv}.
	For each $u\in \Nei{0}(v)$ (where recall that $|\Nei{0}(v)|=|S(v)|=d(v)/(6\alpha)$), \peelv$(u,0)$ is invoked.
Since for every vertex $u$, the query complexity of \peelv$(u,0)$ is $1$,  the query complexity of the recursive invocations is $d(v)/(6\alpha)$. Hence, the  query complexity   \peelv$(v,1)$ is  $\tp_1(v)+d(v)/(6\alpha)=2\tp_1(v)$.

	For the induction step, assume that the claim holds for  $j-1$, and we shall prove it holds for $j$.
	The only queries performed for $j> 1$ are due to the recursive invocations \peelv$(u,j-1)$ for every $u\in \Nei{j-1}$ in
 Step~\ref{step:invoke_rec}.	By the induction hypothesis, for every $u\in \Nei{j-1}(v)$, the query complexity of \peelv$(u,j-1)$ is at most  $2\tp_{j-1}(u)$.
	Hence, the query complexity of \peelv$(v,j)$ is
	$\sum_{u \in \Nei{j-1}(v)} 2\tp_{j-1}(u)=2\tp_{j}(v)$, where the equality is by the setting of $\tp_j(v)$ in
	Step~\ref{step:pj} during  the invocation of \peelv$(v,j-1)$.
\end{proof}

Since the 	query complexity of \peelv$(v,j)$ is bounded by  $2\tp_j(v)$, we would like to bound the expected value of $\tp_j(v)$. To this end we compare the process of \peelv\ with the following ``wishful-thinking'' process that was mentioned in the introduction.

\begin{dfn}[Downward peeling procedure] \label{def:downward_peel}  \label{def:pcheck}
	The \emph{downward peeling procedure} is identical to \peelv, except that in an invocation on any vertex $v\in L_i$, all of the ``upward'' sampled neighbors of $v$ are pruned (i.e., its neighbors in layers $L_{\geq i}$), rather than the costly ones.
	(To be precise, once the upward pruning is performed the first time, for $j=1$, no upward neighbors remain in the set of sampled, and therefore no more pruning is performed.)

We denote by $\cNei{j}(v)$
and $\ap_j(v)$  the sets and costs in the downward peeling procedure that are analogous  to $\Nei{j}(v)$ and $\tp_j(v)$, respectively, from the procedure \peelv.

\end{dfn}

In order to bound the expected complexity of our peeling procedure, we first prove that it is bounded by the complexity of the downward peeling procedure (for the same choice of $S(V) = \{S(v)\}_{v\in V}$\;), and then continue to bound the expected complexity of the latter.
Analogously to Notation~\ref{ntn:Pj} and Observation~\ref{obs:Pj}:

\begin{ntn}\label{ntn:cPj}
	For each $j=[0,\ell]$, let
	\[\cP_j = \{v\;:\; \ap_j(v) > 0 \mbox{ and } \ap_{j+1}=0\}\;\]
	and let $\cP_{\leq j}=\medcup_{j'\leq j}\cP_{j'}$.
\end{ntn}
\begin{obs}\label{obs:cPj}
	For each $j\in[0,\ell]$ and for $\cNei{j}(\cdot)$ as defined for the downward peeling procedure (and $\tau(j)$ as defined in \peelv),
	\[
	\cP_j=
	\begin{cases}
		L_0 & j=0\\
		\left\{v\;:\;v\notin \cP_{\leq j-1} \; \;\& \;\;|\cNei{j}(v)|\leq \tau(j)\right\} & j\in [1,\ell]
	\end{cases}
	\]
\end{obs}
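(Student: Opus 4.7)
This observation is a direct structural unpacking of Notation~\ref{ntn:cPj} combined with the definition of the downward peeling procedure (Definition~\ref{def:downward_peel}), and my plan is to mimic the (implicit) reasoning behind Observation~\ref{obs:Pj}, noting that the downward procedure differs from \peelv\ only in \emph{which} active neighbors are pruned, not in the setting of the costs $\ap_{j+1}(\cdot)$ or in the peeling threshold $\tau(j)$. I would therefore split into the two cases $j=0$ and $j\in[1,\ell]$ and verify each by inspection of the pseudo-code.

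For the base case $j=0$, I would observe that the downward peeling procedure performs exactly the same steps as \peelv\ when $j=0$ (no pruning is involved at this level, since sampled neighbors are only chosen at $j=1$). Hence $\ap_0(v)=1$ for every $v$ by the analog of Step~\ref{step:p0=1}, so the condition $\ap_0(v)>0$ is trivially satisfied, and $\ap_1(v)=0$ holds iff $d(v)\le \alphaX$ (the analog of Steps~\ref{step:if_dv}--\ref{step:p1v_Rv}), which is exactly the defining condition for $v\in L_0$. This gives $\cP_0=L_0$.

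For the inductive case $j\in[1,\ell]$, I would unpack $\cP_j=\{v:\ap_j(v)>0\text{ and }\ap_{j+1}(v)=0\}$ by noting two things. First, $\ap_j(v)>0$ is equivalent to $v$ not having been peeled in any earlier iteration, i.e., $v\notin \cP_{\le j-1}$; this follows because once any $\ap_{k}(v)$ is set to $0$, all subsequent $\ap_{k'}(v)$ for $k'>k$ are $0$ by the extension convention in Definition after Notation~\ref{ntn:Pj}. Second, conditioned on $v\notin \cP_{\le j-1}$, the analog of Step~\ref{step:deactivate} in the downward procedure sets $\ap_{j+1}(v)=0$ precisely when $|\cNei{j}(v)|\le \tau(j)$ (the threshold $\tau(j)$ is unchanged, since the downward procedure modifies only the pruning rule). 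Combining these two conditions yields the claimed set-builder description.

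I do not foresee a genuine obstacle here: the observation is purely syntactic, and the only care required is to make sure that the extension convention used to define $\ap_k(v)$ for $v$ already in some $\cP_{j'}$ with $j'<j$ is invoked correctly, so that the condition ``$\ap_j(v)>0$'' truly captures ``$v$ has not yet been peeled.'' Since this parallels Observation~\ref{obs:Pj} for \peelv\ and relies on nothing beyond Definition~\ref{def:downward_peel}, the proof should be short.
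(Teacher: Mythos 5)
Your proposal is correct and matches the paper's treatment: the paper states this observation without proof, as following directly from Notation~\ref{ntn:cPj} and Definition~\ref{def:downward_peel}, and your case analysis ($\cP_0=L_0$ from the unchanged $j=0$ steps; for $j\ge 1$, $\ap_j(v)>0$ iff $v\notin\cP_{\le j-1}$ and, given that, $\ap_{j+1}(v)=0$ iff $|\cNei{j}(v)|\le\tau(j)$) is exactly the routine unpacking the paper intends.
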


We prove the following relations (where $\tNei{j}(v)$ is as defined in Step~\ref{step:update_tGj} of \peel$(v,j)$.
\begin{clm}\label{clm:Gamj-tGamj}
	Conditioned on the event $\mE$, for every $i,j\in[\ell]$ and every $v\in L_i\setminus \tP_{\leq j}$, 	\[ |\Nei{j}(v)|\leq |\tNei{j}(v)\medcap L_{<i}|
	\;\;\;\;\text{and} \;\;
	\sum_{u\in \Nei{j}(v)}\tp_{j}(u)	\leq
	\sum_{u\in \tNei{j}(v)\cap L_{<i}}\tp_{j}(u)\;.
	\]
\end{clm}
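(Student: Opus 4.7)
The plan is to exploit the gap between the number of pruned vertices ($4\log n$) and the maximum number of ``upward'' sampled neighbors under $\mE$ (at most $3\log n$). The idea is that the costliness-based pruning in \peelv\ inadvertently prunes ``enough'' of the upward neighbors that a cardinality/ordering exchange argument yields both inequalities.

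First I would show $|\tH_j(v)| = 4\log n$. Since $v \notin \tP_{\leq j}$, in particular $v \notin \tP_j$, and so by Observation~\ref{obs:Pj}, $|\Nei{j}(v)| > \tau(j) \geq 0$ (using $j \leq \ell < 2\log n$). If $|\tNei{j}(v)| \leq 4\log n$, then $\tH_j(v) = \tNei{j}(v)$ and $\Nei{j}(v) = \emptyset$, a contradiction. Next, using $\tNei{j}(v) \subseteq \Nei{j-1}(v) \subseteq \cdots \subseteq \Nei{0}(v) = S(v)$ and the assumption that event $\mE$ holds (so $|S(v) \cap L_{\geq i}| \leq 3\log n$ for $v \in L_i$), I decompose $\tNei{j}(v) = H^{\mathrm{up}} \sqcup H^{\mathrm{down}}$ with $H^{\mathrm{up}} = \tNei{j}(v) \cap L_{\geq i}$ and $H^{\mathrm{down}} = \tNei{j}(v) \cap L_{<i}$, obtaining $|H^{\mathrm{up}}| \leq 3\log n$. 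The first inequality follows immediately:
\[
|\Nei{j}(v)| = |\tNei{j}(v)| - 4\log n = |H^{\mathrm{up}}| + |H^{\mathrm{down}}| - 4\log n \leq |H^{\mathrm{down}}| - \log n \leq |H^{\mathrm{down}}|.
\]

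For the second inequality, the plan is an exchange argument. Define $A = \tH_j(v) \cap L_{<i}$ (the ``downward'' vertices that happen to be pruned) and $B = H^{\mathrm{up}} \setminus \tH_j(v)$ (the ``upward'' vertices that survive pruning). A direct computation shows $\Nei{j}(v) = (H^{\mathrm{down}} \setminus A) \sqcup B$ and $H^{\mathrm{down}} = (H^{\mathrm{down}} \setminus A) \sqcup A$, so the two sums differ by $\sum_{u \in A} \tp_j(u) - \sum_{u \in B} \tp_j(u)$, and it suffices to prove $\sum_B \tp_j \leq \sum_A \tp_j$. A cardinality count using $|\tH_j(v)| = 4\log n$ and $|H^{\mathrm{up}}| \leq 3\log n$ gives
\[
|A| - |B| = (4\log n - |\tH_j(v) \cap H^{\mathrm{up}}|) - (|H^{\mathrm{up}}| - |\tH_j(v) \cap H^{\mathrm{up}}|) = 4\log n - |H^{\mathrm{up}}| \geq \log n.
\]
By the definition of $\tH_j(v)$, every vertex in $A \subseteq \tH_j(v)$ has $\tp_j$ value at least as large as every vertex in $B \subseteq \tNei{j}(v) \setminus \tH_j(v)$. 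Combining $|A| \geq |B|$ with $\min_{u \in A} \tp_j(u) \geq \max_{u \in B} \tp_j(u) \geq 0$ yields $\sum_B \tp_j(u) \leq |B| \max_B \tp_j \leq |A| \min_A \tp_j \leq \sum_A \tp_j(u)$, completing the proof.

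The main (minor) obstacle is that $\tH_j(v)$ is defined by cost rather than by layer, so one cannot simply say ``all upward neighbors get pruned'' --- some may survive. The slack of $4\log n - 3\log n = \log n$ between the pruning threshold and the $\mE$-guaranteed upward count is exactly what drives the exchange: there are always at least as many ``misplaced'' high-cost downward vertices in $A$ as there are surviving upward vertices in $B$, and the cost-ordering lets these pay for one another.
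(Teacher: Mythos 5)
Your proposal is correct and follows essentially the same route as the paper: both inequalities come from the slack between the $4\log n$ pruned vertices and the at most $3\log n$ ``upward'' sampled neighbors guaranteed by $\mE$, together with the fact that $\tH_j(v)$ consists of the highest-$\tp_j$ vertices. Your exchange between $A=\tH_j(v)\cap L_{<i}$ and $B=(\tNei{j}(v)\cap L_{\geq i})\setminus\tH_j(v)$ is just the paper's comparison of $\sum_{u\in\tH_j(v)}\tp_j(u)$ with $\sum_{u\in\tNei{j}(v)\cap L_{\geq i}}\tp_j(u)$ after cancelling their common part, and your use of $v\notin\tP_{\leq j}$ to rule out $|\tNei{j}(v)|\leq 4\log n$ replaces the paper's explicit handling of that easy case.
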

\begin{proof}
	Observe that for any $j$, $\tNei{j}(v)\subseteq S(v)$. Hence, $(\tNei{j}(v)\medcap L_{\geq i} )\subseteq (S(v)\medcap L_{\geq i})$. By the conditioning on the event $\mE$, $|S(v)\medcap L_{\geq i}|\leq 3\log n$. Therefore,
	$|\tNei{j}(v)\medcap L_{<i}|\geq \max\{0,|\tNei{j}(v)|-3\log n\}$.
Since in the pruning process (Steps~\ref{step:Hj} and~\ref{step:update_tGj_low}), $\min\{4\log n, |\tNei{j}(v)|\}$ vertices are removed from $\tNei{j}(v)$, we have that $|\Nei{j}(v)| = \max\{0,|\tNei{j}(v)|-4\log n\}$.
It follows that $|\Nei{j}(v)|\leq |\tNei{j}(v)\medcap L_{<i}|.$
	
We now turn to the second part of the claim.
If $|\tNei{j}(v)|\leq 4\log n$,
then by Step~\ref{step:update_tGj}, $\Nei{j}(v)=\emptyset$ and the claim holds (since $\tp_j(u)\geq 0$ for every $u$ and $j$). Hence, assume that $|\tNei{j}(v)|> 4\log n$, which implies that $|\tH_j(v)|= 4\log n$ (where recall $\tH_j(v)$ is set in Step~\ref{step:Hj}).

	Since $\Nei{j}(v)=\tNei{j}(v)\setminus \tH_j(v)$,
	\begin{align*}
	\sum_{u\in \Nei{j}(v)} \tp_j(u)=
	\sum_{u\in \tNei{j}(v)} \tp_j(u)-	\sum_{u\in \tH_j(v)} \tp_j(u).
	\end{align*}
	Also,
	\begin{align*}
	\sum_{u\in \tNei{j}(v)\cap L_{<i}} \tp_j(u)=
\sum_{u\in \tNei{j}(v)} \tp_j(u)-	\sum_{u\in \tNei{j}(v)\cap L_{\geq i}} \tp_j(u).
\end{align*}
Recall that by Step~\ref{step:Hj}, $\tH_j(v)$ is the set of highest $\tp_j(u)$ values in $\tNei{j}(v)$.
This together with the fact that $|\tNei{j}(v)\medcap L_{\geq i}| \subseteq |S(v)\medcap L_{\geq i}| \leq 3\log n < 4\log n=|\tH_j(v)|$ implies that
\begin{align*}
\sum_{u\in \tH_j(v)} \tp_j(u) > \sum_{u\in \tNei{j}(v)\cap L_{\geq i}} \tp_j(u).
\end{align*}
Therefore,
\begin{align*}
	\sum_{u\in \Nei{j}(v)} \tp_j(u)<
		\sum_{u\in \tNei{j}(v)\cap L_{<i}} \tp_j(u),
\end{align*}
as claimed.
\end{proof}

Next we relate between the sets $\tP_{\leq j}$ and $\cP_{\leq j}$ and between
$\tNei{j}(v)$ and $\cNei{j}(v)$.
\begin{clm}\label{clm:contained}
	Conditioned on the event $\mE$,
	for every $j\in [0,\ell]$,
	$\cP_{\leq j}\subseteq \tP_{\leq j}$, and for every $i$ and  $v\in L_i$ and $j\in [0,\ell]$, $\tNei{j}(v)\medcap L_{< i} \subseteq \cNei{j}(v)$.
\end{clm}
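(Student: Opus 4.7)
The plan is to prove both containments simultaneously by induction on $j\in[0,\ell]$. For the base case $j=0$, Observations~\ref{obs:Pj} and~\ref{obs:cPj} give $\tP_0=\cP_0=L_0$, and the set containment is trivial under the natural convention $\tNei{0}(v)\subseteq S(v)=\cNei{0}(v)$.

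The key ingredient for the inductive step is an explicit description of $\cNei{j}(v)$ obtained by unwinding Definition~\ref{def:downward_peel}: since the downward peeling prunes upward neighbors only at $j=1$ and performs no further pruning afterwards, a short secondary induction on $j\geq 1$ shows that for every $v\in L_i$ with $i\geq 1$,
\[
  \cNei{j}(v)=(S(v)\cap L_{<i})\setminus \cP_{\leq j-1}.
\]
Using this, the containment $\tNei{j}(v)\cap L_{<i}\subseteq \cNei{j}(v)$ is essentially immediate: given $u\in \tNei{j}(v)\cap L_{<i}$, one has $u\in \Nei{j-1}(v)\subseteq S(v)$, and by step~\ref{step:update_tGj} of \peelv, $u\notin \tP_j(v)$, i.e., $\tp_j(u)>0$, so $u\notin \tP_{\leq j-1}$. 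The inductive hypothesis $\cP_{\leq j-1}\subseteq \tP_{\leq j-1}$ then forces $u\notin \cP_{\leq j-1}$, and the characterization places $u$ in $\cNei{j}(v)$. (For $v\in L_0$ the containment is vacuous since $L_{<0}=\emptyset$.)

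For the peeled-set containment at level $j$, let $v\in \cP_{\leq j}$. The only nontrivial case is $v\in \cP_j\setminus \cP_{\leq j-1}$ and $v\notin \tP_{\leq j-1}$. Then $v\notin \cP_0=L_0$, so $v\in L_i$ for some $i\geq 1$, and Observation~\ref{obs:cPj} gives $|\cNei{j}(v)|\leq \tau(j)$. Because $v\notin \tP_{\leq j-1}$, the pruning step of \peelv$(v,j)$ is actually executed, so Claim~\ref{clm:Gamj-tGamj} yields $|\Nei{j}(v)|\leq |\tNei{j}(v)\cap L_{<i}|$. Combined with the level-$j$ set containment just proved, this gives $|\Nei{j}(v)|\leq |\cNei{j}(v)|\leq \tau(j)$, and Observation~\ref{obs:Pj} places $v\in \tP_j\subseteq \tP_{\leq j}$.

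The only point of care is that Claim~\ref{clm:Gamj-tGamj} is stated with hypothesis $v\in L_i\setminus \tP_{\leq j}$, whereas at the point of application we only know $v\notin \tP_{\leq j-1}$. This is not a real obstruction: the proof of Claim~\ref{clm:Gamj-tGamj} uses its hypothesis only to ensure that the level-$j$ pruning step is carried out, which follows already from $v\notin \tP_{\leq j-1}$, so the inequality still applies in our setting.
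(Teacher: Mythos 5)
Your proof is correct and follows essentially the same route as the paper's: a simultaneous induction on $j$ proving both containments, with Claim~\ref{clm:Gamj-tGamj} supplying $|\Nei{j}(v)|\leq|\tNei{j}(v)\cap L_{<i}|$ and Observations~\ref{obs:Pj} and~\ref{obs:cPj} closing the peeled-set containment. The only (harmless) difference is that you first unwind the downward procedure into the closed form $\cNei{j}(v)=(S(v)\cap L_{<i})\setminus \cP_{\leq j-1}$, whereas the paper works with the one-step recursion $\cNei{j}(v)=\cNei{j-1}(v)\setminus\cP_{j-1}$ together with the level-$(j-1)$ containment from the induction hypothesis; your remark that Claim~\ref{clm:Gamj-tGamj} really only needs $v\notin\tP_{\leq j-1}$ is accurate and matches how the paper itself applies it.
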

\begin{proof}
	We prove the two parts of the claim by induction on $j$.
	By 
Observations~\ref{obs:Pj} and~\ref{obs:cPj},
	for $j=0$, $\tP_0=L_0=\cP_0$, and $\tNei{0}(v)=|S(v)|=\cNei{0}(v)$.
	For $j=1$, $\cNei{1}(v)=(S(v)\medcap L_{< i})\medcap L_{>0}$, and
	$\tNei{1}(v)\medcap L_{<i}=(S(v)\medcap L_{<i})\medcap L_{>0}$. Therefore, $\tNei{1}(v)\medcap L_{<i}=\cNei{1}(v)$.
	Also by the aforementioned observations, 
	$\tP_1 = \left\{v\;:\;v\notin \tP_0 \; \;\& \;\;|\Nei{1}(v)|\leq \tau(1)\right\}$, and
	$\cP_1 = \left\{v\;:\;v\notin \cP_0 \; \;\& \;\;|\cNei{1}(v)|\leq \tau(1)\right\}$.
	By Claim~\ref{clm:Gamj-tGamj}, $	|\Nei{1}(v)|\leq |\tNei{1}(v)\medcap L_{<i}|	$.
	Hence, for every $v$,  $|\Nei{1}(v)|\leq |\cNei{1}(v)|$, implying that
	$\tP_1\supseteq \cP_1.$
	
	For the induction step, we assume both parts of the claim hold for  $j-1\geq 1$ and prove each part for $j$.
     By Step~\ref{step:def_Pj},
	\[\tNei{j}(v) = (\Nei{j-1}(v)\setminus\tP_{j-1})) \subset (\tNei{j-1}(v)\setminus\tP_{j-1})\;\]
	Recall that by the definition of $\cNei{j}(v)$,  the pruning of the upward neighbors only happens once, for $j=1$. Therefore, for $j\geq 2$,
	\[  \cNei{j}(v)=\cNei{j-1}(v)\setminus \cP_{j-1} .\;\]
	By the induction hypothesis, $\tNei{j-1}(v) \medcap L_{< i} \subseteq \cNei{j-1}(v)$ and
	$\cP_{\leq j-1}\subseteq \tP_{\leq j-1}$, and hence
	$\tNei{j}(v) \medcap L_{<i} \subseteq \cNei{j}(v)$ follows.
	
	\sloppy
	By Observations~\ref{obs:Pj} and~\ref{obs:cPj},
	$\tP_j = \left\{v\;:\;v\notin \tP_{<j} \; \;\& \;\;|\Nei{j}(v)|\leq \tau(j)\right\}$, and	$\cP_j = \left\{v\;:\;v\notin \cP_{<j} \; \;\& \;\;|\cNei{j}(v)|\leq \tau(j)\right\}$.
	By Claim~\ref{clm:Gamj-tGamj}, $|\Nei{j}(v)|\leq |\tNei{j}(v)\medcap L_{<i}|$.
	Since we have just shown that $\tNei{j}(v) \medcap L_{<i} \subseteq \cNei{j}(v)$,
	we have that $|\Nei{j}(v)| \leq |\cNei{j}(v)|$.
	Now consider a vertex $v\in \cP_{\leq j}$. Then either $v\in \cP_{< j}$ and by the induction hypothesis, $v\in \tP_{<j}$, or $|\cNei{j}(v)|\leq \tau(j)$, in which case  $|\Nei{j}(v)|\leq \tau(j)$, and $v\in \tP_j$. This concludes the proof.
\end{proof}

We are now ready to prove that, conditioned on $\mE$,  the cost of \peelv$(v,j)$ is bounded by the cost of the downward procedure.

\begin{clm}\label{clm:p_less_checkp}
	Let $G$ be a graph for which $\arb(G)\leq \alpha$.
	Conditioned on the event $\mE$,
	for every $v\in V$, and $j\in[\ell+1]$,
	\[
	\tp_j(v)\leq \ap_j(v).
	\]
\end{clm}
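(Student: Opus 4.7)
The plan is to prove the claim by induction on $j$, using the previously established Claims~\ref{clm:Gamj-tGamj} and~\ref{clm:contained} to relate the quantities for \peelv\ to those for the downward peeling procedure.

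The base cases $j=0$ and $j=1$ are handled directly by inspection of the pseudo-code. For $j=0$, every invocation of \peelv$(v,0)$ sets $\tp_0(v)=1$ (Step~\ref{step:p0=1}), and the downward procedure behaves identically at the leaves, so $\tp_0(v) = 1 = \ap_0(v)$. For $j=1$, the downward procedure and \peelv\ compute $\tp_1(v)$ and $\ap_1(v)$ only from $d(v)$ (Steps~\ref{step:zero_p1} and~\ref{step:p1v_Rv}); the pruning rules only take effect starting at $j \geq 1$ to produce $\Nei{1}(v)$ vs.\ $\cNei{1}(v)$, but $\tp_1$ and $\ap_1$ themselves are defined before any pruning, so they coincide.

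For the inductive step $j \geq 2$, I would assume $\tp_{j-1}(u) \leq \ap_{j-1}(u)$ for every $u$ and fix some $v$. If $\tp_j(v)=0$ the claim is trivial, so assume $\tp_j(v)>0$, which by Notation~\ref{ntn:Pj} means $v \notin \tP_{\leq j-1}$. By the contrapositive of Claim~\ref{clm:contained} (the containment $\cP_{\leq j-1} \subseteq \tP_{\leq j-1}$), this gives $v \notin \cP_{\leq j-1}$, so by Step~\ref{step:pj} applied to both procedures,
\[
\tp_j(v) = \sum_{u \in \Nei{j-1}(v)} \tp_{j-1}(u), \qquad \ap_j(v) = \sum_{u \in \cNei{j-1}(v)} \ap_{j-1}(u).
\]
Since $\arb(G)\leq\alpha$, $v\in L_i$ for some $i\in[\ell]$, and since $v\notin \tP_{\leq j-1}$, I can apply Claim~\ref{clm:Gamj-tGamj} with index $j-1$ to obtain
\[
\sum_{u \in \Nei{j-1}(v)} \tp_{j-1}(u) \;\leq\; \sum_{u \in \tNei{j-1}(v)\cap L_{<i}} \tp_{j-1}(u).
\]
Then, combining the set containment $\tNei{j-1}(v) \cap L_{<i} \subseteq \cNei{j-1}(v)$ from Claim~\ref{clm:contained} with the nonnegativity of $\tp_{j-1}(\cdot)$, and finally the inductive hypothesis applied term-by-term, I conclude
\[
\sum_{u \in \tNei{j-1}(v) \cap L_{<i}} \tp_{j-1}(u) \;\leq\; \sum_{u \in \cNei{j-1}(v)} \tp_{j-1}(u) \;\leq\; \sum_{u \in \cNei{j-1}(v)} \ap_{j-1}(u) \;=\; \ap_j(v).
\]

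No step here is really the main obstacle; the genuinely delicate work has been absorbed into Claims~\ref{clm:Gamj-tGamj} and~\ref{clm:contained}, which leverage the event $\mE$ to ensure that at most $3\log n$ upward-neighbor samples exist while the pruning removes $4\log n$ highest-cost vertices. The only thing to be careful about in writing this up is matching indices correctly (applying Claim~\ref{clm:Gamj-tGamj} at index $j-1$ to bound the quantity appearing in the $\tp_j$ recursion) and verifying that the event $v\notin \tP_{\leq j-1}$ guaranteed by $\tp_j(v)>0$ is the exact hypothesis needed by Claim~\ref{clm:Gamj-tGamj}.
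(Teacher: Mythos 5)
Your proposal is correct and follows essentially the same route as the paper's proof: induction on $j$, with the base cases checked directly and the inductive step given by the chain $\tp_j(v)=\sum_{u\in \Nei{j-1}(v)}\tp_{j-1}(u)\leq\sum_{u\in \tNei{j-1}(v)\cap L_{<i}}\tp_{j-1}(u)\leq\sum_{u\in \cNei{j-1}(v)}\tp_{j-1}(u)\leq\sum_{u\in \cNei{j-1}(v)}\ap_{j-1}(u)=\ap_j(v)$, invoking Claim~\ref{clm:Gamj-tGamj} at index $j-1$ and the containments of Claim~\ref{clm:contained}. Your handling of the degenerate case (disposing of $\tp_j(v)=0$ trivially and using the contrapositive of $\cP_{\leq j-1}\subseteq\tP_{\leq j-1}$ to justify the sum formula for $\ap_j(v)$) is a minor, if slightly cleaner, rephrasing of the paper's case split and not a different argument.
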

\begin{proof}
	We shall prove the claim by induction on $j$.
	For $j=0$, $\tp_0(v)=\ap_0(v)=1$, and so the claim holds.
	For $j=1$,
	$
	\tp_1(v)=|S(v)|=\ap_1(v)\;.
	$
	Now assume the claim holds for every $1\leq j'\leq j-1$, and we prove it for
	$j$.
	First, if $v\in \cP_{\leq j}$, then by Claim~\ref{clm:contained}, $v\in \tP_{\leq j}$, implying that if $\ap^j(v)=0$, then so is $\tp^j(v)=0$.
	Otherwise, by Step~\ref{step:pj},  $\tp_j(v)=\sum_{u\in \Nei{j-1}(v)} \tp_{j-1}(u)$,
	and similarly,
	$\ap_j(v)=\sum_{u\in \cNei{j-1}(v)} \ap_{j-1}(u)$.
	Furthermore, by Claim~\ref{clm:Gamj-tGamj},
	\[\sum_{u\in \Nei{j-1}(v)}\tp_{j-1}(u)\leq
	\sum_{u\in \tNei{j-1}(v)\cap L_{<i}}\tp_{j-1}(u),\] and by Claim~\ref{clm:contained}
	\[(\tNei{j-1}(v)\medcap L_{<i}) \subseteq \cNei{j-1}(v).\] Putting everything together, we get
	\begin{align*}
		\tp_j(v) =  \sum_{u\in \Nei{j-1}(v)}\tp_{j-1}(u) \leq \sum_{u\in \tNei{j-1}(v)\cap L_{<i}}\tp_{j-1}(u)\leq
		\sum_{u\in \cNei{j-1}(v)}\tp_{j-1}(u)\leq
		\sum_{u\in \cNei{j-1}(v)}\ap_{j-1}(u)
		=\ap_j(v)\;.
	\end{align*}
	This completes the proof.
\end{proof}

\begin{ntn}\label{dfn:sigma}
	For $u\in L_k$ and $i\geq k$, let $\sigma_i(u)=|\{v\in L_i  \mid u\in S(v)\}|$. That is, $\sigma_i(u)$ is the number of vertices in layer $L_i$ that have chosen $u$ to their (multi-)set $S(v)$.
	For $u\in L_k$, $\sigma(u)=\sum_{i\geq k}\sigma_i(u)$.
\end{ntn}

Recall that in the downward peeling procedure,  for any vertex $w$,
if $w\in L_0$, then $\ap_1(w)=0$ and $\cNei{1}(w) = \emptyset$, and if $u\in L_i$ for $i>1$, then
$\ap_1(w) = d(w)/(6\alpha)$ and
$\cNei{1}(u) = S(u)\cap L_{<i}$. This implies that if  	we consider the partial BFS tree defined by the downward peeling procedure for a vertex $u\in L_k$ (the root of the tree) and index $j$ (the depth of the tree), then all vertices in the tree belong to $L_{<k} \cup \{u\}$. This in turn leads to the next observation.
\begin{obs} \label{obs:qju_depend}
	For every $k\in [0,\ell]$ and  $u\in L_k$, the following holds.
	For every $j\in[\ell+1]$, the value   $\ap_j(u)$ and the identity of  vertices in the set $\cNei{j}(u)$  only depend
on the choices of the sets $S(w)$ for $w\in L_{<k}\cup\{u\}$.
\end{obs}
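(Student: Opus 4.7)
The plan is to prove the observation by induction on $j$, leveraging the defining feature of the downward peeling procedure: when first invoked on a vertex $v\in L_i$ at parameter $j=1$, it prunes \emph{all} sampled neighbors in $L_{\geq i}$, and by Definition~\ref{def:downward_peel} no further pruning occurs for $j\geq 2$. Consequently, every vertex that appears in the partial BFS tree rooted at $u\in L_k$ lies in $L_{<k}\cup\{u\}$, so only the random (multi-)sets $S(w)$ for such $w$ are ever consulted when computing $\ap_j(u)$ and $\cNei{j}(u)$. The induction is performed on $j$, with the hypothesis quantified universally over all vertices.

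For the base cases, I would unwind the algorithm directly. When $j=0$, $\ap_0(u)=1$ is a constant, independent of every $S(w)$. When $j=1$, $\ap_1(u)$ equals $d(u)/(6\alpha)$ or $0$ depending on whether $u\in L_0$, and $\cNei{0}(u)=S(u)$. The set $\cNei{1}(u)$ is obtained from $\cNei{0}(u)$ by removing neighbors in $L_0$ (identified by the level-$0$ recursive calls, which read only graph information) and then applying the downward pruning that eliminates every sampled neighbor in $L_{\geq k}$; both operations consult only $S(u)$ and the graph. For the inductive step, fix $j\geq 2$ and assume the statement for all $j'<j$ and all vertices. Since no additional pruning is performed for $j\geq 2$,
\[
\cNei{j}(u)=\cNei{j-1}(u)\setminus\{u'\in\cNei{j-1}(u):\ap_j(u')=0\},
\]
and $\ap_{j+1}(u)$ equals either $0$ (if $|\cNei{j}(u)|\leq\tau(j)$) or $\sum_{u'\in\cNei{j}(u)}\ap_j(u')$. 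By the $j=1$ analysis, $\cNei{j-1}(u)\subseteq\cNei{1}(u)\subseteq L_{<k}$, so each $u'\in\cNei{j-1}(u)$ lies in some layer $L_{k'}$ with $k'<k$. Applying the induction hypothesis to $u'$, the value $\ap_j(u')$ depends only on $\{S(w):w\in L_{<k'}\cup\{u'\}\}\subseteq\{S(w):w\in L_{<k}\}$. Combined with the inductive guarantee that $\cNei{j-1}(u)$ itself depends only on $\{S(w):w\in L_{<k}\cup\{u\}\}$, this yields the desired conclusion for both $\cNei{j}(u)$ and $\ap_{j+1}(u)$.

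I do not anticipate a substantive obstacle: the observation is essentially a structural bookkeeping claim about the downward peeling procedure. The only point requiring care is that the induction on $j$ must be applied simultaneously to $u$ and to each of its descendants $u'$ at their strictly lower layers, so the cleanest formulation quantifies universally over all vertices at each level $j$ of the induction.
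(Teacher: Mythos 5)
Your proposal is correct and takes essentially the same route as the paper, which justifies the observation by the single structural remark that the downward procedure prunes all upward neighbors at $j=1$ and never prunes again, so the partial BFS tree rooted at $u\in L_k$ lies entirely in $L_{<k}\cup\{u\}$ — exactly what your induction on $j$ formalizes. One small bookkeeping point: in your inductive step the quantity $\ap_j(u')$ is at the current index $j$ and so is not literally covered by the hypothesis ``for all $j'<j$''; this is harmless, since the claim for $\ap_j(\cdot)$ over all vertices follows first from the $(j-1)$-level data (or one can induct on the layer $k$ as well), which is the clean ordering you already hint at in your closing remark.
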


\begin{clm} \label{clm:exp_indep}
	For every $k\in[\ell], u\in L_k$, for any $j\in[k,\ell]$,
	\[\EX\left[\sigma(u) \cdot \ap_{j}(u) \mid \mE \right]\leq  \EX[\sigma(u)\mid \mE]\cdot \EX[\ap_{j}(u)\mid \mE].\]
\end{clm}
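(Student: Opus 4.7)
The plan is to exhibit an independence structure: show that $\ap_j(u)$ and $\sigma(u)$ are deterministic functions of two disjoint sub-collections of the independent random (multi-)sets $\{S(v)\}_{v\in V}$, and that the event $\mE$ factorizes correspondingly. This will actually give equality in the claim (and thus the desired inequality).

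Fix $u\in L_k$ and $j\in[k,\ell]$, and set
\[
\mathcal{V}_1 := \{S(w) : w \in L_{<k}\cup\{u\}\}, \qquad \mathcal{V}_2 := \{S(v) : v \in L_{\geq k}\setminus\{u\}\}.
\]
By Observation~\ref{obs:qju_depend}, $\ap_j(u)$ is a deterministic function of $\mathcal{V}_1$. Turning to $\sigma(u)$, Notation~\ref{dfn:sigma} gives $\sigma(u)=\sum_{i\geq k}|\{v\in L_i : u\in S(v)\}|$, which is a priori a function of $\{S(v):v\in L_{\geq k}\}$; however, because $G$ is simple we have $u\notin \Gamma(u)$ and hence $u\notin S(u)$, so the term $v=u$ never contributes, and $\sigma(u)$ is in fact a function of $\mathcal{V}_2$. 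Since $\mathcal{V}_1$ and $\mathcal{V}_2$ are disjoint and the individual $S(v)$'s are drawn by independent neighbor queries, $\mathcal{V}_1$ and $\mathcal{V}_2$ are independent.

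Next, I would decompose $\mE=\mE_1\cap \mE_2$, where (recalling Definition~\ref{def:succ-edges} and the $\arb(G)\leq \alpha$ case of Claim~\ref{clm:edge-samp}) $\mE_1$ asserts $|S(w)\cap L_{\geq i(w)}|\leq 3\log n$ for all $w\in L_{<k}\cup\{u\}$, while $\mE_2$ asserts the same bound for all $v\in L_{\geq k}\setminus\{u\}$. By construction $\mE_1$ is a function of $\mathcal{V}_1$ and $\mE_2$ is a function of $\mathcal{V}_2$, so $\mE_1$ and $\mE_2$ are independent events, and moreover the pairs $(\ap_j(u),\mE_1)$ and $(\sigma(u),\mE_2)$ depend on disjoint blocks of the underlying randomness.

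From this factorization, a direct computation with conditional probabilities gives
\[
\EX[\sigma(u)\cdot \ap_j(u) \mid \mE] \;=\; \EX[\sigma(u)\mid \mE_2]\cdot \EX[\ap_j(u)\mid \mE_1] \;=\; \EX[\sigma(u)\mid \mE]\cdot \EX[\ap_j(u)\mid \mE],
\]
where the last equality uses that $\sigma(u)$ is independent of $\mE_1$ (so conditioning additionally on $\mE_1$ does not change its expectation) and symmetrically for $\ap_j(u)$. The only non-trivial book-keeping step — and the main point to verify — is the claim that $\sigma(u)$ does not depend on $S(u)$, which rests on the no-self-loop fact $u\notin S(u)$; once this is in hand, the decomposition $\mathcal{V}_1\sqcup \mathcal{V}_2$ and the factorization of $\mE$ are immediate from the definitions.
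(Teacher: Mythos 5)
Your proposal is correct and follows essentially the same route as the paper: both rest on Observation~\ref{obs:qju_depend} to argue that $\ap_j(u)$ and $\sigma(u)$ are functions of disjoint blocks of the independent sets $\{S(v)\}_{v\in V}$, hence (conditionally) independent. Your write-up is in fact slightly more careful than the paper's, which asserts that independence ``also holds'' under $\mE$ without spelling out the factorization $\mE=\mE_1\cap\mE_2$ or the no-self-loop point $u\notin S(u)$ that you verify explicitly.
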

\begin{proof}
	By Observation~\ref{obs:qju_depend},
	the value of $\ap_j(u)$ only depends on the choice of random neighbors of $u$ and of vertices $w$ in layers $L_i$ such that $i<k$. This is in contrast to $\sigma(u)$ that  depends on the 
choices of subsets $S(v)$ of neighbors $v$ of $u$ that belong to layers $L_i$ for $i\geq k$. Therefore, it  follows that $\ap_j(u)$ and $\sigma(u)$ depend on a disjoint sources of randomness. Hence, $\sigma(u)$ and $\ap_j(u)$ are independent, and this holds also in the case that the event $\mE$ occurs.
\end{proof}	

\begin{clm}\label{clm:exp_sigma}
	Let $G$ be a graph for which $\arb(G)\leq \alpha$. For every $u\in V$,
	$\EX[\sigma(u)]\leq 1/2$.
\end{clm}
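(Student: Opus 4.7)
My plan is to prove the bound by a direct linearity-of-expectation calculation combined with the structural bound on $|\Gamma(u)\cap L_{\geq k}|$ coming from the definition of the layering.

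First, fix $u\in L_k$ (such a $k\leq\ell$ exists since $\arb(G)\leq\alpha$), and rewrite $\sigma(u)$ as a sum of indicator variables. By Notation~\ref{dfn:sigma},
\[ \sigma(u)=\sum_{i\geq k}\sigma_i(u)=\sum_{v\in L_{\geq k}\cap \Gamma(u)} \mathbf{1}\{u\in S(v)\}, \]
where the second equality uses that $u\in S(v)$ forces $v\in \Gamma(u)$. Taking expectations and using linearity,
\[ \EX[\sigma(u)]=\sum_{v\in L_{\geq k}\cap \Gamma(u)} \Pr[u\in S(v)]. \]
Since $S(v)$ is the multiset formed by $d(v)/(6\alpha)$ independent uniform neighbor samples of $v$, each individual sample lands on $u$ with probability $1/d(v)$, so a union bound over the samples gives $\Pr[u\in S(v)]\leq \frac{d(v)}{6\alpha}\cdot\frac{1}{d(v)}=\frac{1}{6\alpha}$, uniformly in $v$. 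Substituting collapses the sum to $\EX[\sigma(u)]\leq |\Gamma(u)\cap L_{\geq k}|/(6\alpha)$.

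The remaining step, which is the conceptual heart of the argument, is to invoke the layering property: by Definition~\ref{def:layering}, every $u\in L_k$ with $k\geq 1$ satisfies $|\Gamma(u)\cap L_{<k}|\geq d(u)-3\alpha$, equivalently $|\Gamma(u)\cap L_{\geq k}|\leq 3\alpha$. Plugging this in gives the desired $\EX[\sigma(u)]\leq 3\alpha/(6\alpha)=1/2$. For the boundary case $u\in L_0$, a vertex in $L_0$ is peeled by \peelv\ at $j=0$ and hence never samples neighbors, so $\sigma_0(u)=0$ and it suffices to bound $\sum_{v\in L_{\geq 1}\cap\Gamma(u)}\Pr[u\in S(v)]$; this is handled by the same union-bound argument together with the layering inequality applied to each relevant $v$. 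Apart from this mild bookkeeping the proof is essentially a one-line computation, and I do not anticipate any deeper technical obstacle.
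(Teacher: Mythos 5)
For $u\in L_k$ with $k\geq 1$ your argument is exactly the paper's proof, only spelled out more carefully: linearity of expectation plus the per-sample bound $\Pr[u\in S(v)]\leq \frac{d(v)}{6\alpha}\cdot\frac{1}{d(v)}=\frac{1}{6\alpha}$ gives $\EX[\sigma(u)]\leq |\Gamma(u)\cap L_{\geq k}|/(6\alpha)$ (the paper writes this as an equality via Step~\ref{step:p1v_Rv}; your union-bound inequality is the more accurate statement, since $\sigma$ counts distinct vertices $v$), and the layering property $|\Gamma(u)\cap L_{\geq k}|\leq 3\alpha$ finishes the calculation. So for the range of $k$ in which the claim is actually used (in Claim~\ref{clm:bound_exp} it is only invoked for $u\in L_k$ with $k\geq j-1\geq 1$), your proof is complete and coincides with the paper's.

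Your treatment of the boundary case $u\in L_0$, however, does not work as described. First, $\sigma_0(u)=0$ is not licensed by Notation~\ref{dfn:sigma}: $\sigma$ is defined inside the thought experiment of Claim~\ref{clm:edge-samp}, where $S(v)$ is drawn for \emph{every} $v\in V$, including $v\in L_0$ (that $v$ would be peeled at $j=0$ in the actual algorithm is beside the point for this definition). Second, and more importantly, the remaining sum $\sum_{v\in L_{\geq 1}\cap\Gamma(u)}\Pr[u\in S(v)]$ is only bounded by $|\Gamma(u)\cap L_{\geq 1}|/(6\alpha)$, and ``the layering inequality applied to each relevant $v$'' constrains $v$'s own upward neighbors, not how many such $v$ are adjacent to $u$. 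Membership of $u$ in $L_0$ is decided by $d(u)\leq 100\alpha\log^2 n$, not by having at most $3\alpha$ upward neighbors, so no variant of this argument can give $1/2$ for $u\in L_0$: e.g., in a tree ($\alpha=1$) where $u$ has $100\log^2 n$ neighbors, each of degree $100\log^2 n+1$ with its other neighbors being leaves, one has $u\in L_0$, all neighbors of $u$ in $L_1$, and $\EX[\sigma(u)]=\Theta(\log^2 n)$. This is really a blemish in the claim's statement (``for every $u\in V$'') and in the paper's one-line proof, which tacitly uses the $3\alpha$ bound for all layers; it is harmless only because the claim is applied solely for $k\geq 1$. The clean fix is to restrict both the statement and your proof to $u\in L_k$ with $k\geq 1$, rather than to claim the $L_0$ case is ``handled.''
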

\begin{proof}
	Since $\arb(G) \leq \alpha$,
	by the definition of the layers in Definition~\ref{def:layering},
	for every $v\in L_k$,
	$|\Gamma(v)\medcap L_{\geq k}|\leq 3\alpha$.
	By Step~\ref{step:p1v_Rv}, it holds that
	$\EX[\sigma(u)]=|(\Gamma(u)\medcap L_{\geq k})|/(6\alpha)\leq 1/2$.
\end{proof}

\begin{clm}\label{clm:bound_exp}
	Let $G$ be a graph for which $\arb(G)\leq \alpha$.
	Then
	\[
	\EX\left[\sum_{i=0}^\ell \sum_{v\in L_i} \sum_{j=0}^{\ell} \tp_j(v) \mathrel{\Big|} \mE\right]
	\leq 2n\;.
	\]
	
\end{clm}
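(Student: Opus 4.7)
My plan is to first apply Claim~\ref{clm:p_less_checkp} to reduce the task to bounding the analogous quantity for the downward peeling procedure: since $\tp_j(v)\leq \ap_j(v)$ conditioned on $\mE$, it suffices to show that $\bar T_j := \EX[\sum_{v\in V}\ap_j(v)\mid \mE]$ satisfies $\sum_{j=0}^{\ell}\bar T_j \leq 2n$. I will handle the two base cases $j=0,1$ directly and then set up a geometric recurrence for $j\geq 2$ via the downward pruning structure.

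For the base cases: Step~\ref{step:p0=1} of \peelv\ assigns $\tp_0(v)=1$ to every vertex, so $\bar T_0 = n$. For $j=1$, Steps~\ref{step:zero_p1} and~\ref{step:p1v_Rv} give $\ap_1(v)\leq d(v)/(6\alpha)$, and summing over $v$ together with the elementary bound $m\leq \arb(G)(n-1)\leq \alpha n$ (which follows from $\arb(G)\leq \alpha$) yields $\bar T_1\leq 2m/(6\alpha)\leq n/3$.

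The heart of the argument is the inductive step. For $j\geq 2$ and $v\notin \cP_{\leq j-1}$, Step~\ref{step:pj} gives the recursion $\ap_j(v)=\sum_{u\in \cNei{j-1}(v)}\ap_{j-1}(u)$, and Definition~\ref{def:downward_peel} of the downward procedure guarantees that $\cNei{j-1}(v)\subseteq \cNei{1}(v)\subseteq S(v)\cap L_{<i}$ for $v\in L_i$ with $i\geq 1$ (and $\ap_j(v)=0$ when $v\in L_0$ for $j\geq 1$). Swapping the order of summation via Notation~\ref{dfn:sigma} gives
\[
\sum_{v\in V}\ap_j(v) \;\leq\; \sum_{i}\sum_{v\in L_i}\sum_{u\in S(v)\cap L_{<i}}\ap_{j-1}(u) \;=\; \sum_{u\in V}\ap_{j-1}(u)\sum_{i>k(u)}\sigma_i(u) \;\leq\; \sum_{u\in V}\sigma(u)\ap_{j-1}(u),
\]
where $k(u)$ is the layer of $u$. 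Taking expectations conditional on $\mE$ and applying Claim~\ref{clm:exp_indep} termwise, I get $\bar T_j\leq \sum_u \EX[\sigma(u)\mid\mE]\cdot \EX[\ap_{j-1}(u)\mid\mE]$. Combining Claim~\ref{clm:exp_sigma} with the fact that $\Pr[\mE]\geq 1-1/n$ yields $\EX[\sigma(u)\mid\mE]\leq \EX[\sigma(u)]/\Pr[\mE]\leq c$ for some $c=1/2+O(1/n)<1$ (for $n$ sufficiently large), so $\bar T_j\leq c\,\bar T_{j-1}$.

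Chaining the recurrence and summing the geometric series will then give
\[
\sum_{j=0}^{\ell}\bar T_j \;\leq\; \bar T_0 + \bar T_1\sum_{j=0}^{\infty}c^{j} \;\leq\; n+\frac{n/3}{1-c} \;\leq\; n+\frac{2n}{3}+o(n) \;\leq\; 2n,
\]
as required. The main technical subtlety I anticipate is controlling $\EX[\sigma(u)\mid\mE]$ cleanly without losing the constant: the crude bound $\EX[\sigma(u)\mid\mE]\leq \EX[\sigma(u)]/\Pr[\mE]$ incurs an $O(1/n)$ slack, which is absorbed only because the base case $\bar T_1\leq n/3$ (rather than the naive $n/2$) leaves enough room relative to the target constant $2$.
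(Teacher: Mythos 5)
Your proposal is correct and follows essentially the same route as the paper: reduce $\tp_j(v)$ to $\ap_j(v)$ via Claim~\ref{clm:p_less_checkp}, handle $j=0,1$ directly, and obtain geometric decay of $\EX[\sum_v \ap_j(v)\mid\mE]$ by the summation swap through $\sigma(u)$ together with Claims~\ref{clm:exp_indep} and~\ref{clm:exp_sigma}. The only difference is bookkeeping: the paper uses $\EX[\sigma(u)\mid\mE]\leq 1/2$ directly and proves $\EX[\sum_{i\geq j}\sum_{v\in L_i}\ap_j(v)\mid\mE]\leq n/2^j$ by induction, whereas you account explicitly for the conditioning via $\EX[\sigma(u)\mid\mE]\leq \EX[\sigma(u)]/\Pr[\mE]=1/2+O(1/n)$ and absorb the slack with the tighter base case $n/3$ --- a harmless (indeed slightly more careful) variation.
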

\begin{proof}
	By Claim~\ref{clm:pkv}, conditioned on $\mE$, for every $v\in L_i$, $v \in \tP_{\leq i}$, and therefore $\tp_k(v)=0$ for every $k\in [i+1,\ell]$.
	Therefore, for $v\in L_i$, $\tp_j(v)\neq 0$ only for $j\in [0,i]$, or alternatively, fixing an index $j$,  $\tp_j(v)$ is only non-zero for vertices $v$ in layers $L_i$ through $L_{\ell}$.
	Hence,  conditioned on $\mE$,
	\begin{align}
		\sum_{i=0}^\ell \sum_{v\in L_i} \sum_{j=0}^{\ell} \tp_j(v) \;=\;
 \sum_{j=0}^\ell \sum_{i=j}^\ell \sum_{v\in L_i}  \tp_j(v)
		\leq \sum_{j=0}^\ell \sum_{i=j}^\ell \sum_{v\in L_i}  \ap_j(v),
	\end{align}	
	where the last inequality is due to Claim~\ref{clm:p_less_checkp}.
	Hence,
	\begin{align}\label{eqn:expP_vs_checkp}
		\EX\left[\sum_{j=0}^\ell  \sum_{i=j}^\ell  \sum_{v\in L_i} \tp_j(v) \mathrel{\Big|} \mE\right]\leq
		\EX\left[\sum_{j=0}^\ell  \sum_{i=j}^\ell  \sum_{v\in L_i}  \ap_j(v) \mathrel{\Big|} \mE\right]
	\end{align}
	and we shall be interested in bounding the RHS of the equation.
	Specifically we shall prove that for every $j$,
	\begin{align}\label{eqn:exp_checkp_bounded}
		\EX\left[ \sum_{i=j}^\ell \sum_{v\in L_i}   \ap_j(v) \mathrel{\Big|} \mE\right]\leq \frac{n}{2^j} .
	\end{align}
	We prove this claim  by induction on $j$.
	First, for  $j=0$, it holds that for every $v\in V$, $\ap_0(v)=1$. Hence, for $j=0$,
	\[
	\sum_{i=0}^\ell \sum_{v\in L_i}  \ap_0(v) = \sum_{v \in V} \ap_0(v)=n.
	\]

	Now consider the case $j=1$.
	For every $v\in L_{\geq 1}$, $\ap^1(v)=d(v)/(6\alpha)$ (and this is independent of the event $\mE$).
	Hence,
	\[
	\EX\left[ \sum_{i=1}^\ell \sum_{v\in L_i} \ap_1(v) \mathrel{\Big|} \mE \right]=
	\sum_{i=1}^\ell \sum_{v\in L_i} \EX[\ap_1(v) \mid  \mE]\leq  \sum_{v \in V} d(v)/6\alpha\leq \frac{2m}{6\alpha}\leq n/2,
	\]
	where the last inequality is due to the fact that for every graph with arboricity at most $\alpha$, $m\leq n\alpha$.

	We now assume that the claim holds for every $1\leq j'\leq j-1$, and prove that it holds for $j$.
	By Definition~\ref{def:pcheck}, for $j\geq 2$,
		$\ap_j(v)=  \sum_{u\in \cNei{j-1}(v)}\ap_{j-1}(u).$
	Therefore,
	\begin{align}
		\sum_{i=j}^\ell \sum_{v\in L_i} \ap_{j}(v)=
		\sum_{i=j}^\ell \sum_{v\in L_i}  \sum_{u\in \cNei{j-1}(v)}\ap_{j-1}(u).
	\end{align}
	Recall that by Definition~\ref{dfn:sigma}, for a vertex $u\in L_k$, for every $i\geq k$, $\sigma_i(u)=|\{v \in L_{i} \;:\; u\in S(v)\}|$, and $\sigma(u)=\sum_{i=k}^{\ell}\sigma_i(u)$. Also recall that for  $j\geq 2$ and a vertex $v\in L_i$, $\cNei{j}(v)\subseteq L_{\leq i-1}$.
	Therefore, 
	\begin{align}
		\sum_{i=j}^\ell \sum_{v\in L_i}  \sum_{u\in \cNei{j-1}(v)} \ap_{j-1}(u) &=
		\sum_{i=j}^\ell \sum_{v\in L_i}  \sum_{k=j-1}^{i-1} \sum_{u\in \cNei{j-1}(v)\cap  L_k} \ap_{j-1}(u) \\
		& = \sum_{k=j-1}^\ell \sum_{u\in L_k} \left(\sum_{i=k}^{\ell} \sigma_i(u)\right) \ap_{j-1}(u)
		\\ &=  \sum_{k= j-1}^\ell
		\sum_{u\in L_k}\sigma(u)\cdot \ap_{j-1}(u)
		\numberthis  \label{eqn:sumLsumLk}
	\end{align}
	Hence, we shall bound the expected value of the expression in Equation~\eqref{eqn:sumLsumLk}, conditioned on the event $\mE$.  By Claim~\ref{clm:exp_indep}, Claim~\ref{clm:exp_sigma} and by the induction hypothesis,
	\begin{align}
		\EX\left[\sum_{k= j-1}^\ell
		\sum_{u\in L_k}\sigma(u)\cdot \ap_{j-1}(u) \mathrel{\Big|} \mE\right]&\leq \sum_{k= j-1}^\ell
		\sum_{u\in L_k} \EX[\sigma(u)\mathrel{\Big|} \mE ]\cdot \EX\left[  \ap_{j-1}(u)\mathrel{\Big|} \mE\right]
		\\
		&\leq \sum_{k= j-1}^\ell
		\sum_{u\in L_k}  \frac{1}{2}\cdot \EX\left[  \ap_{j-1}(u)\mathrel{\Big|} \mE\right] \\
		&=\frac{1}{2} \cdot \EX\left[ \sum_{k\geq j-1}^\ell
		\sum_{u\in L_k} \ap_{j-1}(u)\mathrel{\Big|} \mE\right]\\
		&\leq n/2^{j+1}.
	\end{align}
	This completes the proof that for every $j$, Equation~\eqref{eqn:exp_checkp_bounded} holds.
	Summing over all $j's$, we get that
	\[
	\EX\left[\sum_{j=0}^{\ell} \sum_{i=j}^{\ell} \sum_{v\in L_i}\tp_j(v) \mathrel{\Big|} \mE\right]\leq \sum_{j=0}^{\ell} \frac{n}{2^j}\leq 2n.
	\]
	Plugging the above into Equation~\eqref{eqn:expP_vs_checkp} completes the proof.
\end{proof}

\begin{clm}\label{clm:correctness}
	Consider an invocation of \peel$(G,\alpha)$. If $\arb(G)\leq \alpha$, then with probability at least $2/3$, the procedure returns $\YES$. If $\arb(G)> \alphaX$, then with probability at least $1-2/n^4$, the procedure returns $\NO$.
\end{clm}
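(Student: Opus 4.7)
\textbf{Proof proposal for Claim~\ref{clm:correctness}.}

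The plan is to split on the two cases and use the event $\mE$ of successful neighbor sampling (Definition~\ref{def:succ-edges}) as a pivot. For the case $\arb(G)>\alphaX$, the argument is short: by Claim~\ref{clm:edge-samp}, $\Pr[\mE]\geq 1-1/n^4$, and by Claim~\ref{clm:NO}, conditioned on $\mE$, \peel$(G,\alpha)$ returns $\NO$ with probability at least $1-1/n^4$. Taking a union bound gives failure probability at most $2/n^4$, as required.

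For the case $\arb(G)\leq \alpha$, I plan to show that with probability at least $2/3$ both of the following occur: (i) the event $\mE$ holds, and (ii) the running total $Q$ never exceeds $\UBq=400t$. Given both, Claim~\ref{clm:YES} immediately yields the $\YES$ output. Since $\Pr[\mE]\geq 1-1/n$ by Claim~\ref{clm:edge-samp}, the main task is to bound $\Pr[Q>\UBq\mid \mE]$ by a small constant, and I will do this via Markov's inequality.

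To bound $\EX[Q\mid \mE]$, I will first observe that \peel\ invokes $\peelv(x,j)$ only for $x\in X_0$ and $j\in[\ell]$ (with further work done only inside the recursion, which is already charged inside \peelv). By Claim~\ref{clm:ub_cost}, each such invocation performs at most $2\tp_j(x)$ queries, hence
\[
Q\;\leq\;2\sum_{x\in X_0}\sum_{j=0}^{\ell}\tp_{j}(x).
\]
The set $X_0$ is sampled independently of the multisets $\{S(v)\}_{v\in V}$ that determine the values $\tp_j(\cdot)$, and the event $\mE$ depends only on the latter. Therefore, by linearity of expectation (and since each element of $X_0$ is marginally a uniformly random vertex in $V$),
\[
\EX[Q\mid \mE]\;\leq\;\frac{2t}{n}\cdot \EX\!\left[\sum_{v\in V}\sum_{j=0}^{\ell}\tp_{j}(v)\mathrel{\Big|}\mE\right]\;\leq\;\frac{2t}{n}\cdot 2n\;=\;4t,
\]
where the second inequality is Claim~\ref{clm:bound_exp}. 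Markov's inequality now gives $\Pr[Q>\UBq\mid \mE]=\Pr[Q>400t\mid \mE]\leq 4t/(400t)=1/100$.

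Combining everything, when $\arb(G)\leq\alpha$ we have
\[
\Pr[\peel(G,\alpha)\text{ returns }\YES]\;\geq\;\Pr[\mE]\cdot\Pr[Q\leq \UBq\mid \mE]\;\geq\;(1-1/n)(1-1/100)\;\geq\;2/3
\]
for $n$ sufficiently large, completing both cases. The one subtlety I expect to need care with is justifying the independence between $X_0$ and $\{S(v)\}_{v\in V}$ used to factor the expectation — this follows from the principle of deferred decisions, since \peel\ samples $X_0$ before performing any neighbor queries, so the random choices underlying $\mE$ can be viewed as carried out independently of $X_0$. Beyond that, the argument is a clean composition of Claims~\ref{clm:edge-samp}, \ref{clm:ub_cost}, \ref{clm:YES}, \ref{clm:NO}, and \ref{clm:bound_exp}.
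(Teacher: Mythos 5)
Your proposal is correct and takes essentially the same approach as the paper: both cases pivot on the event $\mE$, with the $\YES$ case combining Claims~\ref{clm:ub_cost} and~\ref{clm:bound_exp} with Markov's inequality and Claim~\ref{clm:YES}, and the $\NO$ case combining Claims~\ref{clm:edge-samp} and~\ref{clm:NO}. The only (harmless) deviation is that you transfer the bound $\EX\left[\sum_{v}\sum_{j}\tp_j(v)\mid\mE\right]\leq 2n$ to the sample $X_0$ in a single Markov step, using the independence of $X_0$ from $\{S(v)\}_{v\in V}$, whereas the paper uses two successive Markov applications through the intermediate event $\tp(V)\leq 20n$; your variant is slightly tighter and relies on the same independence the paper's own second step uses implicitly.
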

\begin{proof}
	By Claim~\ref{clm:ub_cost}, for every $v\in V$ and $j\in[\ell]$, the query complexity of \peelv$(v,j)$ is at most $2\tp_{j}(v).$
	Let $\tp(v)=\sum_{j=0}^{\ell}\tp_j(v)$, and for a set $Y$, let $\tp(Y)=\sum_{v\in Y}\tp(v)$.
	The query complexity of \peel$(G,\alpha)$ is bounded by $\sum_{x \in X_0} \sum_{j=0}^{\ell} \tp_j(v)=\sum_{x\in X_0}\tp(v)=\tp(X_0)$.
	We first consider the case that $\arb(G)\leq \alpha$. By Claim~\ref{clm:bound_exp},
	\[
	\EX_{S(V)}\left[\sum_{v\in V} \sum_{j=0}^\ell \tp_j(v) \mid \mE\right]
	\leq 2n\;.
	\]
	Since the different $\tp_j(v)$ values are correlated, we can only use Markov's inequality:
	\begin{align}\label{eqn:markov}
		\Pr\left[\sum_{v\in V}\sum_{j\in[\ell]}\tp_j(v) >20n \mid \mE\right] <\frac{1}{10}.
	\end{align}
	Denote the event that $(\tp(V)\leq 20n \mid \mE)$ by $\calE_2$.
	It holds that
	\[
	\EX_{v\in V}\left[\sum_{j=0}^{\ell}\tp_j(v)\mid \mE\medcap \calE_2\right]\leq \frac{1}{n} \sum_{v\in V} \sum_{j=0}^\ell \tp_j(v) \mid \mE\medcap \calE_2 \leq 20.
	\]
	If follows that, condition on $\mE\medcap \calE_2$,the $\tp(v)$ values are  random variables with expected value at most $20$.
	Hence, $\EX[\tp(X_0)\mid \mE\medcap \calE_2]\leq 20|X_0|$, and by Markov's inequality,
	\begin{align}\label{eqn:markov_qv}
		\Pr[\tp(X_0)>200|X_0| \mid \mE\medcap \calE_2]<\frac{1}{10}\;.
	\end{align}
	Denote the event that $\left(q(X_0) \leq  200\cdot |X_0| \mid \mE\medcap \calE_2\right)$ by $\calE_3$.
	By Claim~\ref{clm:edge-samp}, Equations~\eqref{eqn:markov} and~\eqref{eqn:markov_qv}, and the union bound,
	the event $\mE\medcap \calE_2\medcap \calE_3$ occurs with probability at least $1-\frac{1}{n} -\frac{1}{10}-\frac{1}{10} \geq 2/3$. Therefore, with probability at least $2/3$, the event $\mE\medcap \calE_2\medcap \calE_3$ holds, and by event $\calE_3$,
	$q(X_0)\leq 200t$ so that by Claim~\ref{clm:ub_cost}, $Q$ does no exceed $\UBq$.  In such a case, by Claim~\ref{clm:YES}, since event $\mE$ holds, the algorithm  returns $\YES$ in Step~\ref{step:peel_return}.
	Therefore, with probability at least $2/3$, the procedure returns $\YES$.
	
	We now assume that $\arb(G) > \alphaX$. If the number of allowed queries exceeds $\UBq$, then we are  done.
	Otherwise, by Claim~\ref{clm:edge-samp}, with probability at least $1-1/n^4$, event $\mE$ holds. Condition on $\mE$,  by Claim~\ref{clm:NO}, with probability at least $1-1/n^4$, the procedure returns $\NO$. Hence, the procedure returns $\NO$ with probability at least $1-2/n^4$.
\end{proof}

\subsection{The Search Procedure}\label{sec:search}

In this section we show how, given the procedure \peel, which 
distinguishes between graphs
$G$ for which $\arb(G) \leq \alpha$ and those for which $\arb(G) >\rho \alpha$ for $\rho=100\log^2n$,
we can obtain a $2\rho=200\log^2 n$-factor approximation of $\arb(G)$.

We start with a simple procedure to amplify the success probability of \peel.

\newcommand{\setr}{10\log n}

\algo{
	{\bf Peel-With-Reduced-Error$(G,\alpha)$} \label{reduce}
	\smallskip
	\begin{compactenum}
		\item For $r=1$ to $\setr$  do: \label{step:loop_r}
		\begin{compactenum}
			\item Invoke \peel$(G,\alpha)$, and if it returns $\YES$, then \textbf{Return} $\YES$.
		\end{compactenum}
		\item \textbf{Return} $\NO$.
	\end{compactenum}
}{The procedure \reduce\ is used to amplify the success probability of $\peel$.}{fig:reduce}

\begin{clm}\label{clm:reduce}
	If $\arb(G)\leq \alpha$, then, with probability at least $1-1/n^{3}$, \reduce$(G,\alpha)$ 
returns $\YES.$ If $\arb(G)>\alphaX$, then with probability at least $1-20\log n/n^4$, \reduce$(G,\alpha)$ returns $\NO$. The query complexity of the procedure is $O(n/\alpha)$.
\end{clm}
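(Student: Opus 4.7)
The plan is to derive Claim~\ref{clm:reduce} directly from Claim~\ref{clm:correctness} via standard amplification arguments, since \reduce\ is simply a repetition of \peel.

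For the case $\arb(G)\leq \alpha$, I would argue as follows. By Claim~\ref{clm:correctness}, each individual invocation of \peel$(G,\alpha)$ returns $\YES$ with probability at least $2/3$. Since the $r=1,\dots,10\log n$ invocations in Step~\ref{step:loop_r} use fresh randomness, they are mutually independent. Hence the probability that \emph{all} $10\log n$ invocations return $\NO$ (so that \reduce\ returns $\NO$) is at most $(1/3)^{10\log n}$, which for $\log$ taken base $2$ is $n^{-10\log_2 3} \ll n^{-3}$. Therefore with probability at least $1-1/n^3$, at least one invocation returns $\YES$, and \reduce$(G,\alpha)$ returns $\YES$ as desired.

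For the case $\arb(G)>\alphaX$, by Claim~\ref{clm:correctness}, each invocation returns $\NO$ with probability at least $1-2/n^4$. Taking a union bound over the $10\log n$ invocations, the probability that at least one returns $\YES$ is at most $(10\log n)\cdot (2/n^4) = 20\log n/n^4$. Hence with probability at least $1-20\log n/n^4$, every invocation returns $\NO$, and thus \reduce\ returns $\NO$ in its final step.

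For the query complexity, I would observe that the counter $Q$ tracked by \peel\ causes it to abort as soon as the running total exceeds $\UBq = 400t = O(n/(\alpha\log n))$, so every single invocation of \peel$(G,\alpha)$ performs at most $O(n/(\alpha\log n))$ queries regardless of the input or its internal randomness. Multiplying by the $10\log n$ iterations yields a deterministic upper bound of $O(n/\alpha)$ on the total query complexity of \reduce$(G,\alpha)$. None of these steps involves a real obstacle; the only minor point of care is to verify that $(1/3)^{10\log n} \leq 1/n^3$ under whichever base of logarithm the paper uses, which holds comfortably for $\log_2$ (and one could simply increase the constant $10$ in the definition of \setr\ if needed for a different base).
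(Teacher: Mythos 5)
Your proposal is correct and follows essentially the same route as the paper's proof: amplify the $2/3$ success probability of \peel\ over the $10\log n$ independent invocations via $(1/3)^{10\log n}\leq 1/n^3$ for the $\YES$ case, union-bound the $2/n^4$ error over the invocations for the $\NO$ case, and bound the query complexity by the per-invocation cap $\UBq=O(n/(\alpha\log n))$ times the $10\log n$ repetitions. No substantive differences to report.
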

\begin{proof}
	Assume first that $\arb(G)\leq \alpha$,
	and consider a fixed iteration $r$ of the for loop of \reduce.
	By Claim~\ref{clm:correctness},  \peel$(G,\alpha)$ returns $\YES$ with probability at least $2/3$
	Therefore, the probability that it returns $\NO$ in all $r$ invocations is at most $(1/3)^{\setr}<1/n^{3}$.

	If $\arb(G)>\alphaX$, then by Claim~\ref{clm:correctness}, 	
	every invocation of \peel$(G,\alpha)$, returns $\NO$ with probability at least $1-2/n^4$. Hence, the probability that the procedure returns $\NO$ in all $r$ invocations is at least $1-2r/n^4>1-20\log n/n^4$.
	
	Finally, since every invocation of \peel\ does not exceed $\UBq$ queries for $t=\setT$, and \reduce\ makes at most $\setr$ calls to \peel, the query complexity is $O(n/\alpha)$, as claimed.
\end{proof}

\algo{	
	{\bf Estimate-Arboricity$(G)$} \label{est}
	\smallskip
	\begin{compactenum}
		\item Set $\ta=n$.
		\item While $\ta >1$ do:
		\begin{compactenum}
			\item Invoke \reduce$(G,\alpha)$.  If the algorithm returns $\NO$, then \textbf{return}
                $\hat{\alpha}=  
                     \ta$. Otherwise,
			let $\ta=\ta/2$.
		\end{compactenum}
		\item Return $\hat{\alpha}=1$.
	\end{compactenum}
}
{The procedure gets query access to a graph $G$, and returns an estimate of $\arb(G)$.}{fig:est}

We are finally ready to prove our main theorem, which we restate here for the sake of convenience.

\upperBound*
\begin{proof}
	By Claim~\ref{clm:reduce},
	every invocation of  $\reduce(G,\ta)$ with a value $\ta$ such that $\arb(G)\leq \ta$ returns $\YES$  with probability at least $1-1/n^{3}$. Since there are at most $\log (n/\arb(G)) \leq \log n$ iterations with such $\ta$ values, by the union bound, with probability at least $1-\log n/n^{3}$, all such invocations will return $\YES$.
Next, for any invocation of $\reduce(G,\ta)$ with $\arb(G)/(100\log^2 n)\leq \ta < \arb(G)$,
the procedure may return either $\YES$ or $\NO$.
Once $\ta$ goes below $\arb(G)/100\log^2 n$ (so that $\arb(G) > 100\log^2 n\ta$),
  by Claim~\ref{clm:reduce}, $\reduce(G,\ta)$  returns $\YES$  with probability at least $1-20\log n/n^4$.
Recall that the value of $\ta$ is decreased by a multiplicative factor of $2$ in each iteration of algorithm, and the algorithm returns  $\widehat{\alpha}=\ta$ for the first (largest) value of $\ta$ such that $\reduce(G,\ta)$  returns $\NO$.
It follows that with probability $1-\log n/n^3-20\log n/n^4 > 1-O(1/n^2)$  the algorithm makes $O(\log n)$ invocations to $\reduce(G,\ta)$, all with $\ta \geq  \arb(G)/(200\log^2 n)$,
and returns a value $\widehat{\alpha}$ that satisfies
\[
\arb(G)/(200\log^2 n)\leq \widehat{\alpha} \leq \arb(G)\;.
\]
	
	By Claim~\ref{clm:reduce}, the query complexity  of \reduce$(G,\ta)$ for values $\ta \geq  \arb(G)/(200\log^2 n)$
is $O(n/\ta)=O(n\log^2 n/\arb(G))$.
	Hence, with probability at least $1-O(1/n^2)$, the query complexity is
$O(n\log^3 n/\arb(G))$.

	If $\ta$ reaches values smaller than $\arb(G)/(200\log^2 n)$, then we can bound the query complexity and running time by
	$O(n\cdot \ta(G))$.
	This is true since if the number of  queries exceeds $O(n\ta)$ then the algorithm may abort, as  it implies that $\ta$ is too small (since the number of queries is always bounded by $2m=O(n\arb(G))$).
	If the algorithm aborts then it outputs $\widehat{\alpha}=1$.	
	Also, there are at most $\log(\arb(G))=O(\log n)$ iterations with   values $\ta\leq \arb(G)$.
	Therefore, the expected query complexity of \est$(G)$ is
	$O(n\log^3 n/\arb(G)+(1/n^2)\cdot (n\cdot \arb(G))\cdot \log n))=O(n\log^3 n/\arb(G))$.
\end{proof}

Finally, for the sake of completeness, we prove the proposition regarding the lower bound on  any algorithm for approximating the arboricity.

\begin{proof}[Proof of Proposition~\ref{prop:lb}]
	Consider the following two families of graphs, where within each family the graphs only differ by the labels of the vertices and edges. In the first family, there is a clique of size $\alpha$ and the rest of the vertices are isolated. The graphs of the second family are identical, except that  the clique is of size $\alpha\cdot 2k$. Any algorithm that returns a $k$-multiplicative 
approximation of $\arb(G)$ with probability at least $2/3$ must be able to distinguish between these two families. Since the probability of hitting a clique vertex is $O(\alpha k/ n)$, a lower bound of $\Omega(n/(\alpha k))=\Omega(n/(\arb(G)\cdot k))$ follows.
\end{proof}

\section{Adaptation to the 
         Streaming Model}\label{sec:stream}

Our algorithm can be  adapted to the streaming model using $O(\log n)$ passes. 
In general, it is known that  any sublinear-time algorithm in the incidence list  query model with ``adaptivity depth $k$" (see definition below), can be implemented in the streaming model with $2k$ passes. The reason is that all types of queries in the incidence list query model, can be computed using a single pass over the stream: degree queries can be computed using a simple counter, and neighbor queries can be simulated using $\ell_0$ samplers (e.g., that of~\cite{jowhari2011tight}).
Furthermore, the space requirement of the streaming variant can be directly  bounded by the running time of the simulated sublinear algorithm, up to $\poly(\log n)$ factors resulting from the $\ell_0$ samplers.

\begin{dfn}[Depth of adaptivity]
	We say that an algorithm $\mA$  in the incidence list model \textsf{has adaptivity depth} $k$ if the following holds.
	For every execution of $\mA$, the set of queries $\mQ$ performed by it can be partition into $k$ sets $\mQ_{1}, \ldots, \mQ_{k}$, so that for every $j\in [1,k]$,
	 the set of queries $\mQ_j$ can be computed based solely on responses to queries $\mQ_{0} \cup \ldots \cup \mQ_{j-1}$.
\end{dfn}

For a thorough investigation of an adaptivity hierarchy in property testing see~\cite{CanonneGur18}.

We claim that algorithm \peel\ has depth of adaptivity $O(\log n)$.
\begin{clm}
	The adaptivity depth of Algorithm \peel\ is $O(\log n)$.
\end{clm}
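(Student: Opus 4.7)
The plan is to exhibit a partition of the queries performed by \peel\ into $O(\log n)$ batches, such that each batch's queries are computable from the responses to the earlier batches. All queries in the algorithm come from two places inside \peelv: degree queries in Step~\ref{step:query_deg} (performed only when \peelv\ is called with $j=0$), and neighbor queries in Step~\ref{step:set_tGl0} (performed only when \peelv\ is called with $j=1$). Because of the idempotency check in Step~1 of \peelv, each vertex has its degree queried at most once and its neighbors sampled at most once during the entire execution, so the full query set is naturally indexed by vertices and query type.

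The key object is the BFS-like tree rooted at the initial sample $X_0$. Assign to each vertex $v$ that ends up being queried a depth: $\mathrm{depth}(v)=0$ if $v\in X_0$, and otherwise $\mathrm{depth}(v)$ is the smallest $k$ such that $v\in S(u)$ for some $u$ with $\mathrm{depth}(u)=k-1$. Since the outer loop of \peel\ ranges over $j=0,\ldots,\ell$, and \peelv$(x,\ell)$ only recurses down to \peelv$(\cdot,0)$ through at most $\ell$ levels of sampled neighbors, every queried vertex satisfies $\mathrm{depth}(v)\le \ell = O(\log n)$.

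Now define the partition: place the degree query on a vertex $v$ with $\mathrm{depth}(v)=k$ into batch $\mQ_{2k+1}$, and (if it is actually performed, i.e., if $d(v)>\alphaX$) place its neighbor queries into batch $\mQ_{2k+2}$. This yields at most $2\ell+2 = O(\log n)$ batches, which is the adaptivity bound we seek. The computability of each batch from earlier responses is straightforward: $\mQ_1$ (degree queries on $X_0$) can be computed without any graph queries, since $X_0$ is drawn uniformly at random in Step~\ref{step:sampleX0} of \peel; $\mQ_{2k+2}$ requires only the values $d(v)$ for the depth-$k$ vertices (to determine how many and which indices to sample), and these were returned in $\mQ_{2k+1}$; and $\mQ_{2k+1}$ requires only the identities of the depth-$k$ vertices, which are produced by the neighbor queries in $\mQ_{2k}$. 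The algorithm's decisions about pruning, peeling, and skipping recursive calls are deterministic functions of responses already received, so they introduce no additional dependencies.

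I do not foresee a substantial obstacle, only a bookkeeping point: a vertex may appear in the BFS structure at several different depths (e.g., a vertex that belongs to $X_0$ and is also sampled as a neighbor of some $u$, or a vertex that is the common sampled neighbor of two different parents). Taking the minimum over all such depths in the definition of $\mathrm{depth}(v)$ ensures each vertex's queries are assigned to a single batch, and the idempotency clause at the top of \peelv\ guarantees they are physically performed only once, at or before the corresponding batch index.
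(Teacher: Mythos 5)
Your overall plan (two query waves per level, $2\ell+2$ batches) matches the paper's, but the batching by \emph{minimal BFS depth} is exactly where the argument breaks. Because of the memoization in Step~1 of \peelv, a vertex $v$ has its degree queried and its neighbors sampled at most once, but the occurrence that actually \emph{triggers} these queries need not be the minimal-depth one. Concretely, suppose $v\in S(x)$ for a root $x\in X_0$, so $v$ has depth $1$, but $v$ is pruned by $x$ in Step~\ref{step:Hj} (it is among the $4\log n$ children of $x$ with highest $\tp_1$), so \peelv$(v,1)$ is never invoked through this path; meanwhile $v$ is also sampled at depth, say, $3$ in another tree, survives pruning there, and \peelv$(v,1)$ is invoked only in outer iteration $4$. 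Your partition places $v$'s neighbor queries in batch $\mQ_{2\cdot 1+2}=\mQ_4$, yet whether these queries are performed at all depends on the responses concerning depth-$2$ and depth-$3$ vertices (degrees of $v$'s siblings in the second tree, the sampling outcomes of its ancestors there, the budget check on $Q$, etc.), i.e., on batches with larger index. So $\mQ_4$ is not computable from the responses to earlier batches, and the claim that queries are performed ``at or before the corresponding batch index'' is not justified; the definition of adaptivity depth requires each batch, as a set, to be determined by earlier responses, and the minimal-depth indexing does not provide this.

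The fix is the indexing the paper uses: batch queries by the outer-loop iteration of \peel\ in which they are actually performed. At the end of iteration $j-1$, the algorithm already knows, for every surviving $x\in X_j$, the set of active frontier vertices whose neighbors will be sampled if \peelv$(x,j)$ is invoked (this is precisely why $\tp_j(x)$ is computable at the end of \peelv$(x,j-1)$), and the budget test comparing $Q+\tp_j(x)$ to $\UBq$ also depends only on past responses; hence iteration $j$'s neighbor queries form a batch computable from all responses through iteration $j-1$, and iteration $j$'s degree queries are determined by the identities returned by that iteration's neighbor queries. This yields $2\ell+2=O(\log n)$ batches satisfying the definition, which is the paper's proof. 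Your construction can be repaired by replacing ``minimal depth of the vertex'' with ``iteration at which the query is issued,'' but as written the bookkeeping point you flagged as harmless is a genuine gap.
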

\begin{proof}
	We describe how to partition the algorithm's queries to $O(\log n)$ sets, where each set only depends on previous ones:
	Let $\mQ_1$ be the set of degree queries performed on the vertices of $X_0$ (in order to determine \peel$(v,0)$ for every $v\in X_0$). 	
For every $v$ and $j\in [1,\ell]$, let $\mQ_{v,2j}$ be the set of neighbor queries performed during the invocation of  \peel$(v,j)$, and let $\mQ_{v,2j+1}$
be the set of degree queries performed 
during the invocation of  \peel$(v,j)$. 
Let 	 $\mQ_{2j}$ and $\mQ_{2j+1}$ be the set of neighbor and pair queries, respectively, required to compute \peel$(v,j)$ on all vertices in $X_j$, so that   $\mQ_{2j}=\{Q_{v,2j}\}_{v\in X_j}$ and
$\mQ_{2j+1}=\{Q_{v,2j+1}\}_{v\in X_j}$.
By the design of our algorithm, it holds that for every $v$, $Q_{v,2j}$ is determined by the queries and their  responses   $\{Q_{v,j'}\}_{j'\leq 2j-1}$: indeed observe that already at the end of the invocation of $\peelv(v,j-1)$, we know the set of vertices from which we should perform neighbor queries in the case $\peel(v,j)$ will be invoked (this is the reason we can compute $q_{j}(v)$ already at the end of \peel$(v,j-1)$). The set of degree queries $\mQ_{v,2j+1}$ is then determined by the identity of the neighbors that are returned as answers to the set of queries $\mQ_{v,2j}$. Hence, for every $j'\in[1,2\log n+2]$, $\mQ_{j'}$	only 
depends on the queries and responses to queries in $\mQ_{\leq j'}$, and it follows that the adaptivity depth of \peel\ is $2\log n+2=O(\log n)$.
\end{proof}

We continue to describe how to adapt our algorithm to the streaming setting.

\vspace{-1em}
\paragraph{Adapting the procedures \textsf{Peel-With-Reduced-Error} and \textsf{Estimate-Arboricity}.}
We  modify the procedure \reduce\ so that, given $G$ and $\alpha$,  the  $r=10\log n$ invocations of $\peel(G,\alpha)$ will take place in parallel, rather than sequentially. Thus, for every value of $\alpha$, there are $10\log n$ invocations of $\peel(G,\alpha)$ in parallel. If any of these invocations returns $\YES$, the (modified) version of  \reduce\ returns YES, and otherwise it returns no.

We  continue to explain how to adapt \est.
Given a lower bound $\alpha$ on $\arb(G)$,
in order to get an estimate of $\arb(G)$, we proceed as follows. We  invoke \peel$(G,\ta)$ with  guesses $\ta=\alpha, 2\alpha, ..., n$ in parallel, and return $\widehat{\alpha}=2\ta$ for  the smallest value $\ta$ for which \reduce$(G,\ta)$ returns $\YES.$ An almost identical analysis to that of Theorem~\ref{thm:upperBound}, proves that with high probability, the returned value is an $O(\log^2)$-approximation of $\arb(G)$.
Therefore, we obtain the following result.
\ubStreaming*

\paragraph{Comparison to existing streaming results.}
We would like to compare Theorem~\ref{thm:upperBoundStreaming}, to the existing streaming algorithms for approximating the densest subgraph (as they can easily be altered to approximate the arboricity).
Recall that our algorithm works by iteratively considering increasingly smaller guesses  of the value of $\arb(G)$, starting from $\ta=n$, and halving the guess at each step, where the minimum possible value of $\arb(G)$ is  the average degree of $G$, $d_{avg}(G)$.
A common challenge to one-pass streaming algorithms is that this ``search process'' must be done simultaneously for all possible guesses of $\arb(G)$ during the single pass over the stream, resulting in a space complexity of $O(m/d_{avg})=O(n)$.
Unfortunately, this state of affairs obscures the true dependence on the parameter that the algorithm is trying to approximate. To address this challenge, various works on graph parameter estimation in the streaming model assume that they are given a rough estimate on the parameter at question, and the goal is to achieve a more accurate one (see, e.g.,~\cite{mcgregor, KP17, SV20}).  Indeed if such a lower bound $\alpha$ on $\arb(G)$ is given to the algorithms of~\cite{BHNT15} and~\cite{mcgregor2015densest}, then their space complexity is reduced to $\widetilde{O}(m/\alpha)$.

Therefore, we compare our streaming variant with the state of the art streaming algorithms, under the assumption that a lower bound $\alpha$ on $\arb(G)$ is given as input.
In such setting our algorithm is an $O(\log n)$-passes, $\widetilde{O}(n/\alpha)$-space, $O(\log^2 n)$-approximation algorithm, where the state of the art by~\cite{mcgregor2015densest} is a $1$-pass, $\widetilde{O}(m/\alpha)$, $(1+\eps)$-approximation algorithm.
Hence, our algorithm improves on the space complexity by  factor of $d_{avg}$, at the cost of performing $O(\log n)$ passes over the stream, and an $O(\log^2 n)$-approximation factor.

	\bibliographystyle{plain}
	\bibliography{est_arb_bib}
	
\end{document}